\documentclass[article]{elsarticle}

\makeatletter
\def\ps@pprintTitle{%
 \let\@oddhead\@empty
 \let\@evenhead\@empty
 \def\@oddfoot{}%
 \let\@evenfoot\@oddfoot}
\makeatother

\newcommand{\St}{\mathsf{St}}
\newcommand{\Eff}{\mathsf{Eff}}
\newcommand{\Transf}{\mathsf{Transf}}
\newcommand{\st}[1]{\mathbf{#1}}
\newcommand{\set}[1]{\mathsf{#1}}

\newcommand{\Tr}{\operatorname{Tr}}

\newcommand{\map}[1]{\mathcal{#1}}

\usepackage{amsmath,amssymb,amsfonts}
\usepackage{ifsym}
\usepackage{amsthm}
\usepackage{xypic}
\usepackage{graphicx}
\usepackage{dcolumn}
\usepackage{bm}
\usepackage{lineno,hyperref}
\modulolinenumbers[100]
%
%

\usepackage[matrix,frame,arrow]{xy}
\usepackage{amsmath}

\newcommand{\qw}[1][-1]{\ar @{-} [0,#1]}



\newcommand{\gate}[1]{*{\xy *+<.6em>{#1};p\save+LU;+RU **\dir{-}\restore\save+RU;+RD **\dir{-}\restore\save+RD;+LD **\dir{-}\restore\POS+LD;+LU **\dir{-}\endxy} \qw}



\newcommand{\measureD}[1]{*{\xy*+=+<.5em>{\vphantom{\rule{0em}{.1em}#1}}*\cir{r_l};p\save*!R{#1} \restore\save+UC;+UC-<.5em,0em>*!R{\hphantom{#1}}+L **\dir{-} \restore\save+DC;+DC-<.5em,0em>*!R{\hphantom{#1}}+L **\dir{-} \restore\POS+UC-<.5em,0em>*!R{\hphantom{#1}}+L;+DC-<.5em,0em>*!R{\hphantom{#1}}+L **\dir{-} \endxy} \qw}

\newcommand{\multimeasureD}[2]{*+<1em,.9em>{\hphantom{#2}}\save[0,0].[#1,0];p\save !C *{#2},p+LU+<0em,0em>;+RU+<-.8em,0em> **\dir{-}\restore\save +LD;+LU **\dir{-}\restore\save +LD;+RD-<.8em,0em> **\dir{-} \restore\save +RD+<0em,.8em>;+RU-<0em,.8em> **\dir{-} \restore \POS !UR*!UR{\cir<.9em>{r_d}};!DR*!DR{\cir<.9em>{d_l}}\restore \qw}






\newcommand{\multigate}[2]{*+<1em,.9em>{\hphantom{#2}} \qw \POS[0,0].[#1,0];p !C *{#2},p \save+LU;+RU **\dir{-}\restore\save+RU;+RD **\dir{-}\restore\save+RD;+LD **\dir{-}\restore\save+LD;+LU **\dir{-}\restore}
\newcommand{\ghost}[1]{*+<1em,.9em>{\hphantom{#1}} \qw}
\newcommand{\Qcircuit}[1][0em]{\xymatrix @*=<#1>} 


\newcommand{\pureghost}[1]{*+<1em,.9em>{\hphantom{#1}}}
\newcommand{\multiprepareC}[2]{*+<1em,.9em>{\hphantom{#2}}\save[0,0].[#1,0];p\save !C
  *{#2},p+RU+<0em,0em>;+LU+<+.8em,0em> **\dir{-}\restore\save +RD;+RU **\dir{-}\restore\save
  +RD;+LD+<.8em,0em> **\dir{-} \restore\save +LD+<0em,.8em>;+LU-<0em,.8em> **\dir{-} \restore \POS
  !UL*!UL{\cir<.9em>{u_r}};!DL*!DL{\cir<.9em>{l_u}}\restore}
\newcommand{\prepareC}[1]{*{\xy*+=+<.5em>{\vphantom{#1\rule{0em}{.1em}}}*\cir{l^r};p\save*!L{#1} \restore\save+UC;+UC+<.5em,0em>*!L{\hphantom{#1}}+R **\dir{-} \restore\save+DC;+DC+<.5em,0em>*!L{\hphantom{#1}}+R **\dir{-} \restore\POS+UC+<.5em,0em>*!L{\hphantom{#1}}+R;+DC+<.5em,0em>*!L{\hphantom{#1}}+R **\dir{-} \endxy}}
\newcommand{\poloFantasmaCn}[1]{{{}^{#1}_{\phantom{#1}}}}


\newtheorem{prop}{Proposition}
\newtheorem{example}{Example}
\newtheorem{cor}{Corollary}
\newtheorem{axiom}{Axiom}
\newtheorem{lemma}{Lemma}
\newtheorem{df}{Definition}
\newtheorem{theorem}{Theorem}
\newtheorem{assumption}{Assumption}









\bibliographystyle{elsarticle-num}

\begin{document}

\begin{frontmatter}

\title{Bridging the gap between general probabilistic theories and the device-independent framework for nonlocality and contextuality}
\author{Giulio Chiribella\footnote{Corresponding author:  giulio@cs.hku.hk}}
\address{Department of Computer Science, The University of Hong Kong, Pokfulam Road, Hong Kong}
\author{Xiao Yuan}
\address{Center for Quantum Information, Institute for Interdisciplinary Information Sciences, Tsinghua University, Beijing, 100084, China}




\begin{abstract}
Characterizing  quantum correlations in terms of information-theoretic principles  is a popular chapter of quantum foundations.    Traditionally, the principles  used for this scope   have been expressed in terms of conditional probability distributions, specifying the probability that a black box produces a certain output upon receiving a certain input.  
This approach is known as \emph{device-independent}. 
  Another major chapter of quantum foundations is the information-theoretic characterization of quantum theory as a whole, with its sets of states and measurements, and with its allowed dynamics.  The  different frameworks  adopted for this scope are known under the umbrella term of \emph{general probabilistic theories}.  With only a few exceptions, the two  research programmes on characterizing quantum correlations and characterizing quantum theory   
have so far proceeded  on separate tracks, each one developing its own methods and  its own agenda.  
Still,  both programmes share the same basic goal: a new and better understanding of  quantum mechanics in  information-theoretic terms.  This considered,  it is quite  striking that the connections between the two programmes are still largely undeveloped.  
This paper aims at  bridging the gap, by presenting a  ``Rosetta stone" for the two frameworks and by illustrating how the two programmes can benefit each other. 

As a case study, we focus on two device-independent features  known as Local Orthogonality (LO) and Consistent Exclusivity (CE).  In a recent work \cite{Chiribella14},  we showed that CE and LO  can be derived from the basic idea that, at the fundamental level,   measurements are repeatable and minimally disturbing.   In this paper we provide a new, alternative derivation  based on a different set of principles, revolving around the notion of \emph{pure orthogonal measurement}---a measurement that  cannot be further refined and that identifies states without error.  
The first principle, Measurement Purification, states that every measurement can be reduced to a pure orthogonal measurement by adding an auxiliary system and  by coarse-graining over  some outcomes.  
 The second principle, Locality of Pure Orthogonal Measurements, states that two pure orthogonal measurements performed independently on two systems yield a pure orthogonal measurement on the composite system.    The third  principle, Strong No Disturbance Without Information, states that every measurement that does not extract information about a  source can be realized without disturbing  the states in that source and without disturbing the pure orthogonal measurements that identify states in that source. These three principles together imply LO. CE is then derived by adding a fourth principle, called Pure State Identification, stating that every  outcome of a pure orthogonal measurement identifies a pure state. 
 \end{abstract}


\end{frontmatter}

\linenumbers

\section{Introduction}

One of the most profound mysteries of quantum theory is  nonlocality \cite{EPR,bell}, namely the fact that experiments performed at spacelike separated locations can exhibit stronger correlations than those allowed by any local realistic model.
 Still,  quantum correlations are not the strongest correlations one can imagine:  the assumption  that correlations cannot be used to communicate at unbounded speed, known as No-Signalling, is compatible with a larger set of exotic, non-quantum correlations  \cite{rastall1985locality,pr95}.  This observation stimulated the search for other principles, of similar information-theoretic flavour, aimed at achieving a complete characterization.    Up to now, several principles for quantum correlations have been proposed, such as Non-Trivial Communication Complexity \cite{Dam05, brassard2006}, No-Advantage in Nonlocal Computation \cite{linden2007}, Information Causality \cite{Paw09}, Macroscopic Locality \cite{Navascues09}, and Local Orthogonality (LO) \cite{Fritz12}.
These principles have been spectacularly successful in constraining the allowed correlations, narrowing them to a set that is close to the quantum set.   However,    no combination of the presently known principles  is  sufficient to characterize the quantum set completely \cite{almost}.   Similar considerations apply to the study of quantum contextuality \cite{kochen,spekkens2005contextuality}, where   the principles of Consistent Exclusivity (CE) \cite{cabello2013,henson2012,acin2012,cabello2012specker} and Macroscopic Non-Contextuality \cite{henson2015macroscopic} have been proposed in order to characterize the degree of contextuality exhibited by projective measurements in quantum theory. Also in this case, a complete information-theoretic characterization of the contextuality bounds satisfied by projective quantum measurements is still missing.

On the other hand, several reconstructions of quantum theory from information-theoretic principles  have been proposed in recent years
 \cite{deri, hardy11, masanes11,dakic11, hardy01, dariano, goyal10, masanes12,wilce2012conjugates}.
 With different background assumptions and slightly different goals, these works single out  the Hilbert space formalism of quantum theory: in particular, they imply that physical systems are associated to  Hilbert spaces, that states are described  by density matrices, and that the probabilities of measurement outcomes are computed with the Born rule.  As a byproduct,  they also characterize the particular sets of correlations arising in quantum theory.  Is this a solution to the  long sought-after  characterization of quantum nonlocality and contextuality? Yes and no. \emph{Yes}, because every set of information-theoretic principles that singles out quantum theory provides also an information-theoretic justification  of quantum nonlocality and contextuality.  \emph{And no}, because such a  justification may not be as satisfactory as one may desire:  ideally,   one would like to  have principles that \emph{directly} imply bounds on  correlations, without the detour of  a full derivation of the Hilbert space framework.   
 
The importance of a direct characterization is reflected  in the nature of the principles used for  quantum correlations. 
Principles like those in Refs.     \cite{khalfin1985quantum,rastall1985locality,pr95,Dam05, brassard2006, linden2007, Paw09,Navascues09,Fritz12,cabello2013,henson2012,acin2012,cabello2012specker}   refer only  to the conditional probabilities of obtaining  output data from  input data, without making any assumption on the process  generating the output from the input.  
The framework in which these principles are formulated  has been aptly named \emph{device-independent} (see  \cite{scarani} and \cite{bancal2013device} for an introduction). 
In stark contrast, the framework used for reconstructing quantum theory  is not device-independent. And for good reasons: a full-fledged physical theory is not only about input-output probability distributions, but also about physical systems and how they can interact through physical processes \cite{coecke2010universe}.    Naturally, the principles used for reconstructing quantum theory  presuppose that experimental data have already been organized  in the basic structure of a physical theory, which has systems, states, transformations, and measurements at its backbone. For example, principles like Local Tomography \cite{hardy01,mauro2, barrett07} or Ideal Compression \cite{deri}  explicitly refer to the ``states of a given physical system", to the ``measurements performed on a composite system", and to the ``processes that transform a system into another". The formulation of these principles is based on the framework of \emph{general probabilistic theories}  \cite{hardy01,barrett07,mauro2,nobroad,teleportation,puri,deri,barnum11,hardy11,hardy2013,chiribella14dilation}, which describes on the same footing classical and quantum theory, as well as many  hypothetical, post-quantum theories.



Up to now, the research programmes on reconstructing quantum theory and that on characterizing quantum correlations have proceeded along separate tracks.   However, it is clear that   the interaction between these two programmes  has a potential for understanding the picture of reality (if any) at which quantum theory is hinting.
  In a recent work \cite{Chiribella14},  we started exploring the relations between   principles for correlations  and principles  for general probabilistic theories.  We first defined a class of ideal measurements, called \emph{sharp measurements}, which represent an ideal standard of measurements that are repeatable and cause minimal disturbance on future observations. 
   Then we postulated that all measurements are fundamentally sharp, i.e. they can be obtained by performing a joint sharp measurement on the system together  and on an environment.    Combining  this requirement with two additional requests about the compositional properties of sharp measurements we have been able to derive the validity  of  Local Orthogonality   and Consistent Exclusivity.

In this paper we present an alternative derivation of LO and CE, based on a different notion of ``ideal measurement" and on a different set of physical principles.  We take LO and CE as the subject of this further study because they provide the simplest testbed for investigating  the interplay between  device-independent and device-dependent notions. 
We will first discuss the relations between  the device-independent framework and the framework of general probabilistic theories.
 Then we will set up the scene for the derivation of LO and CE, defining a privileged class of measurements, here called \emph{spiky measurements}.  Spiky measurements  are obtained by coarse-graining pure orthogonal measurements, i.e. measurements that cannot be further refined and that identify  some states without error.   In quantum theory, spiky measurements coincide with projective measurements, which in turn coincide with the sharp measurements defined in Ref. \cite{Chiribella14}.      In a general theory, however, spiky and sharp measurements can potentially differ.
Using the notion of pure orthogonal measurement, we then formulate three requirements which allow one to derive LO and, under an additional assumption, CE.  Interestingly, our requirements do not include Causality \cite{puri}, which is instead \emph{derived}  as a byproduct.    Nevertheless, one of the requirements, called Sufficient Orthogonality, has no immediate operational interpretation.  To address this issue, we show how Sufficient Orthogonality can be reduced to a strong version of the  No Information Without Disturbance  property discussed in Refs. \cite{deri,Pfister}. Such a reduction, alas, assumes Causality.  

The paper is structured as follows:  in Sections \ref{sec:devind} and \ref{sec:opprob} we present the device-independent framework and the framework of general probabilistic theories, respectively.  The bridge between the two frameworks is provided in Section \ref{sec:physimod}, where we specify  the physical model in which the input-output probabilities are generated.  The relation between No-Signalling at the level of probability distributions and Causality at the level of physical processes is discussed in Section \ref{sec:causnosig}.     Sections \ref{sec:spiky} and \ref{sec:ax} provide the definition of spiky measurements and state three axioms about their structure. The three axioms imply the validity of Local Orthogonality (briefly recalled in Section \ref{sec:lo}) and Causality, as shown by the derivation in Section \ref{sec:derivinglo}.   Since one of the three axioms has no compelling operational interpretation, in Section \ref{sec:derSO} we show one way to reduce it to more fundamental  physical statements---in this case, Causality and a strong version of the No Information Without Disturbance principle. 
The analysis carried out for nonlocality is then applied to the study of contextuality:  
 we  first review the device-independent framework and illustrate CE as an example of device-independent principle (Section \ref{sec:CE}) and then bridge it with the framework of general probabilistic theories  (Section \ref{sec:bridge}).  In Section \ref{sec:derivingCE} we show different formulations of  CE as a physical principle regarding a privileged class of measurements, which generalize projective measurements in quantum theory. Choosing spiky measurements as our privileged class of measurements, we then provide a derivation of CE (Section  \ref{sec:spikyCE}).     
Finally,   in Section \ref{sec:relations} we compare the notion of spiky measurement, used in this paper, with other potential generalizations of the notion of projective measurement in quantum theory.   The conclusions are drawn in   Section \ref{sec:conclu}.  
The Appendices present some technical proofs that are not of immediate interest for the comprehension of the main points of the paper.


\section{The device-independent framework for nonlocality}\label{sec:devind}
In this section we briefly review the device-independent framework for nonlocality \cite{pr95,barrett2005nosignaling,acin06}, pointing the reader to \cite{scarani} for a more extended  discussion.  
The  framework describes   games where a group of players respond to a set of possible questions  posed by a referee.   The strategy of the players is described by the conditional probability distribution of the answers  given the questions.  Regarding the questions as inputs and the answers as outputs,  the limitations on the physical theory that describes the players' strategies  are encoded into limitations on the  allowed  input-output probability distributions.

\subsection{Non-local  games}\label{subsect:nonlocal}
A non-local game is a game  involving  $N $ players and a referee, where  the referee gives to the $i$-th player an input $x_i$ in some input alphabet $\mathsf X_i$ and the  player returns an output $y_i$  in some output alphabet $\mathsf Y_i$.
For brevity, let us denote by $ \st x  =  (x_1,x_2,\dots, x_N)   \in  \set X_1  \times  \set X_2\times \cdots \times \set X_N   =  : \prod_{i=1}^N  \set X_i$ the string of all inputs given by the referee and by $ \st y  =  (y_1,y_2,\dots, y_N)   \in    \set Y_1  \times \set Y_2\times \cdots \times \set Y_N    =:  \prod_{i=1}^N  \set Y_i   $ the string of all outputs returned by the players. In each run of the game, the referee chooses the input string $\st x$ at random according to a probability distribution  $q(\mathbf{x})$ and  assigns a payoff  $\omega (\st x,\st y)$ to the output string $\st y$.  The goal of the players is to maximize their expected payoff, given by
\begin{align}\label{payoff}
\omega  =   \sum_{\st x}   q(\st x)   \left[ \sum_{\st y}  \omega (\st x, \st y)   ~ p(\st y| \st x)       \right]  \, ,
\end{align}
where $p(\st y| \st x)$ is the conditional probability that the players produce the output  $\st y $ upon receiving the input $\st x$.


\subsection{Principles about input-output distributions}

The input-output probability distribution $p(\st y|\st x)$ describes the strategy of the players in a black box fashion, disregarding the specific details of the devices used to generate the outputs. Such a description, called \emph{device-independent},  is particularly suited for  cryptographic applications \cite{ekert1991quantum,mayers1998quantum,barrett2005nosignaling,acin06,acin2007device,masanes2011secure,vazirani2014fully}.   In this context, the constraints on the allowed strategies are expressed as constraints on the allowed probability distributions
For example, the most common constraint  in the literature   is the No-Signalling principle \cite{khalfin1985quantum,rastall1985locality,pr95}, which imposes that the correlations  in the probability distribution $p(\st y|\st x)$ cannot be used to simulate classical communication among the players.    For $N=2$, No-Signalling amounts to the set of  linear constraints
\begin{align}
\nonumber
&  \sum_{y_2 \in\set Y_2}   p(y_1,y_2| x_1,x_2) =  \sum_{y_2 \in\set Y_2}   p(y_1,y_2| x_1,x'_2) \quad \forall x_2,x_2'  \in  \set X_2 \\
&  \sum_{y_1 \in\set Y_1}   p(y_1,y_2| x_1,x_2) =  \sum_{y_1 \in\set Y_1}   p(y_1,y_2| x'_1,x_2) \quad \forall x_1,x_1'  \in  \set X_1 \, .
 \label{nosi}
\end{align}
For $N  \ge 2$, no-signalling is imposed by partitioning the $N$ players into two disjoint groups and by imposing the above equations for all possible bipartitions.

Principles like  Non-Trivial Communication Complexity \cite{Dam05, brassard2006}, No-Advantage in Nonlocal Computation \cite{linden2007}, Information Causality \cite{Paw09},  Macroscopic Locality \cite{Navascues09}  and Local Orthogonality \cite{Fritz12} are also  examples of restrictions about input-output probability distributions.
For example, Non-Trivial Communication Complexity  is the requirement that the  probability distribution $p(\st y|\st x)$ should not allow two players to compute arbitrary Boolean functions with a single bit of classical communication.

Treating $p(\st y|\st x)$ as a black box also allows  for  an interesting  connection with
the framework of interactive proof systems  \cite{goldwasser1989knowledge}, as  highlighted in     Ref. \cite{brassard2008classical,brassard2013classical}.   In short, one regards the $N$ players as $N$ untrusted provers and the referee as a verifier, with the communication  between provers and verifier restricted to be classical.  In this context, different  physical principles   represent different constraints on the power of the provers.
Starting from  the seminal work by Raz \cite{raz1998parallel},  the no-signalling constraint  has been studied extensively
\cite{holenstein2007parallel,ito2009oracularization,ito2010polynomial,kalai2014delegate}. 
 It is natural to expect that  also other information-theoretic  principles, such as Non-Trivial Communication Complexity or Information Causality, may have interesting consequences for  interactive proof systems.

\subsection{Characterizing quantum correlations}

The original scope of nonlocal games is the study of quantum correlations.   Here one imagines a scenario where   $N$ parties  prepare $N$ quantum systems in a joint quantum state and the $i$-th party generates the output $y_i$ by performing a measurement on the $i$-th system, choosing  the measurement settings according to the input $x_i$.  In this scenario, the probability distribution $p(\st y|\st x)$
 has the form
\begin{align}\label{quantum}
p(\st y|\st x)  =  \Tr \left[ \left(  P^{(1,x_1)}_{y_1}  \otimes   P^{(2,x_2)}_{y_2}  \otimes \cdots \otimes   P^{(N,x_N)}_{y_N}  \right) \rho \right]  \, ,
\end{align}
where $\rho$ is the quantum state of the $N$ systems and $\left\{P^{(i,x_i)}_{y_i}\right\}_{y_i\in\set Y_i}$ is the Positive Operator-Valued Measure \footnote{We recall that a POVM with outcomes in $\set Y$ is defined as a collection of non-negative operators $  \{  P_y\}_{y\in\set Y}$ satisfying the normalization condition $\sum_{y\in\set Y}  P_y  =  I$,  $I$ being the identity on the system's Hilbert space.}  (POVM) describing the measurement performed by the $i$-th party upon receiving the input $x_i$.

Input-output distributions that are generated as in Eq. (\ref{quantum}) are called \emph{quantum}.
For given $N$ and given input/output alphabets, the  set of quantum input-output  distributions is convex. Hence,  characterizing it is equivalent to characterizing the maximum payoffs achieved by strategies of the form (\ref{quantum}) in all possible games.  Since  the  payoff $\omega$ in Eq. (\ref{payoff}) can be viewed as a \emph{correlation},  the problem of characterizing the maximum payoffs is often referred to as the problem of  \emph{characterizing the set of quantum correlations}.

Clearly, the definition of ``quantum input-output  distribution"  does refer to the way the probability distribution is  generated.   It does  in two ways:
\begin{enumerate}
 \item it prescribes that the $p(\st y|\st x)$ is generated by a specific operational procedure (preparing a multipartite state and performing local measurements)
 \item it specifies the physical theory (quantum theory) in which the procedure is implemented.
 \end{enumerate}
Now the   question  is:  Can we characterize the  set of quantum correlations though device-independent  principles?
The principles proposed so far  \cite{pr95,Dam05,brassard2006, linden2007,Paw09,Navascues09,Fritz12} are important milestones towards the achievement of this goal.    However, a complete characterization of the quantum set solely in terms of conditional probability  distributions appears to be challenging \cite{almost}.

\section{The framework of operational-probabilistic theories}\label{sec:opprob}

Ultimately, the maximum payoff that the players can win in a  non-local game depends on the physical theory that underlies its implementation.  Constraints on the physical theory imply  constraints on the conditional probability distributions $p(  \st y|  \st x)$ that the players can generate.   For example, the no-signalling conditions of Eq. (\ref{nosi})  are often  motivated by a space-time scenario where physical systems travel at a bounded speed and  the   players are  far enough from one another that no signal can be exchanged among them during a run of the game.

Among all  possible theories, classical and quantum theories are the two prominent examples, due to their central role in physics.  However, in order to understand what is specific about these two theories and to explore future generalizations, it is convenient to step back from their specific details and to place them in  the wider context of general probabilistic  theories \cite{hardy01,barrett07,mauro2,nobroad,teleportation,puri,deri,barnum11,hardy11,hardy2013,chiribella14dilation}---see also the contributed volume \cite{chiribella2016quantum} for an introduction to the different frameworks. 
Among the available frameworks, here we adopt  the framework of \emph{operational-probabilistic theories (OPTs)}  \cite{puri,deri,chiribella14dilation,quantumfromprinciples}, which extends the language of quantum circuits \cite{nielsen2010quantum,mermin93} to arbitrary physical theories, combining the categorical framework initiated by Abramsky and Coecke  \cite{abramsky2004,abramsky2008,coecke2010universe}  with the toolbox of elementary probability theory.    An informal summary of the OPT framework  is provided in the following subsections.   For a more formal exposition we direct the reader to Ref. \cite{chiribella14dilation}.   For more discussion on the physical assumptions at the basis of the OPT framework we recommend Hardy's recent works \cite{hardy11,hardy2013,hardybook}, which adopt a closely related framework and provide a number of enlightening comments on the relation between the operational  and the theoretical level.

\subsection{Operational structure}

An \emph{OPT} \cite{puri} describes the operations that an agent can perform on physical systems.  The theory specifies a catalog of (generally non-deterministic)  devices  that the agent can compose with each other: each device  transforms an input system into an output system, generally in a stochastic way, producing a random outcome  $x$.    We denote  by   $\boldsymbol {\map T}  =\{\mathcal{T}_x\}_{x\in \mathsf X}$ the set of alternative transformations that can occur when a given device is used, and we  represent  each transformation $\mathcal T_x$ as
\begin{equation*}
\Qcircuit @C=1em @R=.7em @! R {
&\qw \poloFantasmaCn{A}  &   \gate{\map T_x}  & \qw \poloFantasmaCn{B}   &  \qw  } \, ,
 \end{equation*}
 where $A$ and $B$ are the input and output system, respectively.  A  collection $\boldsymbol {\map T}$ describing the action of a device is called a \emph{test} \cite{puri}.  Whether or not a given collection $\boldsymbol {\map T}$ is a ``test" is determined by the theory \footnote{Essentially, the only constraints on the set of tests are those arising by coarse-graining and by composition   (see discussion later in this section) For example,   if two tests are composed in series or in parallel, then the resulting collection is also a test.  }.

A test with input $A$ and output $B$ is said to be \emph{of type $A\to B$}.       As a special case, the device can have no  input, 
 in which case its action  consists in preparing a system in a particular \emph{ensemble of  states} \footnote{In quantum theory, the ensemble  $\{\rho_x\}_{x\in\set X}$ would consist of unnormalized density matrices, with the trace of each matrix giving the probability of the corresponding preparation.   }   $\boldsymbol{ \rho}  =  \{\rho_x\}_{x\in\mathsf X}$. Each state of the ensemble is represented as
\begin{equation*}
\Qcircuit @C=1em @R=.7em @! R {
 &   \prepareC{\rho_x}  & \qw \poloFantasmaCn{B}   &  \qw  } \, ,
 \end{equation*}
where $B$ is the  system prepared by the device. In equations, we will often use the Dirac-like notation $  |\rho_x)$.
Likewise,  a device can have trivial output, in which case its action results in  a \emph{demolition measurement}  $\st m = \{m_x\}_{x\in\set X}$, that absorbs the system and produces an outcome with some probability. 
We represent each transformation in the measurement as
\begin{equation*}
\Qcircuit @C=1em @R=.7em @! R {
&\qw \poloFantasmaCn{A}  &   \measureD{m_x} } \, ,
 \end{equation*}
where $A$ is the input state undergoing the measurement. In equations we will often use the Dirac-like notation $(m_x|$.      Traditionally, the transformation $m_x$ is called \emph{effect}  \cite{ludwig1968attempt}.
A  test  $\boldsymbol {\map T}$ of type  $A \to A$ can be thought as a \emph{non-demolition measurement} of system $A$.
We will use the notation $\St (A)$, $\Transf (A\to B)$, 
and $\Eff (B)$ to denote the sets of all states of system $A$, all transformations of $A$ into $B$, and all effects on system $B$, respectively.

The simplest device that can act on a system $A$ is the identity device, which has only one possible outcome, corresponding to the identity transformation, $\map I_A$. Like in  quantum circuits, we  represent the identity on system $A$ with just a wire.
In general, we call a device with a single outcome \emph{deterministic}, because in that case we know for sure which transformation is going to take place.   The subsets consisting  of deterministic states, deterministic transformations, and deterministic effects will be denoted as $\St_1 (A)$,  $\Transf_1 (A\to B)$, and $\Eff_1(B)$, respectively.

The notation $A\otimes B$ represents the composite system consisting of the subsystems $A$ and $B$. 
Composite systems are represented  by multiple wires: for example,
\begin{equation*}
\begin{aligned}
\Qcircuit @C=1em @R=.7em @! R {
 &   \multiprepareC{1}{\rho_x}  & \qw \poloFantasmaCn{A}   &  \qw  \\
 &\pureghost{\rho_x}  &  \qw \poloFantasmaCn{B}  &\qw }
 \end{aligned} \, ,
 \end{equation*}
represents a state of the composite system $A \otimes B$. Devices can be connected in parallel and in series, giving rise to circuits, such as
\begin{equation*}
\Qcircuit @C=1em @R=.7em @! R {
 &   \multiprepareC{2}{\rho_x}  & \qw \poloFantasmaCn{A}   &  \multigate{1}{\map T_y}  &  \qw \poloFantasmaCn{D}  & \qw & \qw  \\
& \pureghost{\rho_x}   &  \qw \poloFantasmaCn{B}  &  \ghost{\map T_y}  &  \qw\poloFantasmaCn{E} & \multimeasureD{1}{M_z}   &  \\
 &\pureghost{\rho_x}   &  \qw \poloFantasmaCn{C}   &  \qw    & \qw & \ghost{ M_z }  &  } \, ,
 \end{equation*}
 or, in equation  $       (  \map I_{\map D}  \otimes  M_z ) (    \map T_y  \otimes  \map I_C    )\rho_x$.
Circuits in an operational-probabilistic theory obey the same rules as   circuits in quantum information.
In fact, these  rules  are already encapsulated in the graphical language used to represent them, whose foundation lies in the theory of strict symmetric monoidal categories \cite{coecke2010, selinger2011survey}.  The idea that the definition of a physical theory should be based on (strict) symmetric monoidal categories was introduced by Abramsky and Coecke   \cite{abramsky2004,abramsky2008}.   A discussion of this idea, along with a comprehensive exposition  of the categorical framework can be found in   Coecke's review  \cite{coecke2010universe}.

\subsection{Probabilistic structure}
When a preparation device with  ensemble $\{\rho_x\}_{x\in\mathsf X}$ is connected to a measurement device $\{m_y\}_{y\in\mathsf Y}$, the joint probability distribution of the outcomes is written as
\begin{equation}\label{pjoint}
p(x,y)   =   (  m_y|  \rho_x) \, ,
 \end{equation}
 and is identified with the diagram \begin{equation*}
  \begin{aligned}\Qcircuit @C=1em @R=.7em @! R {
 &  \prepareC {\rho_x}   & \qw \poloFantasmaCn{A}  &   \measureD{m_y} }
\end{aligned} \, .
 \end{equation*}
Note that this is the \emph{joint} probability distribution that the preparation device gives the random outcome $x$ \emph{and} the measurement device gives the outcome $y$.   Accordingly, it is   
is normalized as
 $$\sum_{x\in\set X} \sum_{y\in \set Y}   p(x,y) = 1  \, .$$
 Once probabilities are introduced, the sets of states, effects, and transformations inherit a linear structure\footnote{The linear structure is obtained through an operation of quotient, which consists in identifying transformations that give the same probabilities in all possible circuits. Note that the assumption of convexity   is not made here: the OPT framework can also be used to describe theories with non-convex state spaces, such as Spekkens' toy theory \cite{SpekkensToy}.}, so that we can think of each state, effect, or transformation as an element of a suitable vector space \cite{puri,chiribella14dilation}.  By construction, the action of a transformation on states and effects is linear and, in particular, states (effects) are linear functionals on effects (states) (see  paragraph  II.F of Ref.  \cite{puri} and paragraph 2.3.3 of Ref. \cite{chiribella14dilation} for  details).

Quantum theory can be cast in the framework of OPTs as a special example.  Here systems are described by Hilbert spaces.  A preparation device is described by an ensemble $\{\rho_x\}_{x\in\mathsf X}$ of unnormalized density matrices, acting on the system's Hilbert space and satisfying the condition $\sum_x \Tr  [\rho_x] =1$.   A measurement device is described by a positive operator-valued measure (POVM),  namely a collection $\{  P_y  \}_{y\in  \mathsf Y}$ of non-negative operators satisfying the condition
\begin{align}\label{norm:povm} \sum_{y}  P_y  =  I \, ,
\end{align}
where $I$ is the identity operator on the system's Hilbert space.  The pairing between states and effects is given by the Born rule  
\begin{align}
(  P_y|   \rho_x )   :=  \Tr  [  P_y  \,  \rho _x] \, .    
\end{align}
A test with non-trivial input and output is a \emph{quantum instrument} \cite{davies1970operational},~i.~e.~a collection of completely positive, trace non-increasing linear maps $\{\map T_y\}_{y\in\set Y}$, transforming operators on the input system's Hilbert space into operators on the output system's Hilbert space and satisfying the condition that the map $\sum_{y\in\set Y}  \map T_y$ is trace-preserving.   Classical theory can also be represented in this way, by choosing density matrices and POVM operators that are diagonal in a fixed   basis, and quantum instruments that transform diagonal operators into diagonal operators.  

\subsection{Coarse-graining}
A key notion that comes with the probabilistic structure is the notion of \emph{coarse-graining}:  for a test $\boldsymbol{\map T}  =\{\map T_y\}_{y\in \set Y}$ one can decide to identify some outcomes, thus obtaining another, coarse-grained test.  Mathematically, a coarse-graining is defined by a partition of the outcome set $\set Y$ into mutually disjoint subsets $\{  \set Y_z\}_{z\in\set Z}$.      The coarse-grained test is the test $\boldsymbol{\map T}'  = \{\map T'_z\}_{z\in\set Z}$ defined by
\begin{align}
\map T'_z  :  =  \sum_{y\in\set Y_z}   \map T_y  \, .
\end{align}
Note that the summation is well-defined because  transformations  are elements of a vector space (cf. paragraph II.F of Ref.  \cite{puri} and   paragraph 2.3.3 of Ref. \cite{chiribella14dilation}).

\section{Physical modelling of non-local games}\label{sec:physimod}

The OPT framework   can be naturally applied to the study of nonlocal games. A strategy in a nonlocal game can be modelled as follows:
\begin{enumerate}
\item The correlations shared by the   $N$ players are modelled by a joint state $\rho$ of $N$ systems  $S_i$, $i  =1,\dots, N$, with system $S_i$ in possession of the $i$-th player.  Here we restrict the attention to states $\rho$ that can be prepared deterministically, that is, to states  generated by a preparation device with only one possible outcome.
\item Upon receiving the input $x_i$ from the referee, the $i$-th player will produce an output by performing a measurement on system $S_i$.   Note that in this broad context, ``measurement" can be any process that produces a classical output $y_i$ given the input $x_i$ and the state of the system.
Even evaluating a function of $x_i$ on a computer and reading the result on the screen would count as a ``measurement".
\end{enumerate}
Let us denote by ${\st m}^{i,x_i}  :  =  \left\{  m_{y_i}^{i,x_i}\right\}_{y_i \in\mathsf Y_i}$ the measurement performed by the $i$-th player upon receiving input $x_i$.
The conditional probability distribution $p(\st y|\st x)$, generated by the measurements of all players is then given by
\begin{equation}\label{physprob}
p(\st y|\st x)  =  \left. \left (   m_{y_1}^{ 1,x_1}  \otimes  m_{y_2}^{2,x_2}   \otimes \dots \otimes m_{y_N}^{N,x_N}   \right|  \rho  \right) \, ,
\end{equation}
and corresponds to the diagram
\begin{equation}\label{pdiag}
 \begin{aligned}
 \Qcircuit @C=1em @R=.7em @! R {
 &  \multiprepareC{3}{\rho}   & \qw \poloFantasmaCn{S_1}  &   \measureD{m_{y_1}^{1,x_1}} \\
  &  \pureghost{\rho}   & \qw \poloFantasmaCn{S_2}  &   \measureD{m_{y_2}^{2,x_2}} \\
   &  \pureghost{\rho}   & \vdots  &     \vdots    \\ &  \pureghost{\rho}   & \qw \poloFantasmaCn{S_N}  &   \measureD{m_{y_N}^{N,x_N}}   }
\end{aligned} \, .
\end{equation}
For brevity, we will often use the notation $m_{\st y}^{\st x}$ to denote the product effect
\begin{align}
m_{\st y}^{\st x}  : = m_{y_1}^{ 1,x_1}  \otimes  m_{y_2}^{2,x_2}   \otimes \dots \otimes m_{y_N}^{N,x_N}   \, .
\end{align}
Accordingly,  we will write Eq. (\ref{physprob}) in the compact form 
\begin{equation}\label{physprob1}
p(\st y|\st x)  =  \left. \left (   m_{\st y}^{ \st x}  \right|  \rho  \right) \, ,
\end{equation}

Once a physical theory  has been specified, the goal of the players is to find the best  state $\rho$ and the best measurements that maximize the expected payoff $\omega$, given by Eq.~\eqref{payoff}. For a given theory $\mathbb T$, we denote by $\omega_{\mathbb T}$ the maximum payoff that can be obtained by optimizing over all possible states and measurements allowed  in $\mathbb T$.

\section{Causality, no-signalling, and conditional tests}\label{sec:causnosig}
In general, the probability distribution $p(\st y|\st x)$ in Eq. (\ref{physprob}) can allow for signalling.
In the framework of operational-probabilistic theories, No-Signalling  is imposed by the Causality principle, stating that the probability of an outcome at a given step in a circuit is independent of the choice of tests performed at later steps.  Precisely, the principle can be stated as follows:
\begin{df}[Causality  \cite{puri,deri}] A theory satisfies \emph{causality} iff for every system $S$, every preparation-test $\boldsymbol{\rho}  =\{\rho_x\}_{x\in \set X} $ for system $S$,  and   every two measurements $\st m^0  =\left \{ m^{0}_{y_0}\right\}_{y_0\in \mathsf Y_0}$ and $\st m^1  = \left\{ m^1_{y_1}\right\}_{{y_1}\in \mathsf Y_1}$ on system $S$ the conditional probability distributions  
$p(x,y_z|z)   :  =   \left(  \left.   m^{z}_{y_z}      \right|   \rho_x  \right) $   
satisfy the condition 
\begin{align}\label{NSF}
\sum_{y_0\in\set Y_0}    p(x,y_0| 0 )    =  \sum_{y_1\in\set Y_1}    p(x,y_1| 1 )     \qquad \forall x\in\set X \, .
\end{align}
\end{df}
Informally, Eq. (\ref{NSF}) expresses a condition of No-Signalling from the future: the (marginal) probability of a preparation does not depend on the choice of measurement.     

Causality is   equivalent to the requirement that for every system $S$  there exists a unique effect   $u_S$, called the \emph{unit effect}
\footnote{We  adopt this terminology to facilitate the comparison of our framework with the convex set framework \cite{hardy01, barrett07, barnum11}, where the existence and uniqueness of the unit effect---and therefore the validity of Causality---is built in. },
such that
\begin{equation}\label{causal} \sum_{y\in\set Y}   m_y  =  u_S
\end{equation}
for every measurement   $\{m_y\}_{y\in\set Y}$   on $S$.  When there is no ambiguity, we will drop the subscript from $u_S$.
 In quantum theory,   $u_S$ is the identity operator on the Hilbert space of the system and Eq. (\ref{causal})  expresses the fact that quantum measurements are resolutions of the identity [cf. Eq. (\ref{norm:povm})].

 Causality is equivalent to the statement that for every system  $S$ there exists a unique \emph{deterministic} effect $u_S  \in  \Eff_1  (S)$ \cite{puri}.  In categorical terms, this condition is the terminality of the tensor unit (the trivial system, in our language) and defines a special class of categories called \emph{causal categories}   \cite{coecke2013causal, coecke2014terminality}.  

\subsection{Causality and No-Signalling}
Causality  implies that the probability distributions  $p(\st y|\st x)$ generated by local measurements as in Eq. (\ref{physprob}) satisfy the no-signalling condition   (cf. theorem 1 of Ref. \cite{puri} and theorem 5.1 of  Ref.  \cite{coecke2014terminality}).  In fact, under a minimalistic assumption, Causality is \emph{equivalent} to the request that all the   probability distributions of the form of Eq. (\ref{pdiag}) are no-signalling.
The assumption is that every ensemble of states can be generated by performing a measurement on one side of a bipartite state:
\begin{assumption}[cf. Axiom 2 of \cite{chiribella14dilation}]\label{ass:display}
For every system $A$ and for every ensemble $\{\rho_x\}_{x\in \set X}$, describing a random preparation of $A$, there exists a system $B$, a deterministic state $\sigma \in \St_1  (A\otimes B)$ and a measurement $\{b_x\}_{x\in \set X}$ such that
\begin{equation}\label{display}
\begin{aligned}
\Qcircuit @C=1em @R=.7em @! R {
 &   \prepareC{\rho_x}  & \qw \poloFantasmaCn{A}   &  \qw  }
 \end{aligned}   =
\begin{aligned}
\Qcircuit @C=1em @R=.7em @! R {
 &   \multiprepareC{1}{\sigma}  & \qw \poloFantasmaCn{A}   &  \qw  \\
 &\pureghost{\sigma}  &  \qw \poloFantasmaCn{B}  &\measureD{b_x} }
 \end{aligned}   \qquad \forall x\in\set X \,  .
 \end{equation}
\end{assumption}
This assumption is so natural that could even be included in the definition of OPT:     indeed,  one can think of system  $B$  in Eq. (\ref{display}) as the physical support that carries the classical information about the outcome $x$---information which is read-out  by performing the measurement $\{b_x\}_{x\in\set X}$.   If   Eq. (\ref{display})  were not to hold, we could not represent the outcome   $x$  as information carried by an actual physical system.   Note that this observation   applies not only to ensembles of states, but also to generic tests with non-trivial input and non-trivial output.

Under the validity of Assumption \ref{ass:display}, Causality and No-Signalling are equivalent: 
\begin{prop}\label{prop:causnosig}
For every theory satisfying Assumption \ref{ass:display}, the following conditions are equivalent
\begin{enumerate}
\item the theory is causal
\item every input-out probability distribution $p(\st y|\st x)$ generated as in Eq. (\ref{physprob}) is no-signalling.
\end{enumerate}
\end{prop}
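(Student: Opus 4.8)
The plan is to establish the two implications separately. The direction (1)~$\Rightarrow$~(2) is a direct computation exploiting the uniqueness of the deterministic effect guaranteed by causality, and does not even require Assumption~\ref{ass:display}. The converse (2)~$\Rightarrow$~(1) is where the assumption does the real work: it lets me realise an \emph{arbitrary} ensemble as a local measurement on half of a bipartite state, thereby converting the spatial no-signalling of Eq.~(\ref{nosi}) into the ``no-signalling from the future'' condition (\ref{NSF}) that \emph{defines} causality.

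For (1)~$\Rightarrow$~(2), I would fix an arbitrary bipartition of the $N$ players into two groups $G$ and $G^c$ and sum the product effect $m_{\st y}^{\st x} = \bigotimes_i m_{y_i}^{i,x_i}$ over the outputs $(y_i)_{i\in G}$. Because the effects factorise across the tensor product, this replaces each local effect $m_{y_i}^{i,x_i}$ with $i\in G$ by $\sum_{y_i} m_{y_i}^{i,x_i}$, which by causality equals the unit effect $u_{S_i}$---crucially \emph{independent} of the input $x_i$. Pairing the resulting effect $\bigl(\bigotimes_{i\in G} u_{S_i}\bigr)\otimes\bigl(\bigotimes_{j\in G^c} m_{y_j}^{j,x_j}\bigr)$ with $\rho$ then yields a marginal distribution depending only on the inputs of $G^c$, which is precisely no-signalling across that bipartition. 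Since the bipartition was arbitrary, Eq.~(\ref{physprob}) is no-signalling.

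For (2)~$\Rightarrow$~(1), I must verify the defining condition (\ref{NSF}) for an arbitrary ensemble $\{\rho_x\}_{x\in\set X}$ and arbitrary measurements $\st m^0,\st m^1$ on a system $S$. By Assumption~\ref{ass:display} I would pick a bipartite deterministic state $\sigma\in\St_1(S\otimes B)$ and a measurement $\{b_x\}_{x\in\set X}$ on $B$ so that $(m|\rho_x) = (m\otimes b_x|\sigma)$ for every effect $m$ on $S$. Now I read this off as a two-player game in the sense of Eq.~(\ref{physprob}): player~1 holds $S$ with input $z\in\{0,1\}$ and measurement $\st m^z$, while player~2 holds $B$ with a trivial (singleton) input and the fixed measurement $\{b_x\}$, so that $p(y_z,x|z) = (m_{y_z}^z\otimes b_x|\sigma)$. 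The hypothesis that this distribution is no-signalling---specifically that summing over player~1's output leaves player~2's marginal independent of player~1's input---gives $\sum_{y_0}(m_{y_0}^0\otimes b_x|\sigma) = \sum_{y_1}(m_{y_1}^1\otimes b_x|\sigma)$ for all $x$. Rewriting both sides via $(m\otimes b_x|\sigma) = (m|\rho_x)$ yields exactly Eq.~(\ref{NSF}), and since every ensemble arises this way, the theory is causal.

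The hard part is the converse direction, and more precisely the conceptual translation rather than any calculation: one has to see that the \emph{temporal} statement (\ref{NSF})---that the marginal of a preparation is insensitive to which later measurement is chosen---is the same as ordinary \emph{spatial} no-signalling once the classical label $x$ of the ensemble is physically encoded in the ancilla $B$ of Assumption~\ref{ass:display}. I would also take care to check that a two-player instance with a trivial input on one side is a legitimate special case of Eq.~(\ref{physprob}), so that the bipartite no-signalling condition (\ref{nosi}) applies verbatim.
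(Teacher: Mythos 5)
Your proof is correct, and both directions match the paper's strategy in their essential mechanics: the forward direction is the same one-line marginalisation using the unit effect, and the converse uses Assumption~\ref{ass:display} to encode the ensemble label in an ancilla $B$ and then reads off a bipartite no-signalling constraint. The one genuine difference is the \emph{target} of the converse argument. The paper takes two candidate deterministic effects $u_0,u_1$ on a system $A$, builds the two-party scenario with player~1 applying $u_{x_1}$, and uses no-signalling plus Assumption~\ref{ass:display} to conclude $(u_0|\rho)=(u_1|\rho)$ for \emph{every} state $\rho$ of $A$, hence $u_0=u_1$; it then invokes the stated equivalence between Causality and uniqueness of the deterministic effect. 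You instead verify the defining condition~(\ref{NSF}) head-on, for an arbitrary ensemble $\{\rho_x\}$ and arbitrary measurements $\st m^0,\st m^1$, by letting player~2's measurement $\{b_x\}$ produce the ensemble label as her output. Your route is slightly more self-contained, since it never needs the (unproved-in-this-paper) equivalence ``Causality $\iff$ unique unit effect,'' and it works directly with general measurements rather than single-outcome ones; the paper's route buys the cleaner structural statement that any two deterministic effects coincide as functionals on all states. Both are complete, and your care in checking that a two-player instance with a trivial input on one side is a legitimate special case of Eq.~(\ref{physprob}) is exactly the point that makes the translation from the temporal condition~(\ref{NSF}) to the spatial condition~(\ref{nosi}) legitimate.
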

The proof is rather elementary and is provided in  \ref{app:causnosig}.    As a consequence,  maximizing the payoff of a nonlocal game over all possible theories that satisfy  Causality is equivalent to maximizing the payoff $\omega$ in Eq.~\eqref{payoff} over all possible conditional distributions that satisfy No-Signalling.
The relation between Causality and No-Signalling has recently played an important role in the study of  network scenarios inspired by Pearl's notion of causal networks \cite{fritz2014beyond,henson2014theory} and of the entropic relations implied by causal networks in operational-probabilistic theories \cite{chaves2015information}.

\subsection{Causality and conditional tests}
Thanks to Causality, one can  use the information gained in the past  to decide which tests are performed in the future, thus implementing \emph{conditional tests}.  Conditional tests are defined as follows:   If $\boldsymbol{\map T}  =\{\map T_x\}_{x\in \set X}$ is a test  with input $A$ and output  $B$ and,  for every $x$, $\boldsymbol{\map S}^x=\left\{\map S^{x}_y\right\}_{y\in\set Y}$ is a test with input $B$ and output $C$  for every $x$,  then the conditional test    $\left\{  \map S^{x}_y   \map T_x\right\}_{x\in\set X,  y\in\set Y}$ is the test that results from   performing  the test   $\boldsymbol{\map T} $ and, conditionally on outcome $x$, the test $\boldsymbol{\map S}^{x} $, as in the the diagram
\begin{equation*}
\Qcircuit @C=1em @R=.7em @! R {
&\qw \poloFantasmaCn{A}  &   \gate{\map T_x}  & \qw \poloFantasmaCn{B}   &  \gate{\map S_y^{x}}   & \qw\poloFantasmaCn{C}  &\qw} \, .
 \end{equation*}
Causality guarantees that the collection of transformations $\{  \map S^{x}_y   \map T_x\}_{x\in\set X,  y\in\set Y}$ can be included among the tests allowed by the theory  without generating contradictions \cite{puri}.
  Since they \emph{can} be included, one may as well assume that they \emph{are} included, which amounts to the following
\begin{assumption}\label{ass:display2}
 For every test  $\boldsymbol{\map T}  = \{\map T_x\}_{x\in \set X}$ of type $A\to B$   and  for every set  of tests $\boldsymbol{\map S}^x  = \left\{\map S^{x}_y\right\}_{y\in\set Y}$, $x\in\set X$, of type $B \to C$, the collection of transformations   $\left\{  \map S^{x}_y   \map T_x\right\}_{x\in\set X,  y\in\set Y}$   is a test of type  $A\to C$.  
\end{assumption}
Quite importantly, Assumption \ref{ass:display2} \emph{implies} Causality (cf. lemma 7 of Ref. \cite{puri}): only in a causal world the agent can freely choose future tests depending on the outcomes of previous ones.    
From now on, we will assume Assumption \ref{ass:display2} and convexity as part of the Causality package coming: by Causal theory, we will mean a theory satisfying Assumption \ref{ass:display2}.


\section{Spiky measurements}\label{sec:spiky}

In this section we define a privileged class of measurements, which  we call \emph{spiky measurements}.  In Quantum Theory,  spiky measurements coincide with projective measurements, i.e. measurements consisting of projectors on a complete set of orthogonal subspaces. 


\subsection{Purity}
 A \emph{pure transformation} $\map P$ is a transformation that cannot be obtained from the coarse-graining of two different transformations  $\map P_1$ and $\map P_2$: precisely,    the transformation $\map P$ is pure iff one has
\begin{align*}
\map P  =  \map P_1  +  \map P_2  \quad \Longrightarrow   \quad \map P_1 =  p  \map P  \,, \quad    \map P_2  =  (1-p) \map P    \,  ,  \quad   p \in  [0,1] ~.
\end{align*}
Intuitively, the pure transformations  are those for which the evolution of the system is known with the maximal accuracy allowed by the theory.  In quantum theory, the pure transformations are those with a single Kraus operator, i.e. those of the form $\map P (\rho)  =  M \rho M^\dag$, for some operator $M$ satisfying $M^\dag M \le I_{S}$, $I_{S}$ being the identity on the  system's Hilbert space.

As a particular case of pure transformations, one can consider pure states and pure effects.  A pure state is just a pure transformation with trivial input.   A pure effect is a pure transformation with trivial output.   In quantum theory, pure states  and pure effects are proportional to rank-one projectors.   Using the notion of pure effect, it is natural to define pure measurements:
\begin{df} We say that a measurement   $\st m$ is \emph{pure} iff it consists of pure effects.
\end{df}
Intuitively, a pure measurement extracts information in a way that cannot be further refined. For example, for a three-level quantum system,  the computational basis measurement $\{  |0\rangle\langle 0|  , |1\rangle\langle1| ,  |2\rangle\langle2| \}$ is pure, while the two-outcome projective  measurement $\{  |0\rangle\langle  0|  ,  |1\rangle\langle  1|  +  |2\rangle\langle 2|\}$ is not pure, since it can be obtained from the former by coarse-graining.

\subsection{Orthogonality}
In addition to purity, another desirable feature of measurements is orthogonality.
We say that a measurement is orthogonal if it can perfectly distinguish among the states in a given set:
\begin{df}
A measurement on system $S$, say $\st m=\{  m_y\}_{y\in\set Y}$, is \emph{orthogonal} iff there exists a set of states, say $\{  \rho_y\}_{y\in\set Y} $, such that
\begin{align}\label{orto}
(  m_y|  \rho_{y'})  = \delta_{y,y'}  \qquad \forall y,y'\in\set Y \, .
\end{align}
\end{df}
This notion of orthogonality can be easily extended to sets of effects  that do not necessarily form a measurement:
\begin{df}[Orthogonality of states and effects]\label{def:biort}
A    set of  effects $\{m_y\}_{y\in\set Y}\subset \Eff (S)$   and a set of   states $ \{\rho_y\}_{y  \in\set Y}  \subset \St (S)$ are \emph{biorthogonal} iff
\begin{equation*}
( m_y|  \rho_{y'}  ) =  \delta_{y, y'}
\end{equation*}
for every $y,y'\in\set Y$.   A set of   effects  $\{m_y\}$  is \emph{orthogonal} iff there exists a set of   states $\{  \rho_y\}$ such that the two sets are biorthogonal.  A set of states $\{ \rho_y\}$ is \emph{orthogonal}, iff there exists a set of effects  $\{m_y\}$ such that  the two sets  are biorthogonal.
\end{df}

The familiar example of Quantum Theory should not mislead the reader.  In this paper we do \emph{not} define  orthogonal states as states that can be perfectly distinguished by a measurement.  Distinguishability implies orthogonality, but in general the converse does  not hold:       if the states $\{\rho_y\}_{y\in\set Y}$ are orthogonal, this only means that there exist  effects  $\{m_y\}_{y\in\set Y}$ such that $(m_y|\rho_{y'})  =  \delta_{y,y'}$, but in general the effects $\{m_y\}_{y\in\set Y}$ may not form  a measurement  \footnote{Recall that the set of measurements is part of the specification of the theory.}.      Nevertheless,      orthogonality and distinguishability are equivalent notions for \emph{pairs} of states:
\begin{prop}\label{prop:orthotwo}
Two states $\rho_0$ and $\rho_1$ are orthogonal if and only if they are perfectly distinguishable.  
\end{prop}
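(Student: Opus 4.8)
The plan is to prove the two implications separately. One direction is immediate: if $\rho_0$ and $\rho_1$ are perfectly distinguishable, there is a two-outcome measurement $\{e_0,e_1\}$ with $(e_0|\rho_0)=1$ and $(e_1|\rho_1)=1$; causality gives $e_0+e_1=u$, and normalization gives $(u|\rho_i)=1$, whence $(e_0|\rho_1)=1-(e_1|\rho_1)=0$ and likewise $(e_1|\rho_0)=0$. Thus the pair $(e_0,e_1)$ is biorthogonal to $(\rho_0,\rho_1)$ and the states are orthogonal. This is just the general implication ``distinguishability $\Rightarrow$ orthogonality'' specialized to two states.

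The substance is the converse. Here the difficulty is precisely the one flagged in the text: orthogonality only supplies effects $m_0,m_1$ with $(m_y|\rho_{y'})=\delta_{y,y'}$, and a priori these effects need not assemble into a legitimate measurement allowed by the theory. My plan is to sidestep this by discarding $m_1$ and manufacturing a measurement out of $m_0$ alone. Since $m_0\in\Eff(S)$ is a bona fide effect, it occurs in at least one demolition measurement $\{n_y\}_{y\in\set Y}$, say $n_{\bar y}=m_0$. Coarse-graining all the remaining outcomes into a single outcome, as defined in Section~\ref{sec:opprob}, produces the two-outcome measurement $\{m_0,\; u-m_0\}$, where I have used causality to write $\sum_{y}n_y=u$ so that the coarse-grained complementary effect is exactly $u-m_0$. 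This is a genuine measurement by construction.

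It then remains to check that $\{m_0,\,u-m_0\}$ distinguishes the two states. Biorthogonality gives $(m_0|\rho_0)=1$ and $(m_0|\rho_1)=0$ directly. For the complementary effect I would write $(u-m_0|\rho_i)=(u|\rho_i)-(m_0|\rho_i)$ and establish the normalizations $(u|\rho_0)=(u|\rho_1)=1$: because $u-m_0$ is an effect, $(m_0|\sigma)\le(u|\sigma)\le 1$ for every state $\sigma$, so $(m_0|\rho_0)=1$ forces $(u|\rho_0)=1$; symmetrically $(m_1|\rho_1)=1$ (the only place $m_1$ is used) forces $(u|\rho_1)=1$. This yields $(u-m_0|\rho_0)=0$ and $(u-m_0|\rho_1)=1$, so the measurement returns outcome $0$ with certainty on $\rho_0$ and outcome $1$ with certainty on $\rho_1$, i.e.\ perfectly distinguishes them.

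I expect the only real obstacle to be the middle step---arguing that $u-m_0$ is a legitimate effect and that $\{m_0,\,u-m_0\}$ is an admissible measurement. This is where the framework's assumptions do the work: membership of $m_0$ in some measurement (by the very definition of \emph{effect}), closure of tests under coarse-graining, and the existence of a unique unit effect $u$ guaranteed by causality. Everything else is a one-line computation with the state--effect pairing, and, notably, convexity is not required.
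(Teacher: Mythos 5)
Your construction for the nontrivial direction is essentially the one the paper uses in \ref{app:orthotwo}: discard $m_1$, embed $m_0$ in some measurement $\{\widetilde m_y\}_{y\in\set Y}$ guaranteed by the definition of effect, and coarse-grain the remaining outcomes into a single one. However, there is one point where your argument assumes more than the proposition allows: you invoke Causality, both in the easy direction (``causality gives $e_0+e_1=u$'') and, more importantly, in writing the coarse-grained complement as $u-m_0$ via ``the existence of a unique unit effect $u$ guaranteed by causality.'' In this paper Causality is \emph{not} part of the background framework---it is an axiom that is only \emph{derived} later (Section \ref{sec:derivinglo}), and the paper is explicit elsewhere (e.g.\ in the proof of Lemma \ref{piccolo}) that before that point the unit effect need not be unique. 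Proposition \ref{prop:orthotwo} carries no causality hypothesis, so as written your proof establishes only the weaker, causality-conditional statement.

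The repair is immediate and is exactly what the paper does: define the complementary effect directly as the coarse-graining $m_{\lnot 0}:=\sum_{y\neq y_0}\widetilde m_y$, without ever identifying it with $u-m_0$, and use only the normalization $\sum_{y}(\widetilde m_y|\rho_j)=1$ of a measurement applied to a deterministic state. This gives $(m_{\lnot 0}|\rho_1)=1-(m_0|\rho_1)=1$ and $(m_{\lnot 0}|\rho_0)=1-(m_0|\rho_0)=0$ with no reference to $u$ at all; the same normalization identity also disposes of the easy direction without Causality. With that substitution your argument coincides with the paper's, and your closing observation that convexity is not needed is correct---but the cleaner claim is that \emph{neither convexity nor Causality} is needed, only the closure of tests under coarse-graining and the normalization of probabilities.
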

The proof is  elementary and is provided in  \ref{app:orthotwo}.  The above proposition shows that orthogonality for pairs of states is a very special notion.   

Note that pairwise orthogonality does not imply orthogonality: The condition that two states $\rho_y$ and $\rho_{y'}$ are orthogonal for every $y,y'\in\set Y$, $y\not =  y'$ is not enough to guarantee that the states $\{\rho_y\}_{y\in\set Y}$ are orthogonal.  
The canonical counterexample is the \emph{square bit}  \cite{barnum2010entropy, janotta11},  discussed in the following: 

\begin{example}[The square bit]\label{squit}
Consider a physical system whose deterministic states form a  square.  Suppose that  the measurements are represented as positive affine functionals summing up to the functional that gives 1 on every point of the square  \footnote{For the purpose of this example,   we only need to declare the states and the effects of the system. We omit the specification of the full OPT in which the square bit lives.  As a matter of fact, there are many different OPTs that contain ``square bits" among their systems.  For example,  consider an OPT where the systems are composite systems of square bits, the states are convex combinations of product states,  the measurements are those that can be implemented by  (coarse-grainings of) measurements on individual square bits,  and the general tests are those of the ``measure-and-prepare" form, i.e. those consisting on measuring the input system and preparing an output state depending on the outcome of the measurement.}.

The square bit has  four pure states and four pure effects, given by the vectors
 \begin{align*}
|\varphi_y) =& \left(
\begin{array}{c}  r\cos  (2\pi y/4) \\
   r\sin (  2\pi y/4)  \\
    1
    \end{array}
    \right)  \\
 (a_y|   =   &\frac{1}{2} \left(  \begin{array}{c}
 r\cos [  (2y-1)\pi/ 4]  \\
  r\sin  [ (2y-1)\pi/4]
  \\ 1
  \end{array}
  \right)  \\
  &x= 1,2,3,4 \qquad r = 2^{1/4} \, ,
 \end{align*}
 respectively.
 The probabilities are given by the  scalar product of vectors, yielding
\begin{align}
\nonumber
(a_y|\varphi_y) &= (a_{y\oplus 1}|\varphi_y) =1  \\
\label{squitort}(a_{y\oplus 2}|\varphi_y)  & = (a_{y\oplus 3}|\varphi_y) =0\qquad \forall y = 1,2,3,4
\end{align}
where $\oplus $  denotes the addition modulo 4.   Here there are two pure measurements, namely $\{a_1,a_3\}$ and $\{a_2,a_4\}$.  Indeed, it is easy to check that $a_1  +  a_3  = a_2  +  a_4  =  u $, where $u$ is the deterministic effect
\begin{align*}
(u|    \equiv    \left (
\begin{array}{l}
0\\0\\1
\end{array}
\right)
\end{align*}
giving probability 1 on every pure state.  It is not hard to see that the four pure states $\{\varphi_y\}_{y=1}^4$ are pairwise orthogonal, but not orthogonal (ad, therefore, not perfectly distinguishable).  Similarly, the four effects $\{a_y\}_{y=1}^4$ are pairwise orthogonal, but not orthogonal.
\end{example}

Finally, note that  two orthogonal effects, as defined in Definition \ref{def:biort}, may not coexist in a measurement.    
An easy counterexample can be found in quantum theory.   Consider the two projectors $  P_1  =  |0\rangle\langle 0|  +  |1 \rangle\langle  1|  $  and $P_2  =  |0 \rangle\langle  0|  +  |2 \rangle\langle 2|$.  The two projectors correspond to orthogonal effects in the sense of Definition \ref{def:biort}: indeed, there exist two states $\rho_1  =  |1\rangle\langle   1|$  and $\rho_2  =  |2\rangle\langle  2|$ such that $  \Tr [   P_i  \, \rho_j]   =  \delta_{i,j}$, for every $i$ and $j$ in $\{1,2\}$.  However, $P_1$ and $P_2$ cannot coexist in the same measurement, because for $\rho_0  =  |0 \rangle\langle   0|$ one has $\Tr [P_1 \rho_0  ]   + \Tr [ P_2  \rho_0]  = 2$, in contradiction with the normalization of probabilities.

\subsection{Purity plus orthogonality}
We are now ready to define the notion of \emph{pure orthogonal measurement}:
\begin{df}\label{def:purorthmeas}
A {\em pure and orthogonal measurement} is an orthogonal measurement  consisting of pure effects. 
 \end{df}
In Quantum Theory, the pure orthogonal measurements are the measurements consisting on rank-one projectors on the vectors of an orthonormal basis.  Pure orthogonal measurements  featured in a recent work \cite{brunner2014dimension}, where the authors explored different inequivalent notions of dimension of a physical system.   In this work, the (maximum) number of outcomes in a pure orthogonal measurement was called the \emph{measurement dimension} of the system.

To appreciate the meaning of Definition \ref{def:purorthmeas} outside the quantum context, it is worth having a look at the square bit of Example \ref{squit}.  Here, each of  two pure measurements $\{a_1,a_3\}$ and $\{a_2,a_4\}$ is  orthogonal: for example,    $\{a_1,a_3\}$ allows one to distinguish perfectly between the two states $\rho_1$ and $\rho_3$ defined as
$$\rho_1   =   p \, \varphi_1  + (1-p)  \,   \varphi_4  \qquad  \rho_3   =   q \, \varphi_3  + (1-q)  \,   \varphi_2  \, ,$$
where $p$ and $q$ are arbitrary probabilities.   Note that here a pure effect  can give probability 1 on a mixed state. In this respect, the square bit differs radically from the quantum bit, where a pure effect can give probability 1 on one and only one  pure state. The one-to-one correspondence between pure states and effects is a non-trivial property, which played  an important role in several reconstructions of Quantum Theory \cite{PironBook,hardy11,deri,dakic11,masanes11,wilce2012conjugates,masanes12} and will also play a role in the present paper. 


\subsection{Spiky measurements}

We are now ready to define the set of spiky measurements:
\begin{df}[Spiky measurement]\label{def:spiky}   A measurement $\st m= \{m_y\}_{y\in\set Y}$ is \emph{spiky}
\footnote{Here the ``spikes''  are the  pure orthogonal effects $a_z $, $z\in\set Z$.}
iff it is the coarse-graining of a pure orthogonal measurement, i.e. iff there exists a pure orthogonal measurement $\st a= \{a_{z}\}_{z\in \set Z}$ and a partition of the outcome set $\set Z$ into disjoint subsets $  \left\{\set Z_y\right\}_{y\in\set Y}$ such that
\begin{align*}
(m_y | =  \sum_{z\in\set Z_y}  (a_z |.
\end{align*}
\end{df}

The above definition of ``spiky" measurements is equivalent to the definition of ``sharp" measurements   by Barnum, M\"uller, and Ududec \cite{barnum2014higher}.  In this paper  we prefer to avoid the  term ``sharp", because we would like to reserve it for measurements that are repeatable and minimally disturbing  \cite{Chiribella14},  this being a property   traditionally associated to sharp measurements  in  quantum theory   \cite{Luders50,Busch96}.  Admittedly, the choice of terminology is mostly  a matter of taste here, since  in quantum theory the two definitions coincide and single out set of projective quantum measurements.



\section{Axioms}\label{sec:ax}
Here we present three requirements about spiky measurements.    These three requirements are satisfied by both classical and quantum theory and imply the validity of Causality and   Local Orthogonality. 

\subsection{Measurement Purification}
 Measurement Purification is the statement  that every measurement can be reduced  to a  spiky measurement performed jointly on the system and on an environment:

\begin{axiom}[Measurement Purification]\label{mpuri}
For every system $S$ and for every  measurement on system $S$---say $\st m = \{m_y\}_{y\in\mathsf Y}$---there is another system $E$, a state $\sigma  \in  \St (E)$, and a spiky measurement $\st M  = \{  M_y\}_{y\in\set Y}$ such that
\begin{equation*}
\begin{aligned}
\Qcircuit @C=1em @R=.7em @! R {
   & \qw \poloFantasmaCn{S}   &  \measureD{m_y}  }
 \end{aligned}   =
\begin{aligned}
\Qcircuit @C=1em @R=.7em @! R {
 &  & \qw \poloFantasmaCn{S}   &  \multimeasureD{1}{M_y}  \\
& \prepareC{\sigma}&  \qw \poloFantasmaCn{E}  &\ghost{M_y} }
 \end{aligned}   \qquad \forall y\in\set Y \,  .
 \end{equation*}
\end{axiom}
Roughly speaking, one can think of the above axiom as an operational version of Naimark's theorem  for finite dimensional quantum systems  \cite{Holevo11,heinosaari2011mathematical}, which states  that every quantum measurement can be dilated to a projective measurement performed jointly on the system and on an environment  \footnote{Note that Naimark's theorem also includes the fact that the state of the environment  is \emph{pure} and that the dilation is \emph{unique}, up to partial isometries.  These two additional facts are also important, but not for the purpose of the present paper.}.

The idea that arbitrary measurements can be reduced to ideal measurements by introducing an environment immediately reminds of the Purification Principle \cite{puri,deri,chiribella2012quantum,ChiribellaXiao13,chiribella14dilation}, which states that arbitrary states can be reduced to pure states by adding an environment.  In this sense, the spirit of this paper is akin  to the ``purification philosophy" of Refs.  \cite{puri,deri,chiribella2012quantum,ChiribellaXiao13,chiribella14dilation}  namely the idea that all physical processes  can be reduced to ideal processes by including additional systems into  the description. The interaction with an environment is a powerful structure also in the abstract framework of categorical quantum mechanics \cite{coecke2008axiomatic,coecke2010environment}, where it leads to an axiomatization of  Selinger's CPM construction \cite{selinger2007dagger}.

We now give an elementary consequence of measurement purification that will be useful later:
\begin{lemma}\label{piccolo} Let $\left\{\st m^{x}\right\}_{x\in\set X}$ be a finite set of measurements labelled by a parameter $x\in\set X$.    If measurement purification holds, then  there exists a system $E$, a state $\sigma \in\St(E)$,  and  set of spiky measurements  $\left\{\st M^{x}\right\}_{x\in\set X}$ such that
\begin{equation*}
\begin{aligned}
\Qcircuit @C=1em @R=.7em @! R {
   & \qw \poloFantasmaCn{S}   &  \measureD{m^x_y}  }
 \end{aligned}   =
\begin{aligned}
\Qcircuit @C=1em @R=.7em @! R {
 &  & \qw \poloFantasmaCn{S}   &  \multimeasureD{1}{M^x_y}  \\
& \prepareC{\sigma}&  \qw \poloFantasmaCn{E}  &\ghost{M^x_y} }
 \end{aligned}   \qquad \forall y\in\set Y \,  \forall x\in\set X  \,  .
 \end{equation*}
\end{lemma}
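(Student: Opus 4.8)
The plan is to reduce the lemma to a single application of Measurement Purification (Axiom \ref{mpuri}) \emph{per measurement}, followed by a tensor-product bookkeeping argument that forces all of the resulting dilations to share the \emph{same} ancillary system and the \emph{same} ancillary state. First I would apply Axiom \ref{mpuri} separately to each of the finitely many measurements $\st m^{x}$, obtaining for every $x\in\set X$ an ancillary system $E_x$, a state $\sigma_x\in\St(E_x)$ (which we may take to be deterministic, i.e.\ normalized), and a spiky measurement $\st M^{x}=\{M^{x}_y\}_{y\in\set Y}$ on $S\otimes E_x$ such that $(m^{x}_y|\rho)=(M^{x}_y|\rho\otimes\sigma_x)$ for every $\rho\in\St(S)$ and every $y$. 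The whole difficulty of the lemma lies in the fact that these dilations \emph{a priori} live on different ancillas $E_x$, whereas the statement demands a common $E$ and a common $\sigma$.

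Next I would glue the ancillas by setting $E:=\bigotimes_{x\in\set X}E_x$ and $\sigma:=\bigotimes_{x\in\set X}\sigma_x\in\St(E)$, which is legitimate precisely because $\set X$ is finite. For each fixed $x$ I then have to promote $\st M^{x}$, currently a measurement on $S\otimes E_x$, to a spiky measurement on the full system $S\otimes E$ that ignores the spectator factors $E_{x'}$, $x'\neq x$. Concretely I would refine $\st M^{x}$ into the pure orthogonal measurement $\st a^{x}=\{a^{x}_z\}_{z\in\set Z^{x}}$ witnessing its spikiness (so $M^{x}_y=\sum_{z\in\set Z^{x}_y}a^{x}_z$ for the defining partition), choose a pure orthogonal measurement $\{b^{x'}_w\}_{w}$ on each spectator $E_{x'}$, and form the product family $\{\,a^{x}_z\otimes\bigotimes_{x'\neq x}b^{x'}_{w_{x'}}\,\}$ on $S\otimes E$. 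Coarse-graining this family by sending an outcome to $y$ whenever its $S\otimes E_x$-component $z$ lies in $\set Z^{x}_y$, while summing freely over the spectator outcomes, produces effects $\hat M^{x}_y=M^{x}_y\otimes\bigotimes_{x'\neq x}e^{x'}$, where $e^{x'}:=\sum_{w}b^{x'}_w$ is a deterministic effect on $E_{x'}$.

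Verifying that these reproduce the original statistics is then routine. Evaluating $\hat M^{x}_y$ on a product input factorizes, giving $(\hat M^{x}_y|\rho\otimes\sigma)=(M^{x}_y|\rho\otimes\sigma_x)\prod_{x'\neq x}(e^{x'}|\sigma_{x'})$; each spectator factor equals $1$ because $\{b^{x'}_w\}_w$ is a measurement and $\sigma_{x'}$ is normalized, so $(\hat M^{x}_y|\rho\otimes\sigma)=(m^{x}_y|\rho)$ for all $\rho\in\St(S)$. Since states separate effects, this effect identity extends by linearity to arbitrary (also multipartite) inputs, which is exactly the diagrammatic dilation asserted by the lemma once $\hat{\st M}^{x}$ is relabelled as $\st M^{x}$.

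The step I expect to be the genuine obstacle is showing that each extended measurement $\hat{\st M}^{x}$ is still \emph{spiky}, i.e.\ that the product family $\{\,a^{x}_z\otimes\bigotimes_{x'\neq x}b^{x'}_{w_{x'}}\,\}$ is itself a \emph{pure orthogonal} measurement on $S\otimes E$. Orthogonality is immediate: if $\{\alpha^{x}_z\}$ and $\{\beta^{x'}_w\}$ are the states biorthogonal to the respective pure orthogonal measurements, then the product states $\alpha^{x}_z\otimes\bigotimes_{x'\neq x}\beta^{x'}_{w_{x'}}$ are biorthogonal to the product effects, the Kronecker deltas simply multiplying out on product inputs. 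The delicate point is \emph{purity}: in a general operational-probabilistic theory a tensor product of pure effects need not be pure, so the purity of $a^{x}_z\otimes\bigotimes_{x'\neq x}b^{x'}_{w_{x'}}$ does not follow from Measurement Purification by itself. This is precisely the closure property supplied by the Locality of Pure Orthogonal Measurements principle, and under that principle (or, trivially, in classical and quantum theory, where products of pure orthogonal measurements are again pure orthogonal) the argument closes. I would also note that the existence of a pure orthogonal measurement on each spectator $E_{x'}$, needed above, can be secured by applying Axiom \ref{mpuri} to a trivial measurement on $E_{x'}$ and absorbing the extra ancilla it produces into $E$.
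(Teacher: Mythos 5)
Your proposal follows essentially the same route as the paper's proof in \ref{app:mpuri}: purify each $\st m^x$ separately via Axiom \ref{mpuri}, tensor the finitely many ancillas $E_x$ and states $\sigma_x$ together, and extend each dilated measurement to $S\otimes E$ by padding with a deterministic effect on the spectator factors. The one place where you go beyond the paper is in worrying whether the padded measurement is still \emph{spiky}: the paper simply declares $N^x_y := M^x_y\otimes\bigotimes_{x'\neq x}(u_{x'}|$ to be spiky, whereas you correctly observe that exhibiting it as a coarse-graining of a pure orthogonal measurement on $S\otimes E$ requires the product of pure orthogonal effects to be pure, which is exactly the content of Axiom \ref{loca} and does not follow from Measurement Purification alone. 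This is a fair criticism of the lemma as stated (its hypothesis mentions only Measurement Purification); in every application the lemma is invoked alongside Axiom \ref{loca}, so nothing downstream breaks, but your version makes the dependence explicit where the paper leaves it tacit. Your auxiliary device of obtaining a pure orthogonal measurement on each spectator by purifying its trivial measurement and absorbing the extra ancilla is also sound and fills in a detail the paper skips.
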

The proof is provided in  \ref{app:mpuri}.  Compared to  the Measurement Purification axiom, the above lemma only adds the fact that the system $E$ and the  state $\sigma \in\St (E)$ can be chosen to be independent of the setting $x\in\set X$,  the dependence on the setting being only in the orthogonal measurement $\st M^{x}$.

\subsection{Locality of Pure Orthogonal Measurements}

Measurement Purification can be interpreted as the statement  that, at the most fundamental level, measurements are generated by the coarse-graining of pure orthogonal measurements.  If one pushes this requirement further, it is natural to ask that the product of two pure orthogonal measurements is a pure orthogonal measurement on the composite system:
\begin{axiom}[Locality of Pure Orthogonal Measurements]\label{loca}
If  $\st m  =\{m_x\}$ and $\st n =  \{n_y\}$ are pure orthogonal measurements on two systems $A$ and $B$, respectively, then their product $\st m\otimes \st n  =  \{  m_x\otimes n_y\}$ is a pure orthogonal measurement on the composite system $A\otimes B$.
\end{axiom}


Two comments are in order:
\begin{enumerate}
\item Locality of Pure Orthogonal Measurement may superficially look as a  consequence of the definition. And in part it is:  clearly, the product of two orthogonal measurements is orthogonal.  The part that does not follow from the definition is the purity of the  product effects. 
This condition would be guaranteed by Local Tomography  \cite{hardy01,mauro2, barrett07}, which is \emph{not} assumed here.  The Locality of Pure Orthogonal Measurements is much weaker condition than Local Tomography: for example, it is satisfied by Quantum Theory on real Hilbert spaces \cite{stueckelberg1960quantum,hardy2012limited,wootters2013optimal}, a well-known example of theory wherein  Local Tomography fails to hold.    
\item If one postulates Measurement Purification, then it is  natural to assume the Locality of Pure Orthogonal Measurements  as well.  Indeed,  suppose that two distant parties, Alice and Bob, perform two pure orthogonal measurements   $\st m$  and $\st n$ on their systems.  By Measurement  Purification, we know that the product measurement  $\st m\otimes \st n$  can be reduced to a pure orthogonal measurement---call it  $\st M$---performed jointly on $A$, $B$, and an environment $E$.  If the measurement $\st M$ could not be chosen of the product form, it would mean that the measurements that are performed independently by Alice and Bob would require some nonlocal interaction at the fundamental level.  
\end{enumerate}

\subsection{Sufficient Orthogonality}\label{sec:sufforth}

Here we introduce \emph{Sufficient Orthogonality (SO)}, a structural property of the measurements allowed by the theory.   We do not attach a particular operational meaning to this property, e.g.  we do not argue that this should be a fundamental principle of physics.  Nevertheless, we show that SO plays a  key role,  allowing  one to derive  LO and CE.   We view  SO as an intermediate step, which   can  be used to reduce LO and CE to other, more fundamental features of physical processes,   such as the \emph{Strong No Disturbance Without Information} principle discussed in Section \ref{sec:derSO}.


\begin{axiom}[Sufficient Orthogonality]\label{suffo}
Every set of pure orthogonal effects can coexist in a measurement, i.e.   for every set of pure orthogonal effects  $\{a_y\}_{y\in\set Y}$    there exists a measurement   $\st m$ such that $\{a_y\}_{y\in\set Y}  \subseteq \st m$. 
\end{axiom}


In the statement of SO it is essential that the effects $\{p_y\}$ are \emph{pure}, otherwise one can find counterexamples even in classical and quantum theory.  For example, the non-pure effects $P_1  =   |0\rangle\langle 0|  +      |1\rangle\langle 1| $ and     $P_2  =   |0\rangle\langle 0|  +      |2\rangle\langle 2| $ are orthogonal, because $\Tr[P_i \,\rho_j ]  =  \delta_{i,j}$ for $i,j=  1,2$ and $\rho_j  = |j\rangle\langle j|$.    However,  they cannot coexist in a quantum measurement, since $P_1+P_2  >  I$.

SO is  satisfied by both  classical and quantum probability theory, where a set of pure orthogonal effects is a set of rank-one projectors $\{P_y\}_{y\in\mathsf Y}$ satisfying $\Tr[P_y \, P_{y'}]  =  \delta_{y,y'}$.    Interestingly, SO is violated by the square bit of Example \ref{squit}.
More generally,  SO is violated by all systems whose state space is a regular polygon of $n>3$ vertices: 

\begin{example}[Regular polygons  \cite{janotta11}]
Consider an hypothetical physical system  $S_n$ whose deterministic states   $\St_1 (S_n)$ form a regular polygon of $n $ vertices.     The vertices of the polygon are the pure states and can be  represented by the real vectors
 \begin{align*}
|\varphi_y) = \left(
\begin{array}{c}  r_n\cos  \frac{2\pi y}  n \\
\\
   r_n\sin \frac{ 2\pi y}n   \\\\
    1
    \end{array}
    \right)    \qquad     r_n := \sqrt{\frac{1}{\cos{\left(\pi/n\right)}}}   \, ,
 \end{align*}
 for $y  =  0,1,\dots, n-1$.   Effects are also represented as real vectors, and the probability of an effect on a state is given by the scalar product.   The unit effect, which has probability $1$ on every pure state, is represented as
  \begin{align*}
(u|   =   &\left(  \begin{array}{c}
 0 \\
  0  \\
  1
  \end{array}
  \right).\\
 \end{align*}
For the measurements, one typically assumes the \emph{no-restriction hypothesis} (cf. Ref. \cite{puri}, definition 16,  and   Ref. \cite{Janotta13}, section III), which  states  that  all collections of positive affine functionals summing up to the unit  represent allowed measurements \footnote{We do not specify here the full OPT in which the regular polygon is included. In general,  it is easy to include a given system $S$, its  state space, and its  set of allowed measurements  into a full-blown OPT.    For example, one can consider the OPT where all systems consists of multiple copies of  $S$, the states are  product states (or convex combination thereof), the measurements are product measurements (or convex combinations thereof), and the general tests are of the measure-and-prepare form. Unless one imposes  additional physical constraints, the specification of the OPT in which a given system can be embedded is highly non unique.  Still, such a specification is irrelevant for the scopes of the present example.   }.   Under this hypothesis, one has the pure effects
 \begin{align*}
(a_y|   =   &\frac{1}{2} \left(  \begin{array}{c}
 r_n\cos  \frac{  (2y-1)\pi} n  \\  \\
  r_n\sin  \frac{(2y-1)\pi}n   \\ \\
  1
  \end{array}
  \right)
  \end{align*}
 for even $n$ and
 \begin{align*}
   (a_y|   =   &\frac{1}{r_n^2+1} \left(  \begin{array}{c}
 r_n\cos \frac{2y\pi} n  \\  \\
  r_n\sin  \frac{ 2y\pi}n   \\ \\
  1
  \end{array}
  \right)
  \end{align*}
    for odd $n$.    Note that for every $y$ one has   $(a_y|  \varphi_y)  =1$.

\medskip

For $n=3$, the effects $\{a_y\}_{y=1}^3$ can coexist in a measurement, and, therefore, all the three pure states are perfectly distinguishable. This is not surprising, because the triangle is a simplex, simplices represent the states of classical systems, and classical systems satisfy SO.

We now show that the triangle is the \emph{only} regular polygon satisfying SO.
Let us start from the case of even $n$. Here, the inner product between a  pure effect and a pure state is  given by
  \begin{align*}
(a_y|\varphi_{y'}) = \frac{1}{2}\left\{r_n^2\cos\left[\frac{(2y-2y'-1)\pi}{n}\right]+1\right\}
 \end{align*}
 and it is immediate to check that one has
 \begin{align*}
 &(a_y|\varphi_{y})=  (   a_{ y  \oplus 1 }|\varphi_y)   =   1  \\
 &(a_{y\oplus  \frac n2  }|\varphi_y)  =   (a_{y\oplus   \frac n2   \oplus 1 }|\varphi_y)    =  0   \qquad \forall y\in\{0,1,\dots , n-1\}
 \end{align*}
 where $\oplus$ denotes the addition modulo $n$.    Clearly, the pure effects $\left\{a_j,a_{j\oplus\frac n 2\oplus 1}\right\}$ are orthogonal, since they are biorthogonal  to the  pure states $\left\{\varphi_j,\varphi_{j\oplus \frac n2}\right\}$.     However, they cannot coexist in a measurement: by absurd, if they coexisted in a measurement, the total probability of the measurement outcomes on the state $|\varphi_{j\oplus \frac n2+1})$ would exceed one:
 \begin{align*}
 (a_j|\varphi_{j\oplus \frac n2\oplus 1})+(a_{j\oplus \frac n2\oplus 1}|\varphi_{j\oplus \frac n2\oplus 1})     =    (a_j|\varphi_{j\oplus \frac n2\oplus 1})+ 1
   >  1  \, .
 \end{align*}
 Hence, every polygon with even number of vertices violates SO.

For odd  number of vertices,  the inner products of a pure effect with a pure state is
  \begin{align}\label{forodd}
(a_y|\varphi_{y'}) = \frac{1}{r_n^2+1}\left\{r_n^2\cos\left[\frac{(2y-2y')\pi}{n}\right]+1\right\}
 \end{align}
and   one has
\begin{align*}
 &(a_y|\varphi_{y})  =   1  \\
 &(a_{y\oplus  \frac {n-1}2  }|\varphi_y)  =   (a_{y\oplus   \frac {n+1}2    }|\varphi_y)    =  0   \qquad \forall y\in\{0,1,\dots , n-1\} \, .
 \end{align*}
Clearly, the two effects $\left\{a_y,a_{y\oplus \frac{n+1}2}\right\}$ are orthogonal, as they are biorthogonal to the states $\left\{\varphi_y,\varphi_{y\oplus \frac{n+1} 2}\right\}$.    However, they cannot coexist in a measurement, because the sum of their probabilities on the state $|\varphi_{y\ominus 1})$ exceeds one, as shown in \ref{app:trigo}.   In summary, the only regular polygon  compatible with SO is the triangle, representing a three-level classical system.
\end{example}

\section{Local Orthogonality}\label{sec:lo}

Here we briefly discuss  LO,  a requirement on  the conditional probability distributions $p(\st y |\st x)$ generated by $N$ players of a non-local game.
To state the requirement, it is handy to introduce a notation for the output/input pairs $\mathbf{e} = (\st x,\st  y)$, which will be called \emph{events}.

\begin{df}[Locally orthogonal events]
Two events $\mathbf e =(\mathbf x,\mathbf y)$ and $\mathbf e'  =(\mathbf x',\mathbf y')$ are  \emph{locally orthogonal}, denoted as $\st e  \perp \st e'$, if there exists at least one party $i$ such that $x_i =  x_i'$ and $y_i\not=y_i'$.
A set  of events $\set O$ is \emph{locally orthogonal} if every two elements in $\set O$ are locally orthogonal.
\end{df}


\medskip

For an event $\st e  =  (\st x,\st y)$, we use the notation $p(\st e):  =  p(\st y|\st x)$. With this notation, LO is defined as follows
\begin{df}[Local Orthogonality  \cite{Fritz12,acin2012}]
A conditional probability distribution $p(\mathbf y|\mathbf x)$ satisfies {\em Local Orthogonality} iff one has
\begin{align}\label{lo}
\sum_{\st e  \in  \set O}  p  ( \st e)   \le 1 
\end{align}
for every locally orthogonal set  $\set O$.
   A theory  satisfies LO iff every probability distribution generated as in Eq. (\ref{physprob1}) satisfies LO.
\end{df}
In a bipartite setting, LO is equivalent to No-Signalling \cite{Fritz12,acin2012}. LO comes to its own in the multipartite setting, where Eq. (\ref{lo}) is more restrictive  than the No-Signalling condition.   In  a theory satisfying  LO, 
 the maximum payoff achievable by the players of a generic game is upper bounded as
\begin{equation}\label{LObound}
\omega_{\mathbb T}  \le \omega_{LO} \, ,
\end{equation}
where $\omega_{LO}$  denotes the maximum of the payoff  $\omega$ in Eq.~\eqref{payoff} over \emph{all} probability distributions $p(\st y|\st x)$ satisfying LO. 

Note that LO  has a slightly different flavour  from  other device-independent principles.  Indeed, principles like Nontrivial Communication Complexity, No Advantage in Nonlocal Computation and Information Causality are expressed as limitations about some distinguished information-theoretic task. Such  limitations   are subsequently  used to derive upper bounds on the payoffs of nonlocal games. Instead, LO is  \emph{defined} as  an upper bound on a payoff, as one can see by comparing the l.h.s. of Eq. (\ref{lo}) with the r.h.s. of  Eq. (\ref{payoff}).
The particular games that define the LO constraint have been characterized  in Ref. \cite{acin2012} and have been therein named \emph{maximally difficult guessing games}.       In this sense,  Eq. (\ref{LObound}) represents the upper bound on the payoff of a generic game under the condition that the payoff in  some privileged class of games is upper bounded as in Eq. (\ref{lo}).



LO  can be generalized to an infinite hierarchy of constraints \cite{Fritz12,acin2012}.  This is done as follows:  Suppose that the $N$ parties are given  $k$ copies of the black box generating  outputs according to the conditional probability distribution $p(\mathbf y|\mathbf x)$.  As a result, the overall input-output distribution will be given by
\begin{align*}
p^{\otimes k}(\mathbf y_1\mathbf y_2\dots\mathbf y_k|\mathbf x_1\mathbf x_2\dots\mathbf x_k) = p(\mathbf y_1|\mathbf x_1) \, p(\mathbf y_2|\mathbf x_2)\cdots p(\mathbf y_k|\mathbf x_k) \, .
\end{align*}
Defining the event $\st e_k = (\mathbf y_1\mathbf y_2\dots\mathbf y_k|\mathbf x_1\mathbf x_2\dots\mathbf x_k)$ and its probability $p^{\otimes k}(\st e_k): = p^{\otimes k}(\mathbf y_1\mathbf y_2\dots\mathbf y_k|\mathbf x_1\mathbf x_2\dots\mathbf x_k)$, one can formulate the $k$-th level of the LO hierarchy as 
\begin{df}[Local Orthogonality at the $k$-th level  \cite{Fritz12,acin2012}.]
A conditional probability distribution $p(\mathbf y|\mathbf x)$ \emph{satisfies LO} at the $k$-th level iff
\begin{align}\label{lok}
\sum_{\st e_k  \in  \set S_k}  p^{\otimes k}  ( \st e_k)   \le 1
\end{align}
for every locally orthogonal set  $\set S_k$.  A theory  satisfies LO at the $k$-the level iff every probability distribution generated as in Eq. (\ref{physprob1}) satisfies LO at the $k$-th level.
\end{df}
By increasing $k$, one gets more and more restrictive conditions on the probability distribution $p(\st y|\st x)$. For example, PR box correlations satisfy LO for $k=1$, but violate it  for $k \ge 2$ \cite{Fritz12}.

\section{Deriving Local Orthogonality and Causality}\label{sec:derivinglo}

We now  provide a derivation of LO  from  Measurement Purification, Locality of Pure Orthogonal Measurements, and Sufficient Orthogonality.  Since LO implies No-Signalling \cite{Fritz12,acin2012} and in our framework No-Signalling is equivalent to Causality (proposition \ref{prop:causnosig} of this paper),  our derivation of LO also amounts to a derivation of Causality.
The derivation consists of a few steps, discussed in the following paragraphs.


\subsection{Local Orthogonality for pure orthogonal measurements}

We start by showing the validity of LO for probability distributions generated by pure orthogonal measurements:
\begin{lemma}\label{lem:babyLO}
Let   $p(\st y|\st x)  =    \left (    m^{\st x}_{\st y} | \rho \right)$ be a set of probability distributions defined as in Eq.~\eqref{physprob1} with the product effect $m^{\st x}_{\st y}$ arising from a set of pure orthogonal measurements.    If the theory satisfies Locality of Pure Orthogonal Measurements and Sufficient Orthogonality,  then it $p(\st y|\st x)$ satisfies Local Orthogonality at all levels of the hierarchy.
\end{lemma}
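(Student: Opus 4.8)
The plan is to establish Local Orthogonality at the first level directly and then to reduce every higher level to the first, via the observation that $k$ copies of the black box are just a single instance of the same scenario played on a composite system. In both cases the engine is the same three-move argument: first, show that the product effects attached to a locally orthogonal set of events form a set of \emph{pure orthogonal} effects; second, invoke Sufficient Orthogonality to fit all of them into a single measurement; third, use the normalisation of that measurement on the deterministic state $\rho$ to bound the sum of probabilities by $1$.

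For the first level, I would fix a locally orthogonal set $\set O$ and use that each local measurement $\st m^{i,x_i}=\{m^{i,x_i}_{y_i}\}$ is pure orthogonal, hence admits biorthogonal states $\{\rho^{i,x_i}_{y_i}\}$ with $(m^{i,x_i}_{y_i}|\rho^{i,x_i}_{y_i'})=\delta_{y_i,y_i'}$. To each event $\st e=(\st x,\st y)\in\set O$ I associate the product state $\sigma_{\st e}:=\bigotimes_i\rho^{i,x_i}_{y_i}$, so that the pairing factorises. For $\st e=\st e'$ every factor equals $1$, giving $(m^{\st x}_{\st y}|\sigma_{\st e})=1$; for $\st e\neq\st e'$ locally orthogonal the witnessing party $i$ has $x_i=x_i'$ and $y_i\neq y_i'$, so the single factor $(m^{i,x_i}_{y_i}|\rho^{i,x_i}_{y_i'})=\delta_{y_i,y_i'}=0$ annihilates the whole product and $(m^{\st x}_{\st y}|\sigma_{\st e'})=0$. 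Thus $\{m^{\st x}_{\st y}\}_{\st e\in\set O}$ is biorthogonal to $\{\sigma_{\st e}\}_{\st e\in\set O}$, hence orthogonal; it is also pure, since iterating Locality of Pure Orthogonal Measurements makes $\bigotimes_i\st m^{i,x_i}$ a pure orthogonal measurement for each fixed $\st x$, so each effect $m^{\st x}_{\st y}$ is pure. Being a set of pure orthogonal effects, by Sufficient Orthogonality it sits inside some measurement $\st m=\{m_a\}_a$, and because $\rho$ is deterministic and probabilities are non-negative,
\begin{align*}
\sum_{\st e\in\set O}p(\st e)=\sum_{\st e\in\set O}(m^{\st x}_{\st y}|\rho)\le\sum_a(m_a|\rho)=1,
\end{align*}
which is LO at the first level.

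For the $k$-th level I would write $p^{\otimes k}(\st e_k)=\prod_{j=1}^k(m^{\st x_j}_{\st y_j}|\rho)=\bigl(\bigotimes_{j=1}^k m^{\st x_j}_{\st y_j}\,\big|\,\rho^{\otimes k}\bigr)$. After the appropriate reordering of tensor factors by party, this is exactly a first-level scenario for the $N$ parties in which party $i$ now holds the composite system $S_i^{\otimes k}$, receives the combined input $(x_{1,i},\dots,x_{k,i})$, and performs the measurement $\bigotimes_{j=1}^k\st m^{i,x_{j,i}}$, while the shared deterministic state is $\rho^{\otimes k}$. Iterating Locality of Pure Orthogonal Measurements makes each combined measurement again pure orthogonal, and the level-$k$ notion of local orthogonality is precisely the first-level notion for these composite parties. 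The first-level argument then applies verbatim and yields $\sum_{\st e_k\in\set S_k}p^{\otimes k}(\st e_k)\le1$ for every locally orthogonal $\set S_k$.

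I expect the main obstacle to be the first step: verifying that the product effects over a locally orthogonal set genuinely constitute a set of \emph{pure orthogonal} effects. The orthogonality part rests on the bookkeeping that a single witnessing party suffices to zero out the product—the remaining factors may well be nonzero, but one vanishing factor is enough—while the purity part does not follow from the product structure alone and is exactly the content supplied by Locality of Pure Orthogonal Measurements (the product of orthogonal measurements is automatically orthogonal, but its effects need not be pure without this axiom). A secondary point worth stating carefully is that the closing equality $\sum_a(m_a|\rho)=1$ uses only the normalisation of a measurement on a \emph{deterministic} state, not Causality, which is consistent with Causality being \emph{derived} from LO rather than assumed here.
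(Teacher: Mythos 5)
Your proposal is correct and follows essentially the same route as the paper's proof: construct biorthogonal product states for the events in the locally orthogonal set, use Locality of Pure Orthogonal Measurements for purity of the product effects and Sufficient Orthogonality to embed them in a single measurement, then bound the sum by the normalisation on the deterministic state. Your treatment of the $k$-th level (regrouping the $k$ copies into composite parties) is just a more explicit version of the paper's one-line remark that one replaces $\st x$ and $\st y$ by strings.
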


 \begin{proof}
Firstly, we consider the proof for the first level.   For every locally orthogonal set of events $  \set O  $, we have to prove the relation    $\sum_{(\st x,\st y)  \in\set O}  p(\st y |\st x)\le 1$. The proof runs as follows:
 First, consider an arbitrary  party $i$ and a fixed (but otherwise arbitrary)  input value $x_i$.
   By hypothesis,  the effects   $    \left\{   m^{x_i}_{y_i}\right\}_{y_i\in\mathsf Y_i}   $ are orthogonal, which means that there exists a set of  states $    \left\{  \rho^{x_i}_{y_i}\right\}_{y_i\in\mathsf Y_i}   \subset  \St (S_i)$  such that
 \begin{align}\label{aaa}
  \left(   m^{x_i}_{y_i}  |    \rho^{x_i}_{y'_i}\right)  =  \delta_{y_i,y_i'} \qquad \forall y_i,y_i' \in\set Y_i.
  \end{align}
 Now, define the product states
 \begin{align*}
 \left| \rho^{\st x}_{\st y}  \right)    &:  =  \left | \rho^{ x_1}_{y_1}  \right ) \dots   \left | \rho^{ x_N}_{y_N}  \right )
 \end{align*}
  and the product effects
 \begin{align*}
 \left ( m^{\st x}_{\st y}  \right|   & :  =  \left ( m^{x_1}_{y_1}  \right | \dots   \left ( m^{ x_N}_{y_N}  \right |  \, .
 \end{align*}
By the Locality of Pure Orthogonal Measurements, the effects $m^{\st x}_{\st y}$ are pure.
With this definition, if two events $(\st x,\st y)$ and $(\st x',\st y')$ are locally orthogonal, then one has
\begin{align*}
\left ( m^{\st x}_{\st y}  | \rho^{\st x'}_{\st y'}  \right)      =  0.
\end{align*}
By definition, this means that the pure  effects  $ \left \{ m^{\st x}_{\st y}  \right\}_{(\st x,\st y)\in  \set O} $  are orthogonal.  Invoking Sufficient Orthogonality,  we have  that there exists a measurement $ \st m $ such that    $ \left \{ m^{\st x}_{\st y}  \right\}_{(\st x,\st y)\in  \set O} \subseteq \st m $.  Using this fact we obtain
\begin{align*}
\sum_{(\st x,\st y)  \in\set O}    p(\st y|\st x)  &  =   \sum_{(\st x,\st y)  \in\set O}     \left (   m_{\st y}^{\st x}  |  \rho\right)  \\
  &  \le   \sum_{\st e  \in\widetilde{\set O}}     \left (   m_{  \st e }  |  \rho\right)  \\
  &  =1 \, ,
 \end{align*}
 where $\widetilde {\set O} $ denotes the set of all outcomes of the measurement $\st m$.  
 The above inequality concludes the proof of LO in the case when each party performs a pure orthogonal measurement on one subsystem of a composite system. 
  The  argument can be easily extended to prove LO at every level: in this case, one has  simply to replace $\st x$ and $\st y$ with  the strings $\st x_1\st x_2\dots\st x_k$ and $\st y_1\st y_2\dots\st y_k$, respectively.
 \end{proof}

\subsection{Local Orthogonality for generic measurements}
Having derived LO for pure orthogonal measurements, it is easy to extend the derivation to arbitrary measurements. The strategy is to extend the proof first to spiky measurements (by coarse-graining) and then to arbitrary measurements (by measurement purification). The first step is achieved by the following

\begin{lemma}\label{lem:multicoarse}
Let $  p(\st z|\st x)$  be a conditional probability distribution of the variable  $\st z \in\prod_{i=1}^N \set Z_i$  conditional to the variable $\st x\in \prod_{i=1}^N  \set X_i$.    
Let  $p(\st y|\st x)$ be the probability distribution resulting from local coarse-grainings of $p(\st z|\st x)$, that is,
\begin{align*}
p(\st y|\st x)   =     \sum_{\st z\in\prod_{i=1}^N \set Z_{y_i}}   p( \st z|\st x)  \qquad \forall \st y  \in  \prod_{i=1}^N  \set Y_i,
\end{align*}
where, for each $i$, $\{\set Z_{y_i}\}_{y_i\in\set Y_i}$ is a partition of $\set Z_i$ into disjoint subsets.    If the distributions $p(\st  z| \st x)$ satisfy LO, then    also  coarse-grained distributions $p(  \st y| \st x)$ satisfy LO.
\end{lemma}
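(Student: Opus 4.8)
The plan is to prove the claim by the natural \emph{refinement} argument: each coarse-grained event in a locally orthogonal set $\set O$ is expanded into the collection of fine-grained events that coarse-grain to it, and the union of all these refinements is shown to be itself a locally orthogonal set of fine-grained events, to which the hypothesis (LO for $p(\st z|\st x)$) then applies directly. Concretely, for a coarse event $\st e = (\st x,\st y)\in\set O$ I would define its refinement to be the set of fine events $(\st x,\st z)$ with $z_i\in\set Z_{y_i}$ for every party $i$; by the coarse-graining formula, $p(\st e)=p(\st y|\st x)$ equals the sum of $p(\st z|\st x)$ over exactly these refinements. Let $\widetilde{\set O}$ denote the union over all $\st e\in\set O$ of these refinements.

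First I would check that distinct coarse events in $\set O$ contribute \emph{disjoint} refinements, so that summing $p(\st e)$ over $\set O$ equals summing $p(\st z|\st x)$ over $\widetilde{\set O}$ with no double counting. If two coarse events differ in some input coordinate, their refinements carry different input strings and are trivially disjoint; if they share the input string but differ in some output, say $y_i\neq y_i'$, then $\set Z_{y_i}$ and $\set Z_{y_i'}$ are disjoint blocks of a partition, so the corresponding refinements differ in the $i$-th output coordinate. This yields the identity $\sum_{\st e\in\set O} p(\st e)=\sum_{(\st x,\st z)\in\widetilde{\set O}} p(\st z|\st x)$.

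The crux is then to verify that $\widetilde{\set O}$ is locally orthogonal. For a pair of distinct fine events in $\widetilde{\set O}$ there are two cases. If they refine two \emph{different} coarse events, local orthogonality of $\set O$ supplies a party $i$ with $x_i=x_i'$ and $y_i\neq y_i'$; since $z_i\in\set Z_{y_i}$ and $z_i'\in\set Z_{y_i'}$ lie in disjoint blocks, $z_i\neq z_i'$ while $x_i=x_i'$, as required. The case I expect to be the main (though minor) obstacle, because it is easy to overlook, is when the two fine events refine the \emph{same} coarse event: here they share the entire input string $\st x$, and being distinct they must differ in some output coordinate $z_i\neq z_i'$, which for that party again gives $x_i=x_i'$ and $z_i\neq z_i'$. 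Hence every pair in $\widetilde{\set O}$ is locally orthogonal.

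Applying the hypothesis that $p(\st z|\st x)$ satisfies LO to the locally orthogonal set $\widetilde{\set O}$ gives $\sum_{(\st x,\st z)\in\widetilde{\set O}} p(\st z|\st x)\le 1$, which by the no-double-counting identity equals $\sum_{\st e\in\set O} p(\st e)$, establishing LO for the coarse-grained distribution. The extension to every level of the hierarchy is routine: coarse-graining each of the $k$ copies turns $p^{\otimes k}$ in the fine variables into $p^{\otimes k}$ in the coarse variables, so exactly the same refinement argument applies with $\st x,\st y,\st z$ replaced by the length-$k$ strings $\st x_1\st x_2\cdots\st x_k$ and so on.
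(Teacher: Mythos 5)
Your proof is correct and complete. The paper itself omits the proof of this lemma, deferring to Section V of the cited work by Sainz et al., and your refinement argument is precisely the standard one used there: you correctly establish both the disjointness of the refinements (needed for the counting identity $\sum_{\st e\in\set O}p(\st e)=\sum_{\widetilde{\set O}}p(\st z|\st x)$ to hold without over-counting) and the local orthogonality of $\widetilde{\set O}$, including the easily overlooked case of two fine events refining the same coarse event.
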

We omit the proof of the lemma, which can be found in Section V of Ref. \cite{Sainz14exploring}.  

  An immediate corollary is the following:
\begin{cor}\label{corlo}
Let   $p(\st y|\st x)  =    \left (    m^{\st x}_{\st y} | \rho \right)$ be a set of probability distributions defined as in Eq.~\eqref{physprob1} with a set of spiky measurements.    If the theory satisfies  Locality of Pure Orthogonal Measurements and Sufficient Orthogonality,  then $p(\st y|\st x)$ satisfies LO at all levels of the hierarchy.
\end{cor}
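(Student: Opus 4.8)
The plan is to derive Corollary \ref{corlo} as an immediate consequence of Lemma \ref{lem:babyLO} and Lemma \ref{lem:multicoarse}, using the fact that a spiky measurement is by definition the coarse-graining of a pure orthogonal measurement. First I would invoke Definition \ref{def:spiky}: for each party $i$ and each input $x_i$, the spiky measurement $\st m^{i,x_i}$ is the coarse-graining of some pure orthogonal measurement $\st a^{i,x_i} = \{a^{i,x_i}_{z_i}\}_{z_i\in\set Z_i}$ along a partition $\{\set Z_{y_i}\}_{y_i\in\set Y_i}$ of $\set Z_i$, so that $(m^{i,x_i}_{y_i}| = \sum_{z_i\in\set Z_{y_i}} (a^{i,x_i}_{z_i}|$. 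Taking products over the parties, the product effect $m^{\st x}_{\st y}$ is the coarse-graining of the product effect $a^{\st x}_{\st z} := a^{1,x_1}_{z_1}\otimes\cdots\otimes a^{N,x_N}_{z_N}$ along the product partition.

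Next I would consider the refined distribution $p(\st z|\st x) := (a^{\st x}_{\st z}|\rho)$ arising from these pure orthogonal measurements. Since each $\st a^{i,x_i}$ is pure orthogonal, Lemma \ref{lem:babyLO} applies directly and guarantees that $p(\st z|\st x)$ satisfies LO at all levels of the hierarchy, under the two hypotheses (Locality of Pure Orthogonal Measurements and Sufficient Orthogonality) that are exactly the hypotheses of the corollary. The original distribution $p(\st y|\st x) = (m^{\st x}_{\st y}|\rho)$ is then obtained from $p(\st z|\st x)$ by local coarse-graining, precisely because the product effect coarse-graining factorizes party by party: $p(\st y|\st x) = \sum_{\st z\in\prod_i \set Z_{y_i}} p(\st z|\st x)$, which is the hypothesis of Lemma \ref{lem:multicoarse}.

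Applying Lemma \ref{lem:multicoarse} then transfers LO from the refined distribution $p(\st z|\st x)$ to the coarse-grained distribution $p(\st y|\st x)$, completing the derivation. To cover all levels of the hierarchy simultaneously, I would observe that coarse-graining commutes with taking tensor copies of the black box: $p^{\otimes k}(\st y_1\cdots\st y_k|\st x_1\cdots\st x_k)$ is obtained from $p^{\otimes k}(\st z_1\cdots\st z_k|\st x_1\cdots\st x_k)$ by a local coarse-graining on each of the $k$ copies, so the single-level statement of Lemma \ref{lem:multicoarse}, applied at level $k$, yields LO at level $k$ for every $k$.

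The only genuine point requiring care—rather than a true obstacle—is the bookkeeping that the product coarse-graining really does factor into local coarse-grainings, so that Lemma \ref{lem:multicoarse} is applicable with its product-partition hypothesis. Since Lemma \ref{lem:babyLO} already establishes the pure-orthogonal case at all levels, and Lemma \ref{lem:multicoarse} handles the passage through coarse-graining (including across multiple copies), no new inequality or estimate is needed; the corollary is essentially a composition of the two lemmas, and I expect the proof to be short.
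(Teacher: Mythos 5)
Your proposal is correct and follows essentially the same route the paper intends: the corollary is obtained by combining Lemma \ref{lem:babyLO} (LO for the underlying pure orthogonal measurements) with Lemma \ref{lem:multicoarse} (LO is preserved under local coarse-graining), using Definition \ref{def:spiky} to exhibit each spiky measurement as a local coarse-graining of a pure orthogonal one. Your extra remark that coarse-graining commutes with taking tensor copies, so that the argument extends to every level of the hierarchy, is exactly the bookkeeping the paper leaves implicit when it calls this an ``immediate corollary.''
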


Combining this observation with Measurement Purification, one can prove the desired result:
\begin{theorem}\label{theo:lo}
Every theory that satisfies Axioms \ref{mpuri}, \ref{loca}, and \ref{suffo} must  satisfy LO   at every level of the hierarchy.
\end{theorem}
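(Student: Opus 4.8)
The plan is to reduce the statement about arbitrary measurements to the already-established Corollary \ref{corlo} by using Measurement Purification to dilate each player's generic measurement to a spiky measurement on an enlarged system. The key observation is that the probability distribution $p(\st y|\st x)$ generated by local generic measurements on a shared state $\rho$ can be rewritten as a distribution generated by local \emph{spiky} measurements on a suitably enlarged shared state, and for such distributions LO at every level is already guaranteed by Corollary \ref{corlo}.

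First I would invoke Lemma \ref{piccolo} rather than the bare Axiom \ref{mpuri}, once separately for each party $i$. For each $i$, the finite family of measurements $\{\st m^{i,x_i}\}_{x_i\in\set X_i}$ can be dilated \emph{simultaneously}: there is an environment system $E_i$, a fixed state $\sigma_i\in\St(E_i)$ independent of the input $x_i$, and a family of spiky measurements $\st M^{i,x_i}=\{M^{i,x_i}_{y_i}\}_{y_i\in\set Y_i}$ on $S_i\otimes E_i$ such that
\begin{align*}
\left(m^{i,x_i}_{y_i}\right| = \left(M^{i,x_i}_{y_i}\right|\bigl(\map I_{S_i}\otimes|\sigma_i)\bigr) \qquad \forall y_i\in\set Y_i,\ \forall x_i\in\set X_i.
\end{align*}
The fact that $\sigma_i$ does not depend on $x_i$ is exactly what makes the dilation compatible with the nonlocal-game structure: the environment can be prepared \emph{before} the referee's question arrives, so that each player still performs a single local operation conditioned on $x_i$ on their enlarged system $S_i\otimes E_i$.

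Next I would define the enlarged shared state $\rho'\in\St\!\left(\bigotimes_{i=1}^N (S_i\otimes E_i)\right)$ by appending the independent environments, i.e. $|\rho') := |\rho)\otimes|\sigma_1)\otimes\cdots\otimes|\sigma_N)$ (up to reordering of wires, which is harmless by the symmetric monoidal structure). Substituting the dilation into Eq.~\eqref{physprob1} and using that the effects and environment preparations factor across the parties, one obtains
\begin{align*}
p(\st y|\st x) = \left(m^{\st x}_{\st y}\,\middle|\,\rho\right) = \left(M^{\st x}_{\st y}\,\middle|\,\rho'\right),
\end{align*}
where $M^{\st x}_{\st y} := M^{1,x_1}_{y_1}\otimes\cdots\otimes M^{N,x_N}_{y_N}$ is a product of spiky measurements. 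This exhibits $p(\st y|\st x)$ as a distribution of the form of Eq.~\eqref{physprob1} generated by local spiky measurements on the deterministic state $\rho'$, so Corollary \ref{corlo} applies directly and yields LO at every level of the hierarchy.

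The step I expect to require the most care is the bookkeeping of wire reorderings and the factorization of the environment preparations across the tensor product, to make rigorous that $(m^{\st x}_{\st y}|\rho) = (M^{\st x}_{\st y}|\rho')$; this is conceptually routine in a symmetric monoidal theory but must be stated cleanly. A subtler point worth flagging is that Corollary \ref{corlo} presumes the hypotheses of Locality of Pure Orthogonal Measurements and Sufficient Orthogonality on the theory as a whole, which indeed hold by assumption (Axioms \ref{loca} and \ref{suffo}); one should confirm that the spiky measurements $\st M^{i,x_i}$ produced by Lemma \ref{piccolo} are genuine spiky measurements of the theory, so that the corollary is legitimately invoked. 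Granting this, the derivation of LO is immediate, and since LO implies No-Signalling, which by Proposition \ref{prop:causnosig} is equivalent to Causality, the same argument simultaneously establishes Causality as a byproduct.
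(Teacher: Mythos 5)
Your proposal is correct and follows essentially the same route as the paper's own proof: both invoke Lemma \ref{piccolo} to dilate each party's family of measurements to spiky measurements on $S_i\otimes E_i$ with an input-independent environment state, rewrite $p(\st y|\st x)$ as $\left(M^{\st x}_{\st y}\middle|\rho\otimes\sigma_1\otimes\cdots\otimes\sigma_N\right)$ up to a harmless reordering of wires, and then apply Corollary \ref{corlo}. The additional points you flag (the wire bookkeeping and the legitimacy of invoking the corollary) are exactly the caveats the paper itself notes in passing, so no gap remains.
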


\begin{proof}  Let us start from the first level of the  hierarchy.
Let $p(\st y|\st x)$ be an arbitrary probability distribution arising from local measurements $\st m^{x_i}$ as in Eq. (\ref{physprob1}).
For  every  party $i$, use lemma \ref{piccolo} to  represent the measurement $\st m^{x_i}$  as
\begin{align*}
\left  (m^{x_i}_{y_i}  \right|       =    \left (  M^{x_i}_{y_i} \right|     [   \map I_{S}  \otimes |\sigma_{i})  ]
\end{align*}
for some spiky  measurement $    \st M^{x_i}  $ on $S_i \otimes E_{i}$ and for some  state $\sigma_{i}  \in\St (E_i)$.
Now, by construction the conditional probability distribution is equal to
  \begin{align*}
p(\st y| \st x)    =   \left (   M^{\st x}_{\st y}  |   \sigma \right)
  \end{align*}
  where
  \begin{align*}
  \left (M^{\st x}_{\st y}  \right|   & :=        \left (M^{x_1}_{ y_1} \right|   \dots     \left (M^{ x_N}_{ y_N} \right | \\
    |\sigma  ) &   :  =  |\rho )  | \sigma_1  )  \dots |\sigma_N)
  \end{align*}
  (with a little abuse of notation, consisting in the fact that the systems are ordered as  $S_1E_1S_2E_2\cdots S_N E_N$ in the expression  of the effect $M^{\st x}_{\st y}$ and as \newline $S_1S_2\dots S_N  E_1E_2,\dots  E_N$ in the expression of the state $\sigma$).
By Corollary \ref{corlo} we conclude that $p(\st y|\st x)$ satisfies LO at every level of the hierarchy.

\end{proof}

\subsection{Deriving Causality}

Since LO implies No-Signalling \cite{Fritz12,acin2012}, we have just shown that
every   probability distribution generated by measurements in a theory satisfying Axioms 1-3 satisfies No-Signalling.
 Under the minimalistic Assumption \ref{ass:display}, Proposition \ref{prop:causnosig}  tells us that the theory must satisfy Causality:
\begin{cor}
If a theory satisfies Axioms 1-3 and  Assumption \ref{ass:display}, then the theory satisfies Causality.
\end{cor}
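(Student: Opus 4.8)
The plan is to obtain the corollary as a short composition of results already established, introducing no new computation of its own. The backbone is the chain \emph{Axioms 1--3} $\Rightarrow$ \emph{LO} $\Rightarrow$ \emph{No-Signalling} $\Rightarrow$ \emph{Causality}, where the first arrow is Theorem \ref{theo:lo}, the last arrow is Proposition \ref{prop:causnosig} read under Assumption \ref{ass:display}, and the middle arrow is the known implication from Local Orthogonality to No-Signalling \cite{Fritz12,acin2012}.

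Concretely, first I would invoke Theorem \ref{theo:lo}: because the theory satisfies Axioms \ref{mpuri}, \ref{loca}, and \ref{suffo}, every probability distribution $p(\st y|\st x)$ produced by local measurements as in Eq. (\ref{physprob}) obeys LO at every level of the hierarchy, and in particular at the first level, i.e. the inequality (\ref{lo}) holds for every locally orthogonal set of events. Second, I would deduce No-Signalling from this first-level constraint. Third, I would apply Proposition \ref{prop:causnosig}: since Assumption \ref{ass:display} is in force and every distribution of the physical form (\ref{physprob}) is No-Signalling, the equivalence asserted there forces the theory to be causal, which is exactly the claim.

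The only link that is not a verbatim quotation of an earlier statement is the middle arrow, so this is where I would spend the effort. The cleanest route is to reduce the multipartite situation to the bipartite one: for each bipartition of the $N$ parties into two disjoint groups, I would coarse-grain the inputs and outputs of each group into a single effective input/output, reducing (\ref{physprob}) to a bipartite distribution, for which LO is equivalent to the no-signalling conditions (\ref{nosi}) \cite{Fritz12,acin2012}. Applying the first-level inequality (\ref{lo}) to the appropriate two-element locally orthogonal sets then pins the relevant marginals to be independent of the discarded group's input, and ranging over all bipartitions recovers the full $N$-partite No-Signalling condition. The main (and genuinely minor) subtlety to watch is bookkeeping: I must make sure that the No-Signalling I derive is precisely the one hypothesized in Proposition \ref{prop:causnosig}, namely No-Signalling of \emph{all} distributions arising from local measurements as in Eq. (\ref{physprob}). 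This is automatic, since Theorem \ref{theo:lo} delivers LO exactly for that family of distributions, so no gap opens between the output of the Theorem and the input of the Proposition.
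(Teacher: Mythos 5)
Your proposal is correct and follows essentially the same route as the paper, which proves the corollary exactly by the chain Theorem \ref{theo:lo} $\Rightarrow$ LO $\Rightarrow$ No-Signalling (citing \cite{Fritz12,acin2012}) $\Rightarrow$ Causality via Proposition \ref{prop:causnosig} under Assumption \ref{ass:display}. The only difference is that you sketch a proof of the middle arrow (LO implies No-Signalling), which the paper simply cites; your bipartition argument for it is sound.
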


The fact that Causality follows from the axioms, rather than being assumed from the outset is a pretty remarkable fact.  Up to now, the only axiomatization of quantum theory that does not assume Causality from the outset  is Hardy's  2011 axiomatization  \cite{hardy11}. There, Causality is derived from an axiom called Sharpness, stating  that for every pure state there exists a unique effect that gives probability one on that state   and only on that state  \footnote{ Note that the Sharpness axiom by Hardy is slightly different from the Sharpness axiom used by Wilce in Ref.~\cite{wilce2012conjugates}}. 

It is worth stressing that, as per today, 
 only a few works  acknowledge Causality explicitly as an axiom \cite{puri,deri,hardy11,coecke2013causal,Chiribella14,coecke2014terminality,henson2014theory}, while
   most  works   assume Causality implicitly  as part of  the framework\footnote{For example,   Causality enters in the convex set framework \cite{barnum11}   in the moment when measurements are defined as decompositions of the order unit. }, see~e.~g.~\cite{barrett07, barnum11, dakic11, masanes11, masanes12, Pfister}. Recognizing Causality as an axiom is a good starting point to explore deviations from it,  thus developing an operational approach to quantum gravity and indefinite causal structure \cite{hardy2005probability,hardy2007towards,puri,chiribella2013quantum,oreshkov2012quantum,chiribella2011perfect}.

\section{Deriving Sufficient Orthogonality}\label{sec:derSO}

In the previous section we showed that Local Orthogonality and Causality can be obtained from three requirements on the structure of measurements. While the first two requirements (Measurement Purification and Locality of Pure Orthogonal Measurements) are physically well motivated, the third (Sufficient Orthogonality) sounds rather \emph{ad hoc}.
 Can one reduce it  to some other, better motivated axiom?   In this section we give a possible answer, which however, requires us to \emph{assume} Causality.

\subsection{No Disturbance Without Information}

Informally, the \emph{No Disturbance Without Information (NDWI)}  principle  states that if a measurement extracts no information about  a source, then the measurement can be implemented without disturbing the states in that source.    NDWI  appeared originally in the axiomatization work of Ref. \cite{deri}, where it was obtained as a consequence of the axioms  (cf. Corollary 10 of \cite{deri}).   Recently,   NDWI  has been  promoted to the rank of an axiom  by Pfister and Wehner \cite{Pfister}, who showed that  every discrete theory satisfying this requirement  must be classical.

In order to give the precise statement of NDWI, it is useful  to give some definitions.
Here by \emph{source} we mean a deterministic state $\rho$, considered as the average state  of an ensemble of signal states.  A   \emph{state in the source $\rho$} is a state that can be contained in a convex decomposition of $\rho$:
\begin{df}
Let $\rho$ and $\tau$ be two deterministic states of system $S$. We say that $\tau$ is \emph{in the source $\rho$} iff there exists a nonzero probability $p>  0$ and a state $\tau'  \in\St_1 (S)$ such that  $\rho  =   p  \tau  +  (1-p)\tau'$.
\end{df}
With this definition, a non-informative measurement is one  that gives the same statistics for all possible states in the source:
\begin{df}[Non-informative measurements]
Let $\st m$ be a measurement on system $S$, with outcomes in the set $\set Y$.   We say that the measurement $\st m$ \emph{does not extract information}   about the source $\rho\in\St_1 (S)$  iff  there exists a set of probabilities $\{p_y\}_{y\in\set Y}$ such that
\begin{align*}
\left(  m_y  |  \tau\right)  =    p_y  \qquad \quad   \forall y\in\set Y
\end{align*}
for every state $\tau$ in the source.
\end{df}
In other words, a measurement extracts no information about the state $\rho $  iff the probability of the outcome $y$ is the same for every  state in a convex  decomposition of $\rho$.   In Ref. \cite{Pfister} Pfister and Wehner   consider  the special case of non-informative measurements where the measurement gives an outcome with certainty, i.e. $p_{y_0}  = 1$ for a particular outcome $y_0$.

Let us specify what it means to realize a measurement without disturbing the states in a source:
 \begin{df}[Realization of a measurement]
Let $  \boldsymbol{\map T}  = \{  \map T_y\}_{\in \set Y}$ be a test of type $A\to B$ and let $u\in\Eff_1  (B)$ be a deterministic effect on system $B$.  The pair $(\boldsymbol{\map T},  u)$ is a \emph{realization of the measurement $\st m= \{m_y\}_{y\in\set Y}$} iff one has
\begin{align}\label{ndwi}
(  m_y  |   =    (u|  \map T_y  \qquad \forall y\in\set Y    \, .
\end{align}
\end{df}
\begin{df}[Non-disturbing test]
  A test  $\boldsymbol{\map T} $ of type $S\to S$ is \emph{non-disturbing} for the source $\rho$ iff  one has
\begin{align*}
 \sum_{y \in\set Y}    \map T_y   ~ |\tau  )  =      |\tau) , 
  \end{align*}
 for every state $\tau$ in the source.
 \end{df}

\begin{df}[Non-disturbing realization]
A measurement  $\st m$ on system $S$ admits a \emph{non-disturbing realization} for the source $\rho$ iff there exists a realization  of $\st m$, call it $(\boldsymbol {\map T},  u)$, such that $\boldsymbol {\map T}$ is non-disturbing for the source $\rho$. 
\end{df}

Using the above definitions, we can  give the precise statement of  NDWI:
  \begin{df}
 A theory satisfies \emph{No Disturbance Without Information (NDWI)} iff  every measurement $\st m$ that does not extract information  about the source $\rho$  has a  realization  $(\boldsymbol {\map T},  u)$ that is non-disturbing for this source. 
  \end{df}

\subsection{From Causality and NDWI to the joint distinguishability of orthogonal states}

Here we show that Causality and NDWI  imply that orthogonal states can be perfectly distinguished.  Although this fact may sound obvious (it is trivially true in Quantum Theory), its validity is far from obvious in a general physical theory.


\begin{theorem}[Orthogonal states are perfectly distinguishable]\label{theo:distinguishable}
In a convex theory \footnote{A ``convex theory" is defined as a theory where all the set of states, effects, and transformations are convex. So far, we never assumed convexity, and indeed such assumption is not part of the basic framework of OPTs.   Even in the present theorem, we will use convexity only in a minor way,  just to guarantee that we can mix with non-zero probabilities the states in a given set.  }  satisfying Causality and NDWI  orthogonal states  are  perfectly distinguishable.
\end{theorem}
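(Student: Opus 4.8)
The plan is to prove the statement by induction on the number $n = |\set Y|$ of states in an orthogonal family $\{\rho_y\}_{y\in\set Y}$, peeling off one state at each step by means of a non-disturbing instrument supplied by NDWI. Throughout, let $\{m_y\}_{y\in\set Y}$ be a family of effects biorthogonal to the states, so that $(m_y|\rho_{y'}) = \delta_{y,y'}$; such a family exists by Definition \ref{def:biort}, since the $\rho_y$ are orthogonal. The base case $n=1$ is trivial, the unit effect $u$ being a one-outcome distinguishing measurement. For the inductive step, relabel the outcomes so that $\set Y = \{1,2,\dots,n\}$ and, using convexity, form the source $\sigma := \frac{1}{n-1}\sum_{y=2}^n \rho_y$.

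First I would check that $\rho_1$ and $\sigma$ are orthogonal as a pair: the effect $m_1$ gives $(m_1|\rho_1) = 1$ and $(m_1|\sigma) = 0$, while the effect $f := \sum_{y=2}^n m_y$ gives $(f|\rho_1) = 0$ and $(f|\sigma) = 1$. By Proposition \ref{prop:orthotwo}, $\rho_1$ and $\sigma$ are then perfectly distinguishable, so there is a two-outcome measurement $\{e, u-e\}$ with $(e|\rho_1) = 1$ and $(e|\sigma) = 0$. Since probabilities are non-negative and each $\rho_y$ with $y\ge 2$ lies in the source $\sigma$, the condition $(e|\sigma) = 0$ forces $(e|\rho_y) = 0$, hence $(u-e|\rho_y) = 1$, for all $y\ge 2$. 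In particular the measurement $\{e,u-e\}$ returns the second outcome with certainty on every state in $\sigma$, so it does not extract information about this source.

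The crucial step is to invoke NDWI. Since $\{e,u-e\}$ is non-informative about $\sigma$, it admits a non-disturbing realization $(\boldsymbol{\map T}, u)$ for $\sigma$, with $\boldsymbol{\map T} = \{\map T_1,\map T_2\}$ of type $S\to S$, $(e| = (u|\map T_1$, $(u-e| = (u|\map T_2$, and $\sum_i \map T_i|\tau) = |\tau)$ for every $\tau$ in $\sigma$. I would then read off the action of $\boldsymbol{\map T}$ on the individual states. For $y\ge 2$ one has $(u|\map T_1|\rho_y) = (e|\rho_y) = 0$, and since in a causal theory a subnormalized state of zero norm must be the zero vector, this gives $\map T_1|\rho_y) = 0$ and, by non-disturbance, $\map T_2|\rho_y) = |\rho_y)$: the second branch leaves $\rho_2,\dots,\rho_n$ entirely undisturbed. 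Symmetrically, because $\map T_1+\map T_2$ preserves $u$, one has $(u|\map T_2|\rho_1) = (u|\rho_1) - (e|\rho_1) = 0$, so $\map T_2|\rho_1) = 0$. Now the set $\{\rho_y\}_{y=2}^n$ is orthogonal (biorthogonal to $\{m_y\}_{y=2}^n$), so the induction hypothesis furnishes a measurement $\{q_y\}_{y=2}^n$ with $(q_y|\rho_{y'}) = \delta_{y,y'}$ for $y,y'\ge 2$. Using Assumption \ref{ass:display2} I would then compose a conditional test: perform $\boldsymbol{\map T}$, output the label $1$ on the first outcome, and apply $\{q_y\}_{y=2}^n$ on the second outcome. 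This yields a demolition measurement $\{p_y\}_{y=1}^n$ with $p_1 := e$ and $(p_y| := (q_y|\map T_2$ for $y\ge 2$, which satisfies $(p_1|\rho_{y'}) = \delta_{1,y'}$, $(p_y|\rho_1) = (q_y|\map T_2|\rho_1) = 0$, and $(p_y|\rho_{y'}) = (q_y|\rho_{y'}) = \delta_{y,y'}$ for $y,y'\ge 2$; hence $(p_y|\rho_{y'}) = \delta_{y,y'}$ throughout. Moreover $\sum_{y\ge 2} p_y$ is the effect $(u|\map T_2 = u-e$, so $\sum_y p_y = e + (u-e) = u$ and $\{p_y\}$ is a genuine measurement perfectly distinguishing $\{\rho_y\}_{y\in\set Y}$.

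The main obstacle I anticipate lies in the bookkeeping around NDWI: the principle only guarantees non-disturbance on the states in the source $\sigma$, so it is essential to argue separately that the excluded state $\rho_1$ is annihilated in the second branch (via $\map T_2|\rho_1) = 0$) and that the zero-norm subnormalized states arising along the way are genuinely the zero vector -- a step relying on Causality through the fact that $(u|\cdot)$ is the norm on the state cone. The other delicate point is legitimizing the final conditional scheme as an allowed measurement, which is exactly what Assumption \ref{ass:display2} (part of the Causality package) guarantees, together with the verification that the constructed effects sum to $u$.
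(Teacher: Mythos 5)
Your proof is correct and follows essentially the same route as the paper's: the paper applies NDWI to each binary measurement $\{m_y,\,u-m_y\}$ and the source $\tfrac{1}{|\set Y|-1}\sum_{y'\neq y}\rho_{y'}$, derives the same annihilation/preservation relations from Causality, and chains the resulting binary tests sequentially, which is exactly what your induction unrolls to. The only slip is inessential: $f=\sum_{y\ge 2}m_y$ need not be a valid effect (sums of effects drawn from different measurements can exceed the unit), but you do not need it, since under Causality the pair $\{m_1,\,u-m_1\}$ is already a measurement biorthogonal to $\{\rho_1,\sigma\}$ and yields your $\{e,\,u-e\}$ directly.
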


\begin{proof}
Let $\{\rho_y\}_{y\in\set Y}$ be a set of orthogonal states.  By definition \ref{def:biort}, there exists  a set of  effects  $ \{m_y\}_{y\in \set Y} $ such that $(m_y|\rho_{y'})  = \delta_{y,y'}$ for every $y,y'\in\set Y$.  As a consequence, the measurement $ \st  m^{(y)} =  \{m_y, m_{ y}^\perp\}$, $m_{y}^\perp  : =  u-m_y$ does not extract information about  the source
\begin{align*}
\rho_y^\perp  :  =  \frac 1{ |\set Y|   -1}  \sum_{y'\not = y}  \rho_{y'}  \, .
\end{align*}
Indeed, the relations $(m_y|  \rho^\perp_{y})=0$  and $(m_y^\perp|  \rho_y^\perp)=1$ imply analog  relations
\begin{align}
\nonumber (  m_y  |  \tau)  &= 0 \\
\label{bastissima} (m_y^\perp|\tau) & = 1
\end{align}
for every state $\tau$ in the source $\rho_y^\perp$.

By the NDWI axiom, $\st m^{(y)}$ has a non-disturbing realization, given by two transformations  $\{\map T_{y},  \map T_y^\perp\}$ such that
\begin{align}
\nonumber (u|  \map T_y &  =  m_y \\
  (u|  \map T_y^\perp &  =  m_y^\perp \label{bastaa}
\end{align}
and $\left( \map T_y  +  \map T_y^\perp   \right)    |\tau)  =  |\tau)$ for every state $\tau$ in the source $\rho_y^\perp$.  In particular, we have
\begin{align}\label{basta}
\left( \map T_y  +  \map T_y^\perp   \right)  |\rho_{y'})   =   |\rho_{y'})  \qquad \forall y' \not =   y \, .
\end{align}
Applying the unit effect on both sides of Eq. (\ref{basta}) and using Eqs. (\ref{bastaa}) and (\ref{bastissima}) one obtains  $(u|    \map T_y |  \rho_{y'})  =  0$. By Causality, this relation implies $\map T_y  |\rho_{y'})   =    0$ \footnote{Causality implies that for every effect $a\in\Eff  (S)$, the two effects $\{a,u-a\}$ form a legitimate measurement.   Hence, the condition $(  u|   \rho)=  0$ implies $(a|\rho)  =  0$ for every effect,~i.~e.~$\rho=0$.
}.    Hence, Eq. (\ref{basta}) becomes
\begin{align}\label{aaaa}
\map T_y  |\rho_{y'})   &=     0  \\ 
\label{bbb}\map T_y^\perp  |\rho_{y'})& =     |\rho_{y'}  )      \qquad  \forall y'\in \set Y,y'\not = y
\end{align}
Note also that by construction we have $ (u|   \map T_y^\perp |  \rho_y)   =  (m_y^\perp|\rho_y)   =  0$, which implies
\begin{align}\label{ccc}
\map T_y^\perp |\rho_y  )  =   0 \, .
\end{align}
In addition,  we can also assume without loss of generality
\begin{align}\label{ddd}
\map T_{y}  |\rho_{y})& =     |\rho_{y}  )    \, ,
\end{align}
because we can always replace $\map T_y$ with $\map T_y'  :  =   |\rho_y)(u|  \map T_y $  \footnote{  The existence of the transformation $\map T_y'$ is guaranteed by Causality along with Assumption \ref{ass:display2}. Indeed,  one can perform the test $\{\map T_y,\map T_y^\perp\}$ and, conditionally on outcome  $y$, re-prepare the state $\rho_y$.}.

Summarizing Eqs. (\ref{aaaa}), (\ref{bbb}), (\ref{ccc}) and (\ref{ddd})   we have
\begin{align}
\nonumber
\map T_{y}   |\rho_{y'}  )  &=    \delta_{y,y'}    |\rho_y)  \\
\label{eee}
\map T_{y}^\perp  |  \rho_{y'})  &  =  (1  - \delta_{y,y'}  )    |\rho_{y'})  \qquad \forall y,y'\in\set Y \, .
\end{align}

Now, the test  $\{\map T_{y},  \map T_{y^\perp}\}$ allows one to discriminate between the state $\rho_y$ and all the other states   $\{\rho_{y'}\}_{y'\in\set S , y'\not = y}$  without introducing any disturbance.    Hence, one way to distinguish perfectly the states $\{\rho_y\}_{y\in\set Y}$ is to enumerate the elements of $\set Y$, say $\set Y  =  (y_1,\dots, y_N)$ and to apply the tests $\{ \map T_{y_n},  \map T_{y_n}^\perp\}$ one after the other.
The resulting test, denoted by $\{\map S_y\}_{y\in\set Y}$ will consist of the transformations
\begin{align*}
\nonumber \map S_{y_1}   & :=  \map T_{y_1} \\
\nonumber \map S_{y_2}   &:=  \map T_{y_2}  \map T_{y_1}^\perp   \\
\nonumber \map S_{y_3}       &: =  \map T_{y_3} \map T_{y_2}^\perp  \map T_{y_1}^\perp  \\
 \nonumber &\, ~~\vdots  \\
\nonumber \map S_{y_{N-1}}    &:=  \map T_{y_{N-1}} \map T_{y_{N-2}}^\perp\dots    \map T_{y_1}^\perp\\
\map S_{y_N}   &:=   \map T_{y_{N-1}}^\perp \map T_{y_{N-2}}^\perp\dots    \map T_{y_1}^\perp
\end{align*}
Clearly, Eq. (\ref{eee})  implies
$   \map S_{y_m}    |\rho_{y_n})    =    \delta_{m,n}    \,  |\rho_{y_n}) $.
Hence, the  states $\{\rho_y\}_{y\in\set Y}$ can be perfectly distinguished using the measurement $ \st m$ defined by $  ( m_y|  :  =   ( u|   \map S_y $, $\forall y\in\set Y$.

\end{proof}

Note the difference between the statement theorem \ref{theo:distinguishable} and the  statement that a set of pairwise distinguishable states are jointly distinguishable.    As we already observed, orthogonality [cf. definition (\ref{def:biort})] and pairwise distinguishability are different notions.   In general, it is not clear whether the joint distinguishability of pairwise distinguishable states follows from NDWI.

\subsection{Strong No Disturbance Without Information}

We now present a strengthened version of the NDWI axiom,  stating that, in addition to not disturbing the states in a given source, a non-informative measurement does not disturb the pure effects that occur with unit probability on those states:

\begin{df}[Strongly non-disturbing test]
 The test  $\boldsymbol{\map T}  =\{\map T_y\}_{y\in\set Y}$ is \emph{strongly non-disturbing} for the  source $\rho\in\St_1(S)$ iff
\begin{align*}
  \sum_{y \in\set Y}    \map T_y   ~ |\tau  )  =      |\tau)
  \end{align*}
for every  state   $\tau$ in the source  and
\begin{align*}
  \sum_{y \in\set Y}   (a|   \map T_y     =      (a|
  \end{align*}
  for every pure effect $  a$ such that $(a|\tau) =1$.
\end{df}
A strongly non-disturbing realization of a measurement is defined in the obvious way, as a realization in terms of a strongly non-disturbing test.     Using this definition, we can now state the strong version of the NDWI principle:

\addtocounter{axiom}{-1}
\renewcommand{\theaxiom}{\arabic{axiom}$'$}
\begin{axiom}[StrongNDWI]
Every measurement  that does not extract information about the  source $\rho$ has a strongly non-disturbing realization for this source.
  \end{axiom}
\renewcommand{\theaxiom}{\arabic{axiom}}

\subsection{Derivation of SO}
It is easy to prove that  Causality and StrongNDWI  imply Sufficient Orthogonality:
\begin{theorem}\label{theo:SO}
Every convex theory satisfying Causality and StrongNDWI must satisfy SO.
\end{theorem}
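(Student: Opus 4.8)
The plan is to reuse the sequential construction from the proof of Theorem \ref{theo:distinguishable}, but to keep track of what happens to the original \emph{pure} effects rather than only to the states, exploiting the extra strength supplied by StrongNDWI. Let $\{a_y\}_{y\in\set Y}$ be a set of pure orthogonal effects and let $\{\rho_y\}_{y\in\set Y}$ be a biorthogonal set of states, so that $(a_y|\rho_{y'}) = \delta_{y,y'}$. For each $y$, Causality guarantees that $\st m^{(y)} := \{a_y,\, u-a_y\}$ is a legitimate binary measurement, and since $(a_y|\rho_{y'}) = 0$ for $y'\neq y$ while effects are non-negative, $\st m^{(y)}$ extracts no information about the source $\rho_y^\perp := \tfrac{1}{|\set Y|-1}\sum_{y'\neq y}\rho_{y'}$ (here convexity is used to form the mixture). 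Invoking StrongNDWI, I obtain for each $y$ a strongly non-disturbing realization $\{\map T_y,\map T_y^\perp\}$ with $(u|\map T_y = a_y$ and $(u|\map T_y^\perp = u-a_y$. As a strongly non-disturbing realization is in particular non-disturbing, the argument of Theorem \ref{theo:distinguishable} applies verbatim and yields $\map T_y|\rho_{y'}) = \delta_{y,y'}|\rho_y)$; in particular $\map T_y|\rho_{y'}) = 0$ whenever $y'\neq y$.

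The heart of the proof is the claim that, for $j\neq m$,
\[
(a_{y_m}|\,\map T_{y_j}^\perp \;=\; (a_{y_m}| \, .
\]
To establish it I would note that $\rho_{y_m}$ belongs to the source $\rho_{y_j}^\perp$ and $(a_{y_m}|\rho_{y_m}) = 1$, so the strong non-disturbance condition applied to the pure effect $a_{y_m}$ gives $(a_{y_m}|(\map T_{y_j}+\map T_{y_j}^\perp) = (a_{y_m}|$. Writing $a_{y_m} = (a_{y_m}|\map T_{y_j} + (a_{y_m}|\map T_{y_j}^\perp$ as a sum of two effects and invoking the purity of $a_{y_m}$, both summands must be proportional to $a_{y_m}$, say $(a_{y_m}|\map T_{y_j} = p\,(a_{y_m}|$. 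Evaluating on $\rho_{y_m}$ and using $\map T_{y_j}|\rho_{y_m}) = 0$ forces $p=0$, whence $(a_{y_m}|\map T_{y_j} = 0$ and the claim follows.

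I expect this to be the main obstacle. The states $\{\rho_y\}$ are \emph{not} a separating set, so the sum-preservation furnished by StrongNDWI (an equality of effects, hence valid on all states) cannot be split into its two branches merely by testing on the $\rho_y$; what bridges the gap is precisely the purity of $a_{y_m}$, which converts the sum-preservation into branch-preservation. This explains, in retrospect, why SO must be restricted to pure effects.

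Finally I would assemble the sequential test exactly as in Theorem \ref{theo:distinguishable}: enumerating $\set Y = (y_1,\dots,y_N)$ and setting $\map S_{y_m} := \map T_{y_m}\map T_{y_{m-1}}^\perp\cdots\map T_{y_1}^\perp$ for $m=1,\dots,N$, together with the all-``no'' branch $\map R := \map T_{y_N}^\perp\cdots\map T_{y_1}^\perp$, Causality together with Assumption \ref{ass:display2} guarantees that $\{\map S_{y_1},\dots,\map S_{y_N},\map R\}$ is a legitimate test of type $S\to S$. Using $(u|\map T_{y_m} = a_{y_m}$ followed by repeated application of the claim to peel off each factor $\map T_{y_j}^\perp$ with $j<m$, I obtain $(u|\map S_{y_m} = (a_{y_m}|$. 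Composing the test with the unit effect on its output therefore yields the measurement $\{a_{y_1},\dots,a_{y_N},\,r\}$, with $r := (u|\map R$ and $\sum_m a_{y_m}+r = u$, which contains all the effects $\{a_y\}_{y\in\set Y}$. This is exactly the statement of SO.
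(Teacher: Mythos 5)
Your proposal is correct and follows essentially the same route as the paper's proof: the same non-informative binary measurements for the sources $\rho_y^\perp$, the same use of purity of $a_{y'}$ to split the StrongNDWI sum-preservation $(a_{y'}|(\map T_y+\map T_y^\perp)=(a_{y'}|$ into proportional branches with $p=0$, and the same sequential test with the extra ``rest'' outcome. The only (immaterial) difference is how you kill $p$: you evaluate on $\rho_{y_m}$ after first deducing $\map T_{y_j}|\rho_{y_m})=0$ from Causality, whereas the paper bounds $(a_{y'}|\map T_y|\rho_{y'})\le(u|\map T_y|\rho_{y'})=(a_y|\rho_{y'})=0$ directly.
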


\begin{proof}
Let $  \{a_y\}_{y\in\set Y}$ be a set of  pure orthogonal effects and let $ \{\rho_y\}_{y\in\set Y} $ a  set of states biorthogonal to it.   For these two sets, we follow  the construction of theorem \ref{theo:distinguishable}:
we consider the measurement $ \st  m^{y} =  \{a_y, a_y^\perp\}$, $a_y^\perp  : =  u-a_y$ and note that it is non-informative for the state
$$ \rho_y^\perp     =  \frac 1 {|\set Y|  -1}   \sum_{y'\not  =  y }  \rho_{y'} \, .$$  By the StrongNDWI principle, $\st m^{y}$ will have a strongly non-disturbing realization, given by a binary test  $\{\map T_{y},  \map T_y^\perp\}$ such that
\begin{align}
\label{prima}  (u|  \map T_y &  =  a_y \\
 \nonumber (u|  \map T_y^\perp &  =  a_y^\perp
\end{align}
and
\begin{align*}
(  a_{y'}  |  \map T_{y} +  (  a_{y'}  |  \map T_y^\perp  =  (a_{y'}|  \qquad \forall y'\not = y,
\end{align*}
the last equation coming from the StrongNDWI condition. Now, since $a_{y'}$ is a pure effect, the above equation implies
$(  a_{y'}  |  \map T_{y}  =   p \, (  a_{y'}  |$ and $(  a_{y'}  |  \map T_y^\perp  =    (1-  p) (  a_{y'}  |$ for some probability $p\in [0,1]$.
Now, it is easy to show that $p =0$:  indeed, we have
\begin{align*}
p    &  = p \,  (  a_{y'}  |\rho_{y'})  \\
  &  =   (  a_{y'}  |  \map T_{y}     |\rho_{y'})    \\
  &  \le  ( u |     \map T_{y}     |\rho_{y'})\\
  &  =  (a_y  | \rho_{y'})  \\
  &  = 0  \, .
\end{align*}
Hence, we conclude that
\begin{align}\label{seconda}
( a_{y'}  |   \map T_y^\perp  =   ( a_{y'}|   \qquad \forall y'\not = y \, .
\end{align}

Like in the proof of Theorem \ref{theo:distinguishable}, we now  enumerate the elements of $\set Y$, say $\set Y  =  (y_1,\dots, y_N)$, and  apply the tests $\{  \map T_{y_n},  \map T_{y_n}^\perp\}$ one after the other.
As a result, we obtain a test  $\{\map S_y\}_{y\in\widetilde{\set Y}}$ with outcomes in the set $\widetilde  {\set Y}  :  =  \set Y \cup  \{rest\}$,
defined by
\begin{align*}
\nonumber \map S_{y_1}   & :=  \map T_{y_1} \\
\nonumber \map S_{y_2}   &:=  \map T_{y_2}  \map T_{y_1}^\perp   \\
\nonumber \map S_{y_3}       &: =  \map T_{y_3} \map T_{y_2}^\perp  \map T_{y_1}^\perp  \\
 \nonumber &\, ~~\vdots  \\
\nonumber 
\map S_{y_N}   &:= \map T_{ y_{N} }  \map T_{y_{N-1}^\perp}  \dots    \map T_{y_1^\perp}  \\
 \map S_{rest}   &:= \map T_{ y^\perp_{N} }  \map T_{y_{N-1}^\perp}  \dots    \map T_{y_1^\perp}
\end{align*}
To conclude the proof, we consider the measurement $ \st m$ defined by $  ( m_y|  :  =   ( u|   \map S_y $, $\forall y\in \widetilde{\set Y}$:  for this measurement we have
\begin{align*}
( m_{y_1}|   & =   (u|  \map T_{y_1}  =  (a_{y_1}|    \\
(m_{y_2}  |  & =    (u|   \map T_{y_2}  \map T_{y_1}^\perp  =  (a_{y_2}|  \map T_{y_1}^\perp  =  (  a_{y_2}|    \\
(m_{y_3} |  &=   (u|   \map T_{y_3} \map T_{y_2}^\perp  \map T_{y_1}^\perp  =  (  a_{y_3}  |    \map T_{y_2}^\perp  \map T_{y_1}^\perp=  (a_{y_3}|  \\
 \nonumber &\, ~~\vdots  \\
(m_{y_N}|  &=  (u|  \map T_{ y_{N} }  \map T_{y_{N-1}}^\perp  \dots    \map T_{y_1}^\perp  =  (a_{y_N} |  \map T_{y_{N-1}}^\perp  \dots    \map T_{y_1}^\perp  =  (a_{y_N}|  \\
(m_{rest} |  &=  (u| \map T_{ y_{N} }^\perp  \map T_{y_{N-1}}^\perp  \equiv  (u|  -  \sum_{y\in\set Y}  (a_y|  \, ,
\end{align*}
 where we used Eqs. (\ref{prima}) and (\ref{seconda}). Hence, we have proven that an arbitrary set of  pure orthogonal effects $\{ a_y\}_{y\in\set Y}$ can coexist in a single measurement, as required by SO.
 \end{proof}

\section{The device-independent framework for contextuality}\label{sec:CE}

In this section we review the device-independent framework for studying contextuality.   We also review  the principle known as \emph{Consistent Exclusivity (CE)} \cite{kochen,cabello2013,henson2012,acin2012,cabello2012specker}.


\subsection{Contextual games}  
Consider a game featuring   a referee, who  asks  a question  $x\in\set X$, and a player, who responds with an answer $y\in \set Y_x$.   In general, two different questions may have overlapping sets of answers,~i.~e.~one can have $\set Y_{x}  \cap  \set Y_{x'}  \not  =  \emptyset$ for some $x\not = x'$.     At each round of the game, the referee chooses a question  $x$ at random with probability $q(x)$ and assigns a payoff $\omega(x,y)$  to the answer $y$.  The goal of the player is to maximize her expected payoff, given by
\begin{align}\label{payoff1}
\omega  =   \sum_{x}   q(x)   \left[ \sum_{ y}  \omega ( x,  y)   ~ p( y| x)       \right]  \, ,
\end{align}
where $p( y| x)$ is the conditional probability of producing the output  $ y $ upon receiving the input $ x$.  We call a game of the above form a \emph{contextual game}, by analogy with the non-local games discussed before.  

Without further restrictions, the maximization of the payoff is trivial:  for every given  question  $x$, the player only needs  to respond deterministically with  the answer  $y(x)$ that maximizes $\omega(x,y)$.   The problem becomes non-trivial
if the player is forced to assign to each answer $y$ a probability that does not depend on which question---among the questions that have $y$ as possible answer---is asked by the referee.  Mathematically, this amounts to the \emph{response non-contextuality   condition} \footnote{``Non-contextuality" here refers to the fact that the context $x$ that gives rise to an answer  does not influence the probability of its occurrence.     The reader should not confuse the response non-contextuality condition of Eq. (\ref{noncont}) with the statement that quantum mechanics is ``contextual".   The latter is just a shorthand for the statement  that some of the conditional probability distributions $p(y|x)   =  \Tr\left [  P^x_y\,   \rho\right]$, generated by  measurements on quantum states, cannot be reproduced by a \emph{non-contextual ontological  model} \cite{spekkens2005contextuality},~i.~e.~cannot be written in the form  
\begin{align}
p(y|x)   =   \sum_{\lambda  \in \Lambda}     p(\lambda)  \,    r_\lambda(y|x) 
\end{align}
where $\lambda$ is a random variable  with  probability distribution $p(\lambda)$, and, for every $\lambda \in \Lambda$,   $r_\lambda(y|x)$ is a deterministic, non-contextual response function, that is, a conditional probability distribution of the  form  
\begin{align}
r_\lambda  (y|x)    =   \left \{   
\begin{array}{ll}  
1  \qquad &    y  =  f_\lambda (x)\\
0     &  y\not  =   f_\lambda(x)    
\end{array}
\right.
\end{align} 
for some suitable function $f_\lambda:  \set X  \to \set Y$ satisfying the conditions 
\begin{enumerate}
\item \emph{normalization:}  for every question  $x\in\set X$, there exists only one answer $y\in\set Y$ such that $y  = f_\lambda(x)$
\item \emph{non-contextuality:} for every pair of questions $x,x'\in\set X$ and every answer  $y\in  \set Y_x\cap  \set Y_{x'}$, one has $f_\lambda  (x)  =  f_\lambda (x')$.  
\end{enumerate}}  
\begin{align}\label{noncont}
p(y|x)   \equiv   p(y|x')   \qquad \forall  x,x'\in\set X \, ,  \quad \forall y\in\set Y_x\cup \set Y_x' \, .
\end{align}
A strategy satisfying Eq. (\ref{noncont}) is a strategy where  the player partly disregards the question $x$. {\em Partly}, because she will still make use of her knowledge of $x$, by restricting the range of her answers  to the set $\set Y_x$.  
Note that  response non-contextuality   is different from  no-signalling, in that  the choice of question $x$ can affect the  conditional probability distribution  of the answer $y$, albeit in a highly constrained way.

We call a strategy \emph{response-non-contextual} iff it satisfies  the response non-contextuality constraint of Eq. (\ref{noncont}).  
 One way  to enforce  response non-contextuality  is  by imagining that  the game is played a large number of times, allowing the referee to estimate the conditional  probability distribution  $p(y|x)$ and to penalize  deviations from Eq. (\ref{noncont}).    

\subsection{The graph-theoretic framework}
Contextual games can be conveniently cast in a graph-theoretic framework \cite{cabello2010non,acin2012}, which has its roots in the framework   \emph{test spaces}\footnote{Test spaces (modulo minor variations) have been also called \emph{manuals}, {\em spaces}, {\em hypergraphs}, {\em cover spaces}, and {\em generalized sample spaces}, see  \cite{foulis1993logicoalgebraic} for references to the original articles.}  \cite{randall1970approach,foulis1989coupled,foulis1993logicoalgebraic} (see also \cite{wilce2000test,barnum2012post}).

To a given game, one can associate  a hypergraph \footnote{We recall that a hypergraph $\set H  = (\set Y,\set X)$ consists of a collection of vertices $\set Y$, along with a collection $\set X$ of subsets of $\set Y$, called hyperedges \cite{berge1984hypergraphs}.}      $\set H  =(\set Y,\set X)$ as follows:
  \begin{enumerate}
  \item the vertices  are the answers in $\set Y :  =   \bigcup_{x\in \set X}  \set Y_x$
  \item the hyperedges are the questions in $\set X$, with the question $x\in\set X$ being identified with the subset of its possible answers $\set Y_x$ 
  \newline[accordingly, we will  write $y\in x$ in place of $y\in\set Y_x$]. 
   \end{enumerate}

A conditional probability distribution $p(y|x)$ obeying the response non-contextuality condition  (\ref{noncont}) can be completely described by the function $w:  \set Y  \to [0,1]$ defined by 
\begin{align}
w( y):   =     p(y|x)  \qquad \forall x  \in \set X :  \,   y\in x \, .
\end{align}  
The function $w$ is a  \emph{probability weight} on the hypergraph $\set H$, in the following sense:  
\begin{df}\label{def:weight}
A function $w:   \set Y  \to [0,1]$ is a \emph{probability weight} (or a \emph{state}  \cite{gudder1986states,gudder1986logical}) on the hypergraph  $\set H  = (\set Y,\set X)$ iff it satisfies the condition  
\begin{align}\label{dddd}
\sum_{y\in x}    w(y)   = 1  \qquad \forall x\in\set X  \, .
\end{align}  
\end{df}

In terms of the probability weight $w$, the payoff (\ref{payoff1}) can be re-written as 
\begin{align}\label{payoff2}
\omega  =  \sum_{y\in\set Y}   c(y)\,  w(y)   
 \, ,
\end{align}
with $c(y):  =  \sum_{x\in\set X} \,  q(x)  \, \omega  (x,y)$.  
 
The maximization of the payoff over all possible response-non-contextual strategies is then equivalent to the maximization of the payoff over all possible probability weights.    We denote such maximum by  $\omega_{\rm RNC}$. 

\subsection{Principles about probability weights: the example of Consistent Exclusivity}  

The physical limitations affecting the player's strategy will result into constraints on the probability weight $w(y)$. 
Consistent Exclusivity  is one such  requirement: it states that the sum of the probabilities associated to a set of mutually exclusive vertices of the hypergraph $\set H$ should be smaller than 1.  
The precise definitions are given as follows: 
\begin{df}\label{def:exset}
Two vertices $\{y, y'\} \subseteq\set Y$ are \emph{exclusive} iff there exists an hyperedge $x$ such that $\{y,y'\}\subseteq x$.      A subset of outcomes $\set E  \subset \set Y$ is called \emph{mutually exclusive} iff  every pair of distinct outcomes $\{y,y'\} \subseteq \set E$ are exclusive. 
\end{df}

  \begin{df}[Consistent Exclusivity  (\cite{kochen,cabello2013,henson2012,acin2012,cabello2012specker})]\label{def:CE}
  A probability weight $w:  \set Y \to [0,1]$  satisfies Consistent Exclusivity (CE) iff 
  one has 
    \begin{align}\label{ceee}
  \sum_{y\in \set E}   w(y)  \le 1 
  \end{align}  
   for every mutually exclusive set $\set E  \subseteq  \set Y$. 
  \end{df}   

The  above formulation of  CE  is ``device-independent",  in that it makes  reference  to the probability weight $w$, but not to the specific set of measurements $\{\st m^x\}_{x\in\set X}$ that generate $w$. Nevertheless, the request that a probability weight satisfies  CE   is  hard to motivate on physical grounds \footnote{Ref. \cite{cabello2012specker} reports the following comment by Simon Kochen on the background role of CE in the Kochen-Specker paper: ``Ernst and I spent many hours discussing the principle. [...] The difficulty lays in trying to justify it on general physical grounds, without already assuming the Hilbert space formalism of quantum mechanics.".  }. 
To find such motivation, we argue that one should look outside the device-independent framework. 
  This point will be discussed in section \ref{sec:derivingCE}.

\subsection{The CE hierarchy}

Like  LO,   CE can be generalized to an infinite hierarchy of constraints \cite{acin2012}.  The $k$-th level of the hierarchy is defined  by considering $k$ identical copies of a black box, the $i$-th copy   generating an answer $y_i$ with probability weight $w(y_i)$.  In this setting,  the string of answers $\st y  =  (y_1,y_2,\dots, y_k)  \in  \set Y^{\times k}$  has probability weight
\begin{align*}
w^{\otimes k}(\st y) := w(y_1) \,  w(y_2)  \, \dots  \, w(y_k)
\end{align*}
and the $k$-th level of the hierarchy is defined as follows:
\begin{df}[Consistent Exclusivity at the $k$-th level  \cite{acin2012,cabello2013}.]
A  probability weight  $w(y)$ \emph{satisfies CE} at the $k$-th level iff one has
\begin{align}\label{cek}
\sum_{\st y \in  \set E^k}  w^{\otimes k}  ( \st  y)   \le 1 
\end{align}
 for every  mutually exclusive set $\set E_k  \subseteq  \set Y^{\times k}$. 
\end{df}


\subsection{Characterizing the  degree of contextuality of projective quantum measurements}

Since the pioneering work of Kochen and Specker, projective measurements  have played a privileged role in the study of contextuality in quantum mechanics  (see Spekkens \cite{spekkens2005contextuality} for a discussion).  Following this tradition, a number of recent works   (\cite{cabello2010non,acin2012,cabello2013})  have attempted a device-independent 
characterization of the input-output  probability distributions  of the form  
  \begin{align}\label{sharpquantum}
  p(y|x)   =  \Tr  \left[   P^{x}_y\,  \rho \right ] \, ,
  \end{align}
where 
\begin{enumerate}
\item $\rho$ is a quantum state, and
\item  for every $x\in\set X$,  $\st P^x  :  = \{  P^x_y\}_{y\in\set Y_x}$ is a projective quantum measurement satisfying the non-contextuality condition 
\begin{align}
P_y^x  =   P_y^{x'}  \qquad \forall x,x'  \in \set X , \quad \forall y\in\set Y_x\cap  \set Y_{x'} \, .
\end{align} 
\end{enumerate}

We call an input-output probability  distribution of the form  (\ref{sharpquantum})  \emph{projective quantum  (PQ)}.

It is not hard to see that, for given input/output alphabets, the  set of PQ input-output  distributions is convex. Hence,  characterizing it is equivalent to characterizing the maximum payoffs achievable in all possible contextual games.  
Since  the maximum payoff is an indicator of the degree of contextuality, we refer to the problem as ``characterizing the degree of contextuality of projective quantum measurements".   Just like in the case of nonlocality,  finding a device-independent characterization is  a spectacularly hard problem.  CE provides   remarkable results in this direction, but provenly \cite{almost} not a complete characterization.

\section{Physical implementation of contextual games}\label{sec:bridge} 

Like in the case of non-locality, the framework of operational-probabilistic theories can be applied to the study of contextual games.  
  For the physical implementation of a given contextual game, we propose the following model:
\begin{df}
A \emph{physical implementation of a contextual game} is a protocol where
\begin{enumerate}
\item the referee sends to the player a physical system $S$, prepared in a state $\rho$, and a classical  input $x\in \set X$, chosen at random with probability $q(x)$
\item The player performs a measurement $\st m^x =   \{m^x_y\}_{y\in\set Y_x}$  on system $S$ and communicates to the referee the measurement outcome  $y$
\item The referee assigns the payoff $\omega(x,y)$ to the answer.     
\end{enumerate} 
In the implementation of the protocol, the player's measurements are subject to the following constraints 
\begin{enumerate}
\item they should satisfy the  \emph{effect non-contextuality condition} 
\begin{align}\label{noncont1}
m^x_y   =   m^{x'}_y   \qquad  \forall x,x'\in\set X  \, , \,  \forall y\in\set Y_x\cap \set Y_{x'} \, .
\end{align}
\item they have to be performed on the input  system provided by the referee, 
(not on some other system of the same type prepared in the player's laboratory). In other words, the conditional probability distribution of the answer $y$ must satisfy 
\begin{align}\label{probcont}
p(y|x)   =   \left(  \left.  m^x_y  \right  |   \rho  \right) \, . 
\end{align}
\end{enumerate}
\end{df} 
   
   The  above physical implementation departs radically from the device-independent scenario.  This can be observed in the following points:
\begin{enumerate}
\item  While the original game had only a classical input $x$ and a classical output $y$, its physical implementation involves also the communication of a specific physical system  $S$,
known to the player. 
\item  The effect non-contextuality condition (\ref{noncont1}) is  device-dependent.  In order to check its validity, one needs to make a full tomography of the measurement devices $\{\st m^x\}_{x\in\set X}$.   Indeed,  effect non-contextuality  is a stronger condition than response non-contextuality  (\ref{noncont}):   it is equivalent to  response non-contextuality \emph{for every possible state $\rho\in\St_1(S)$}.   
\item  Imagining that the game is played a large number of times, the effect non-contextuality condition   can be enforced by the referee by randomly switching from the ``game-playing mode" to a ``constraint-checking mode", which  consists in sending, instead of the state $\rho$, a state chosen at random from a tomographically complete set of states. By collecting enough statistics, the referee will be able to identify (up to statistical errors) the measurements $\{\st m^x\}_{x\in\set X}$ and to check (up to statistical errors) whether they satisfy the effect non-contextuality condition.    
\item The constraint that the player's measurement are performed on the state $\rho$ can also be checked once a tomographic estimate of the measurements $\{\st m^x\}_{x\in\set X}$ is available. For this purpose, the referee only needs to compare the empirical distribution of the player's answers with the desired distribution $p(y|x)  =    \left(  \left.  m^x_y  \right  |   \rho  \right)$.  
\end{enumerate}

The physical implementation of a contextual game can also be phrased in graph theoretic terms. Given the hypergraph $\set H  =  (\set X,\set Y)$ associated to the original game, the player's strategy is completely specified by the function $\hat w:   \set Y   \to  \Eff(S)$ defined by  
\begin{align}
\hat w(y):   =      m^x_y  \qquad \forall x\in\set X   :    \quad y\in \set x 
\end{align}
[recall that the label $x$ is identified with the subset $\set Y_x \subseteq \set Y$]. 
We refer to the function $\hat w (y)$ as an \emph{effect-valued weight} on the hypergraph $\set H$:  
\begin{df}
A function $\hat w:  \set Y \to \Eff(S)$ is an \emph{effect-valued weight} on the hypergraph $\set H  = (\set Y,\set X)$ iff the collection of effects $\{  \hat w (y)\}_{y\in\set x}$ is a measurement for every $x\in\set X$. 
\end{df}


Given an effect-value weight $\hat w(y)$ and a state  $\rho$, one obtains a probability weight  $w(y)$, defined as  
\begin{align}\label{physicalweight}
w(y):  =   \left( \left.  \hat w (y)   \right|  \rho  \right)   \qquad \forall y\in\set Y  \,. 
\end{align} 
Once a physical theory has been specified, the goal of the player is to find the best measurements that maximize her expected payoff $\omega$.   Among all possible physical implementations with a system $S$ and a state $\rho\in\St_1(S)$, it is interesting to consider the ones that lead to the highest payoff.     For a given theory $\mathbb T$, we denote by $\omega_{\mathbb T}$ the maximum payoff that can be obtained by optimizing over all  systems, states, and measurements.

\section{Reformulating Consistent Exclusivity as a  (device-dependent) physical principle}\label{sec:derivingCE}

In its original formulation, CE is  a principle about probability weights.  To interpret it as a physical principle, one needs to specify what  physical situations  give rise to probability weights satisfying Eq. (\ref{ceee}). 
 This specification, however, is far from straightforward.  The naive  formulation   
\emph{``All the probability weights arising in Nature   satisfy CE"} is ultimately wrong, since  one can easily construct examples of quantum measurements giving rise to probability weights violating the CE property: in fact, for every contextual game, the maximum payoff achievable in quantum mechanics is \emph{equal} to the maximum payoff achievable with arbitrary response-non-contextual strategies\footnote{Trivially, every probability weight $w(y)$ defines a set of  quantum measurements, the $x$-th measurement described by the  POVM  $\st P^x:  = \{ P^x_y \}_{y\in x}$ defined  by  $  P^x_y  :  =    w(y)  \,   I_S$, where $I_S$ is the identity operator on the system's Hilbert space.   For every system  and for every density matrix $\rho$  one then has    $p(y|x)   =  \Tr  \left[   P^x_y\,  \rho\right] \equiv w(y)$.  No matter what the system's dimensionality is,  the maximum of the payoff over all quantum measurements coincides with the maximum over all response-non-contextual strategies.     } , namely  
\begin{align}
\omega_{\rm QUANTUM}   =  \omega_{\rm RNC}     \, .
\end{align}   The fix for this problem is to restrict the validity of CE to probability weights generated by \emph{projective} measurements: the correct condition satisfied by quantum mechanics is   \emph{``All the probability weights arising from projective measurements satisfy the CE property"}. 
Note that this is by no means a device-independent statement, as  it refers explicitly to a  property of the devices used to generate the probability weight.

 In order to formulate  CE  as physical principle, one has first to define the analogue of the ``projective measurements".   This can be done in different ways,  depending on which aspect of projective quantum measurements is chosen as distinctive.         Every definition   will lead to a different ``CE principle", potentially encompassing a different picture of the physical world. For example,   in Ref. \cite{Chiribella14} we proposed a formulation of the CE principle in terms of sharp measurements: 
  \begin{df}[SharpCE]
A theory satisfies \emph{SharpCE  (at the $k$-th level of the hierarchy)} iff every probability weight generated by sharp measurements according to Eq. (\ref{physicalweight}) satisfies CE  (at the $k$-th level of the hierarchy).  
\end{df}    
This formulation of the CE principle has been adopted by Cabello  in Refs. \cite{cabello2014exclusivity,cabello2014simple}, as capturing the intuition at the basis of the formulation of CE in the graph-theoretic framework.   We now explore an alternative formulation, in terms of spiky measurements: 
 \begin{df}[SpikyCE]
A theory satisfies \emph{SpikyCE  (at the $k$-th level of the hierarchy)} iff every probability weight generated by spiky measurements according to Eq. (\ref{physicalweight}) satisfies CE (at the $k$-th level of the hierarchy).  
\end{df}    
Interesting, the two formulations turn out to be equivalent if we restrict our attention to \emph{pure} measurements, because in this case ``sharp" and ``spiky" are equivalent notions.   The equivalence is discussed in Section \ref{sec:relations}, which also lists other alternative generalizations of the notion of projective measurement in quantum theory.   

\section{Deriving Consistent Exclusivity for Spiky Measurements}\label{sec:spikyCE} 

Here  we provide a derivation of SpikyCE from the following three principles:  
\begin{enumerate}
\item Causality 
\item Strong No Disturbance Without Information
\item Pure State  Identification,
\end{enumerate}  
the last  of which will be defined precisely later in this section. 

\subsection{Reduction to Coexistence of Mutually Exclusive Spiky Effects}
Our derivation of SpikyCE proceeds through a sequence of reductions.  The first reduction is based on the following notions:
\begin{df}[Mutually exclusive effects]\label{def:muex}
Two effects $m$ and $m'$ are \emph{exclusive}  iff there exists a measurement $\st m$ such that $ \{m,m'\}\subseteq  \st m$.  A set of effects $\{m_{y}\}_{y\in\set E}$ are \emph{mutually exclusive} iff  every pair of effects $\{m,m'\} \subseteq \{m_{y}\}_{y\in\set E}$ are exclusive. 
\end{df}
\begin{df}[SpikyCMEE]
A theory satisfies \emph{Coexistence of  Mutually Exclusive Spiky  Effects (SpikyCMEE)} iff every set of mutually  exclusive  spiky  effects  can coexist in a measurement. 
\end{df}
SpikyCMEE coincides with  the formulation  of the CE principle used by Barnum, M\"uller, and Ududec in Ref. \cite{barnum2014higher}.   Their choice of name was motivated by the following observation  
\begin{prop}\label{prop:spikyCMEE}
SpikyCMEE implies  SpikyCE. 
\end{prop}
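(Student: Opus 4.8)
The plan is to translate exclusivity of vertices in the hypergraph into coexistence of the associated spiky effects, and then to let SpikyCMEE together with normalization do the work. Fix a physical implementation generated by spiky measurements $\{\st m^x\}_{x\in\set X}$, with effect-valued weight $\hat w(y)=m^x_y$---well defined thanks to the effect non-contextuality condition \eqref{noncont1}---and a deterministic state $\rho\in\St_1(S)$, so that $w(y)=(\hat w(y)|\rho)$ as in Eq.~\eqref{physicalweight}. I note first that every $\hat w(y)$ is a spiky effect, being one of the effects of the spiky measurement $\st m^x$.

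First I would establish the key reduction. Let $\set E\subseteq\set Y$ be mutually exclusive. For any two distinct $y,y'\in\set E$, exclusivity supplies a hyperedge $x$ with $\{y,y'\}\subseteq x$, so that $\hat w(y)=m^x_y$ and $\hat w(y')=m^x_{y'}$ both occur in the single measurement $\st m^x$; by Definition~\ref{def:muex} they are therefore exclusive effects. Hence $\{\hat w(y)\}_{y\in\set E}$ is a set of mutually exclusive spiky effects.

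Next I would invoke SpikyCMEE to obtain a single measurement $\st m=\{m_z\}_{z\in\set Z}$ in which the whole family $\{\hat w(y)\}_{y\in\set E}$ coexists, i.e. with an injective labelling $z(\cdot)$ such that $m_{z(y)}=\hat w(y)$. Since the theory is causal, the effects of any measurement sum to the unit effect [Eq.~\eqref{causal}], so $u-\sum_{y\in\set E}\hat w(y)=\sum_{z\notin z(\set E)}m_z$ is again an effect. Pairing with the state $\rho$ and using $(u|\rho)=1$ then gives $\sum_{y\in\set E}w(y)=(\sum_{y\in\set E}\hat w(y)\,|\,\rho)\le(u|\rho)=1$, which is exactly CE at the first level [Eq.~\eqref{ceee}].

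For the $k$-th level I would run the same argument on the composite $S^{\otimes k}$. The product effects $\hat w(y_1)\otimes\cdots\otimes\hat w(y_k)$ are effects of the product measurement $\st m^{x_1}\otimes\cdots\otimes\st m^{x_k}$, which is again spiky by the Locality of Pure Orthogonal Measurements (Axiom~\ref{loca}); moreover two such product effects indexed by exclusive strings coexist in one such product measurement, so they form mutually exclusive spiky effects on $S^{\otimes k}$. SpikyCMEE and normalization then yield $\sum_{\st y\in\set E_k}w^{\otimes k}(\st y)\le1$. The substantive content of the proof---and the step I expect to be the real obstacle---is precisely this passage from pairwise coexistence (which hypergraph exclusivity hands over for free via a common hyperedge) to \emph{joint} coexistence of the entire exclusive family in one measurement, which is exactly what SpikyCMEE asserts; the accompanying delicate point is ensuring that the effects fed into SpikyCMEE are genuinely spiky, immediate for a single system but resting on Axiom~\ref{loca} at higher levels.
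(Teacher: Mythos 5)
Your proof is correct and follows essentially the same route as the paper's: exclusive vertices of the hypergraph map to mutually exclusive spiky effects via the effect-valued weight, SpikyCMEE places them all in one measurement, and normalization of that measurement's outcome probabilities on the deterministic state gives the CE bound. Your extension to the $k$-th level via Locality of Pure Orthogonal Measurements is also correct, though the paper handles that step in a separate corollary rather than inside this proposition.
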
  

The proof is elementary and can be found in \ref{app:spikyCMEE}.

\subsection{Deriving SpikyCMEE from Sufficient Orthogonality and Pure State Identification}
We now  reduce SpikyCMEE to Sufficient Orthogonality combined with  a principle of  Pure State Identification. In order to phrase the latter, we need the following definitions: 
\begin{df}
An effect $m  \in\Eff (S) $ is \emph{normalized} iff there exists a state   $\rho\in\St (S)$  such that  $(m|\rho)  = 1$.  
\end{df}
\begin{df}
 Let $m$  and $\varphi$ be an effect and a pure state of system $S$, respectively.  We say that  $m$ \emph{identifies} $\varphi$ iff
 \begin{enumerate}
 \item $(  m|\varphi  )=1$
 \item $  (m|\rho)  <  1$ for every  state $\rho \not =  \varphi$.
 \end{enumerate}
 \end{df}
For example, in Quantum Theory every rank-one projector, considered as a measurement effect, identifies a pure state. 
In a general theory, the fact that every normalized pure effect identifies a state is a nontrivial property   \footnote{Think,~e.~g.~of the square bit, where every pure effect happens with probability 1 on all the states on  one of the four sides of the square  (cf. example \ref{squit}).}.   We refer to it  as  \emph{Pure State Identification}\footnote{A very similar axiom appeared in Hardy's  2011 axiomatization \cite{hardy11,hardy2013reconstructing}, under the name \emph{Logical Sharpness}.  Hardy's axiom is slightly stronger than Pure State Identification,  in that it requires that every pure state is identified by some pure effect.   Another, closely related axiom was put forward by Wilce \cite{wilce2012conjugates}, who  considered a privileged set of measurements  with the property that each outcome identifies a pure state.  The privileged measurements are not assumed to be pure from the start, but  turn out to be so   as a consequence of the axioms.   }:
\begin{axiom}[Pure State Identification]\label{ax:id}
A theory satisfies  {\em Pure State Identification (PSI)}   iff  every normalized  pure effect  identifies a pure state.  \end{axiom}

In a general theory, one has the following
\begin{prop}\label{theo:SpikyCE}
 Pure State Identification  and Sufficient Orthogonality imply  SpikyCMEE. 
\end{prop}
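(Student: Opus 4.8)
The plan is to reduce the coexistence of the (generally non-pure) spiky effects to the coexistence of their underlying pure ``spikes,'' which is exactly what Sufficient Orthogonality delivers. Let $\{m_y\}_{y\in\set E}$ be a mutually exclusive family of spiky effects on a system $S$. For each $y$, spikiness means that $m_y$ is an outcome of a spiky measurement, so there is a pure orthogonal measurement $\st a^{y}=\{a^{y}_z\}_{z}$ and a subset $\set Z_y$ of its outcomes with $(m_y|=\sum_{z\in\set Z_y}(a^{y}_z|$. I would collect all spikes occurring in these decompositions into one set $A:=\{a^{y}_z : y\in\set E,\ z\in\set Z_y\}$ of pure effects and prove that $A$ is orthogonal in the sense of Definition \ref{def:biort}. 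Sufficient Orthogonality then produces a measurement $\st n$ containing $A$, and a coarse-graining of $\st n$ that groups, for each $y$, the outcomes whose effect is one of the $a^{y}_z$ with $z\in\set Z_y$ (lumping the remaining outcomes into a single ``rest'' effect) returns a measurement in which all the $m_y$ coexist, establishing SpikyCMEE.

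The heart of the argument, and the point where Pure State Identification enters, is the construction of a single family of states biorthogonal to $A$. Each spike $a^{y}_z$ is a normalized pure effect, since it has a biorthogonal partner within $\st a^{y}$ on which it gives probability one; PSI therefore assigns to it a unique pure state $\varphi^{y}_z$ with $(a^{y}_z|\varphi^{y}_z)=1$ and $(a^{y}_z|\rho)<1$ for every $\rho\neq\varphi^{y}_z$. This canonical assignment is precisely what removes the dependence of the distinguishing states on the chosen pair of effects flagged after Definition \ref{def:muex}: the state $\varphi^{y}_z$ is intrinsic to the effect $a^{y}_z$. I would then verify $(a^{y}_z|\varphi^{y'}_{z'})=\delta_{(y,z),(y',z')}$ in three cases. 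The diagonal is immediate. For $y=y'$ and $z\neq z'$ it follows because the spikes of $\st a^{y}$ form a measurement, so their probabilities on the normalized state $\varphi^{y}_{z'}$ sum to one while $(a^{y}_{z'}|\varphi^{y}_{z'})=1$ already exhausts that budget, forcing the others to vanish.

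The cross term $y\neq y'$ is where mutual exclusivity is used. Since $a^{y'}_{z'}$ is one of the summands of $m_{y'}$, one has $(m_{y'}|\varphi^{y'}_{z'})\ge(a^{y'}_{z'}|\varphi^{y'}_{z'})=1$, hence $(m_{y'}|\varphi^{y'}_{z'})=1$; exclusivity of $m_y$ and $m_{y'}$ means they sit in a common measurement, so $(m_y|\varphi^{y'}_{z'})+(m_{y'}|\varphi^{y'}_{z'})\le 1$, which forces $(m_y|\varphi^{y'}_{z'})=0$ and therefore $(a^{y}_z|\varphi^{y'}_{z'})=0$. The same bound shows the spikes are genuinely distinct: if $a^{y}_z=a^{y'}_{z'}$ for $y\neq y'$, the common effect would be dominated by both $m_y$ and $m_{y'}$, giving $(m_y+m_{y'}|\varphi^{y'}_{z'})\ge 2$ against exclusivity, so the coarse-graining in the first paragraph is well defined. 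I expect the main obstacle to be exactly this elimination of the pairwise, pairing-dependent distinguishing states; PSI overcomes it by supplying states that depend only on the individual spikes. With biorthogonality in hand, $A$ is a set of pure orthogonal effects, Sufficient Orthogonality yields a measurement containing it, and the coarse-graining completes the proof.
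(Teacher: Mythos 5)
Your proof is correct and follows essentially the same route as the paper's: Pure State Identification supplies a canonical identifying state for each pure spike, mutual exclusivity together with probability normalization forces the off-diagonal probabilities to vanish so that the spikes form an orthogonal set, and Sufficient Orthogonality plus coarse-graining then yields the coexisting measurement. The only difference is that the paper proves the statement for pure effects and declares the extension to general spiky effects ``immediate,'' whereas you carry out that extension explicitly---correctly observing that one should prove the collected spikes are \emph{orthogonal} (via $(m_{y'}|\varphi^{y'}_{z'})=1$ and exclusivity of the coarse-grained effects $m_y$, $m_{y'}$) rather than trying to show the spikes are themselves mutually exclusive, which would not obviously follow.
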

\begin{proof}
We first prove SpikyCMEE  for pure effects.  Let $\{  a_i\}_{i=1}^N$ be a set of mutually exclusive pure effects.   By PSI, each  pure effect $a_i$ identifies a  pure state  $\varphi_i$.   Since the effects are mutually exclusive, for every pair $\{a_i,a_j\} $ there exists a measurement $\st m^{ij}$ such that   $\{a_i,a_j\} \subseteq  \st m^{ij}$.  Hence, the condition $ (a_j|\varphi_j)=1$ implies $(a_i|\varphi_j)=  \delta_{ij}$.  Since $i$ and $j$ are arbitrary, this means that the effects $\{a_i\}_{i=1}^N$ are orthogonal.  SO  then implies  that the effects $\{a_i\}_{i=1}^N$ coexist in a  measurement. This argument proves the validity SpikyCMEE for pure effects.   The extension to arbitrary spiky measurements is immediate, since  spiky measurements are coarse-graining of pure orthogonal measurements. 
\end{proof}

\subsection{Derivation of SpikyCE}
Combining proposition \ref{theo:SpikyCE}   with theorem \ref{theo:SO}, we get the desired result: 
\begin{theorem}  
If a theory satisfies Causality, Strong No Disturbance Without Information, and Pure State Identification, then it also satisfies SpikyCE. 
\end{theorem}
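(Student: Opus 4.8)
The plan is to obtain the result purely by chaining together the three implications already established earlier in the paper, so that no fresh argument about the structure of measurements is required. First I would invoke Theorem~\ref{theo:SO}, which states that every convex theory satisfying Causality and StrongNDWI must satisfy Sufficient Orthogonality. Since in the present framework convexity is folded into the Causality package (see the discussion following Assumption~\ref{ass:display2}), the two hypotheses Causality and Strong No Disturbance Without Information are exactly what Theorem~\ref{theo:SO} requires, and they yield SO.

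Next I would feed SO, together with the hypothesis of Pure State Identification, into Proposition~\ref{theo:SpikyCE}, which asserts that Pure State Identification and Sufficient Orthogonality imply SpikyCMEE. This is the step that carries the real content: PSI guarantees that each mutually exclusive pure effect pins down a single pure state, which upgrades pairwise mutual exclusivity to genuine orthogonality in the sense of Definition~\ref{def:biort}, after which SO forces the whole family of effects to coexist in one measurement. Having derived SpikyCMEE, I would finally apply Proposition~\ref{prop:spikyCMEE}, which says that SpikyCMEE implies SpikyCE, to conclude that the theory satisfies SpikyCE at every level of the hierarchy.

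The argument is therefore a short composition: Causality and StrongNDWI give SO via Theorem~\ref{theo:SO}; SO and PSI give SpikyCMEE via Proposition~\ref{theo:SpikyCE}; and SpikyCMEE gives SpikyCE via Proposition~\ref{prop:spikyCMEE}. Because each of the three ingredient results has already been proved, there is no genuine obstacle in the final step itself. The only point deserving a moment of care is the convexity hypothesis: Theorem~\ref{theo:SO} is stated for \emph{convex} theories, so I would make explicit that convexity is being assumed here as part of the Causality package rather than as an independent extra assumption, so that the stated hypotheses of the present theorem indeed suffice to trigger all three implications.
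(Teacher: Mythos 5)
Your proposal is correct and follows exactly the paper's own route: Theorem~\ref{theo:SO} yields SO from Causality and StrongNDWI (with convexity folded into the Causality package), Proposition~\ref{theo:SpikyCE} then yields SpikyCMEE from SO and PSI, and Proposition~\ref{prop:spikyCMEE} yields SpikyCE. One small caution: your passing claim that this already gives SpikyCE \emph{at every level of the hierarchy} overshoots what the cited results deliver---the higher levels additionally require Locality of Pure Orthogonal Measurements, as in the paper's subsequent corollary---but since the theorem as stated only asserts SpikyCE at the base level, your argument suffices.
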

A derivation of SpikyCE from completely different axioms is provided in  Ref. \cite{barnum2014higher},   where SpikyCE is obtained from the requirement  that \emph{i)} every state can be represented as a mixture of perfectly distinguishable pure states and \emph{ii)} all sets of perfectly distinguishable pure states of a given  cardinality can be transformed  into one another by reversible transformations.  It is also interesting to compare Proposition \ref{theo:SpikyCE} with the results of Ref. \cite{Chiribella14}, where we  formulated  SharpCE and derived it from a  \emph{single} axiom about sharp measurements.  Yet another way of deriving CE was found by Wilce \cite{wilcecomm}, who interestingly obtained it from a requirement about \emph{bipartite} systems \footnote{More specifically, Wilce requires the existence of a \emph{conjugate system},   in the sense of \cite{wilce2012conjugates}.   Roughly speaking, a system $S$ is said to have a conjugate $\overline S$ if there exists a suitable state of $S\otimes \overline S$ that exhibits perfect correlations for all measurements in a suitable class of privileged measurements, which we can identify~e.~g.~with the spiky measurements of this article, or with the sharp measurements of Ref. \cite{Chiribella14}.}   and from a requirement about coarse-graining of tests, very similar in spirit to the axiom used in \cite{Chiribella14}.     
The existence of different,  alternative ways to obtain the CE principle   provides a good illustration of the fact that a device-independent feature  can arise from  different  features of the underlying physical theory.

\subsection{Derivation of SpikyCE at all levels of the hierarchy}
Deriving SpikyCE at higher levels of the hierarchy is easy if we assume the Locality of Pure Orthogonal Measurements.   This principle guarantees that, for every pure orthogonal measurement $\st a^x  :  =\{a^x_y\}_{y\in\set Y_x}$, the product measurement  with effects 
\[  a^{\st x}_{\st y}  :  =  a^{x_1}_{y_1}  \otimes a^{x_2}_{y_2}  \otimes \cdots \otimes a^{x_k}_{y_k}\]
is also a pure orthogonal measurement.    The $k$-th level of the hierarchy just follows from the application of the SpikyCMEE. 
 In summary, we have proven the following  
 \begin{cor}  
If a theory satisfies Causality, Locality of Pure Orthogonal Measurements, Strong No Disturbance Without Information, and Pure State Identification, then it also satisfies SpikyCE at all levels of the hierarchy. 
\end{cor}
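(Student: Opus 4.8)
The plan is to bootstrap the level-$1$ result to the full hierarchy, with Locality of Pure Orthogonal Measurements as the only genuinely new ingredient. First I would record what the other three hypotheses already buy us: by Theorem \ref{theo:SO}, Causality together with Strong No Disturbance Without Information yields Sufficient Orthogonality, and then Proposition \ref{theo:SpikyCE} combines SO with Pure State Identification to give SpikyCMEE. Thus, under the four hypotheses of the corollary we may freely assume SpikyCMEE (and hence, via Proposition \ref{prop:spikyCMEE}, SpikyCE at level $1$). The task therefore reduces to showing that the class of spiky-generated weights is closed under tensoring, so that the $k$-fold product weight $w^{\otimes k}$ is again a weight generated by spiky measurements on a larger system, to which the level-$1$ machinery can be reapplied.

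Concretely, let $w$ be a probability weight on a hypergraph $\set H = (\set Y,\set X)$ produced as in Eq.~\eqref{physicalweight} by an effect-valued weight $\hat w$, where for each $x\in\set X$ the family $\{\hat w(y)\}_{y\in x}$ is a spiky measurement $\st m^{x}$ on a system $S$ prepared in a state $\rho$. By definition each $\st m^{x}$ is a coarse-graining of a pure orthogonal measurement $\st a^{x}$. For a product question $\st x = (x_1,\dots,x_k)$ the product measurement $\st m^{\st x} = \st m^{x_1}\otimes\cdots\otimes\st m^{x_k}$ is a coarse-graining of $\st a^{x_1}\otimes\cdots\otimes\st a^{x_k}$; applying Axiom \ref{loca} (extended from two to $k$ factors by an obvious induction using associativity of $\otimes$) this product is a pure orthogonal measurement on $S^{\otimes k}$, so $\st m^{\st x}$ is spiky. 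Since effect non-contextuality holds coordinatewise, the product effects $\hat w^{\otimes k}(\st y) := \hat w(y_1)\otimes\cdots\otimes\hat w(y_k)$ form a well-defined effect-valued weight on the product hypergraph, and $w^{\otimes k}(\st y) = (\hat w^{\otimes k}(\st y)\,|\,\rho^{\otimes k})$. Hence $w^{\otimes k}$ is itself a weight generated by spiky measurements, now on $S^{\otimes k}$ in the state $\rho^{\otimes k}$.

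It then remains to read off the CE bound at level $k$. Given a mutually exclusive set $\set E_k \subseteq \set Y^{\times k}$, I would check that the effects $\{\hat w^{\otimes k}(\st y)\}_{\st y\in\set E_k}$ are mutually exclusive spiky effects: any two distinct strings $\st y,\st y'$ that are exclusive in the product hypergraph occur together, as two different outcomes, in a single product measurement $\st m^{\st x}$, so the corresponding product effects coexist in $\st m^{\st x}$ and are therefore exclusive in the sense of Definition \ref{def:muex}; that these effects are spiky is exactly what the previous paragraph established. SpikyCMEE then places the whole family inside one measurement $\st M$ on $S^{\otimes k}$, whence $\sum_{\st y\in\set E_k} w^{\otimes k}(\st y) = \sum_{\st y\in\set E_k}(\hat w^{\otimes k}(\st y)\,|\,\rho^{\otimes k}) \le 1$ by normalization of $\st M$ (Causality). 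This is precisely CE at the $k$-th level, and $k$ was arbitrary.

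The main obstacle, and the only place requiring care, is the bridge between the device-independent notion of mutual exclusivity on $\set Y^{\times k}$ and the operational notion of coexistence of product effects used in SpikyCMEE: one must verify that ``being jointly containable in a product question'' for two distinct strings really does produce two genuinely distinct outcomes of a single allowed measurement, and conversely that passing to $S^{\otimes k}$ introduces no spurious exclusivities. The purity half of Axiom \ref{loca} is essential here, since without it the product effects $\hat w^{\otimes k}(\st y)$ need not be spiky and SpikyCMEE would not apply; the orthogonality half, by contrast, is automatic. Everything else is routine bookkeeping about coarse-graining and the multiplicativity of the product weight.
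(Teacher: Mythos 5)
Your overall route coincides with the paper's: obtain Sufficient Orthogonality from Causality and StrongNDWI (Theorem \ref{theo:SO}), combine it with Pure State Identification to get SpikyCMEE (Proposition \ref{theo:SpikyCE}), use Locality of Pure Orthogonal Measurements, iterated over $k$ factors, to show that products of spiky measurements are spiky, and then read off the level-$k$ bound from coexistence in a single measurement. The one step that does not survive scrutiny is exactly the one you flag as delicate: the claim that any two strings $\st y,\st y'$ that are exclusive at level $k$ ``occur together, as two different outcomes, in a single product measurement $\st m^{\st x}$''. In the hierarchy of Refs.~\cite{acin2012,cabello2013} exclusivity of strings is inherited coordinatewise (via the Foulis--Randall product): $\st y$ and $\st y'$ are exclusive as soon as there is \emph{one} index $i$ at which $y_i$ and $y_i'$ share a hyperedge and differ, while at the remaining indices $y_j$ and $y_j'$ may belong only to incompatible questions with no common hyperedge. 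For such a pair no product question contains both strings, so your witnessing measurement need not exist, and the mutual exclusivity of the product effects---the hypothesis of SpikyCMEE---is not established.

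Two repairs are available under the stated hypotheses. (i) Causality with conditional tests (Assumption \ref{ass:display2}) lets you measure $\st a^{x_i}$ on the $i$-th factor first and choose the remaining questions adaptively on its outcome; the resulting (non-product) measurement does contain both product effects, restoring exclusivity in the sense of Definition \ref{def:muex}. (ii) Closer to the paper's own treatment of the LO hierarchy in Lemma \ref{lem:babyLO}: bypass exclusivity entirely. By Pure State Identification each normalized pure effect $\hat w(y)$ identifies a pure state $\varphi_y$, and coexistence of $\hat w(y_i)$ and $\hat w(y_i')$ in a common pure orthogonal measurement forces $(\hat w(y_i)|\varphi_{y_i'})=0$; hence for any two exclusive strings the product pairing $\left(\hat w^{\otimes k}(\st y)\,\middle|\,\varphi_{y_1'}\otimes\cdots\otimes\varphi_{y_k'}\right)$ vanishes because one factor does. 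The product effects over a mutually exclusive set are therefore biorthogonal to a set of product states, hence orthogonal, and pure by Axiom \ref{loca}, so Sufficient Orthogonality places them directly in one measurement. With either repair the rest of your proposal---the reduction to the level-one machinery, the closure of spiky measurements under tensoring, and the final normalization step---is sound and matches the paper's intended argument.
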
 
\section{Different generalizations of the notion of projective quantum measurement}\label{sec:relations}
We have seen that CE, as a physical principle,  can be formulated in different was, depending on how the notion of ``projective quantum measurement" is generalized  to arbitrary  physical theories.    In this section we discuss four different generalizations and establish a number of relations between them.  

\subsection{Sharp measurements}\label{subsect:sharp}  

\begin{df}[Sharp measurement \cite{Chiribella14}]\label{def:sharp}
A measurement $\st m  =  \{m_x\}_{x\in\set X}$ is \emph{sharp} iff it can be implemented by a repeatable and minimally disturbing test $\boldsymbol  {\map T}   =  \{\map T_x\}_{x\in\set X}$,~i.~e.~a test satisfying the repeatability condition 
\begin{align}
(m_x  |  \map T_x    =   (m_x|   \qquad \forall x\in\set X   
\end{align}
and the minimal disturbance condition
\begin{align}
(n_y|   \left(   \sum_{x\in\set X}   \map T_x \right)   =    (n_y|   \qquad \forall y\in\set Y  \, , 
\end{align}
for every measurement $\st n  =  \{  n_y\}_{y\in\set Y}$ that is compatible\footnote{We recall that two measurements   $\st m=  \{  m_x\}_{x\in\set X}$ and $\st n=  \{n_y\}_{y\in\set Y}$ are said to be \emph{compatible} iff there exists a third measurement $\st o  =\{   o_{z}\}_{z\in\set Z}$  and two disjoint partitions of $\set Z$, denoted by $\left\{\set Z^{\st m}_x\right\}_{x\in\set X}$ and $\left \{  \set Z^{\st n}_y\right\}_{y\in\set Y}$, respectively, such that 
\begin{align*}
m_x    =  \sum_{z\in\set Z^{\st m}_x}  o_{z} \qquad   &\forall x\in\set X  \\
 n_y  =  \sum_{z\in\set Z^{\st n}_y}  o_{z}    \qquad &\forall y\in \set Y \, . 
 \end{align*}}.
 \end{df}
An equivalent characterization of sharp measurements is provided by the following
\begin{prop}\label{prop:sharp}
A measurement   $\st m$ is \emph{sharp} iff there exists a test $\boldsymbol {\map T}$ such that
\begin{align}\label{sharp}
 ( n_{xy} |  \map T_x  =   (n_{xy}|  \qquad\forall x,\in  \set X ,  \,  \forall y\in\set Y
\end{align}
for every measurement  $\{n_{xy}\}_{x\in \set X,  y\in\set Y}$  that refines $\st m$, i.e.   $\sum_y   n_{xy}  =  m_x$.
\end{prop}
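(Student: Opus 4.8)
The plan is to recognize that both characterizations concern the same object—a realization $\boldsymbol{\map T}=\{\map T_x\}_{x\in\set X}$ of $\st m$, meaning $(u|\map T_x=(m_x|$ for all $x$—and to prove the equivalence of the two defining conditions directly, letting the positivity (cone) structure of effects do the essential work. Throughout I would write $a\le b$ to mean that $b-a$ is a legitimate effect, and I would use three elementary facts: for any transformation $\map T_x$ and any effect $a$, the composite $(a|\map T_x$ is again an effect; the map $a\mapsto (a|\map T_x$ is monotone, so $a\le b$ implies $(a|\map T_x\le (b|\map T_x$; and an effect that gives probability zero on every state is the zero effect. I would also record that each block component $n_{xy}$ of a refinement satisfies $n_{xy}\le m_x$, since $m_x-n_{xy}=\sum_{y'\neq y}n_{xy'}$ is a sum of effects.

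The first and hardest step I would establish is an \emph{off-diagonal suppression} lemma, which I expect to be the main obstacle. Assuming $\boldsymbol{\map T}$ realizes $\st m$ and is repeatable, i.e. $(m_x|\map T_x=(m_x|$, I would sum the identity $(u|\map T_x=(m_x|$ against the decomposition $u=\sum_{x'}m_{x'}$ to get $\sum_{x'}(m_{x'}|\map T_x=(m_x|$, then subtract the repeatability term to obtain $\sum_{x'\neq x}(m_{x'}|\map T_x=0$. Since each summand is an effect, hence nonnegative on every state, each must vanish: $(m_{x'}|\map T_x=0$ for $x'\neq x$. By monotonicity this upgrades to $(a|\map T_x=0$ for every effect $a\le m_{x'}$ with $x'\neq x$; in particular $(n_{x'y}|\map T_x=0$ whenever $x\neq x'$, for any refinement $\{n_{xy}\}$ of $\st m$. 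The only non-formal ingredient here is positivity.

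With the lemma in hand both directions are short. For \textbf{sharp $\Rightarrow$ (\ref{sharp})}: a sharp $\st m$ carries a repeatable, minimally disturbing realization $\boldsymbol{\map T}$, so off-diagonal suppression holds. Given a refinement $\{n_{xy}\}$, the family indexed by pairs $(x,y)$ is a measurement compatible with $\st m$ (it refines $\st m$ and is its own common refinement), so minimal disturbance gives $(n_{xy}|\sum_{x'}\map T_{x'}=(n_{xy}|$; suppression kills every term with $x'\neq x$, leaving exactly $(n_{xy}|\map T_x=(n_{xy}|$. For \textbf{(\ref{sharp}) $\Rightarrow$ sharp}: taking the trivial refinement $n_x=m_x$ in (\ref{sharp}) yields repeatability, so suppression again holds. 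To recover minimal disturbance I would take any $\st n$ compatible with $\st m$ with common refinement $\st o=\{o_z\}$, so each $z$ lies in a unique $\st m$-block $\set Z^{\st m}_{x(z)}$; viewing $\st o$ as a refinement of $\st m$ (with $o_z$ sitting in block $x(z)$) and applying (\ref{sharp}) gives $(o_z|\map T_{x(z)}=(o_z|$, while suppression gives $(o_z|\map T_{x'}=0$ for $x'\neq x(z)$ since $o_z\le m_{x(z)}$. Hence $(o_z|\sum_{x'}\map T_{x'}=(o_z|$ for every $z$, and summing over the $\st n$-blocks $n_y=\sum_{z\in\set Z^{\st n}_y}o_z$ yields $(n_y|\sum_{x'}\map T_{x'}=(n_y|$, which is minimal disturbance. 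Thus $\boldsymbol{\map T}$ is repeatable and minimally disturbing, so $\st m$ is sharp.

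The crux is the suppression lemma, whose content is purely the positivity of effects together with monotonicity of pre-composition; everything else is the routine identification of refinements of $\st m$ with measurements compatible with $\st m$. The one caveat I would flag explicitly is that the test in (\ref{sharp}) must be read as a realization of $\st m$, i.e. $(u|\map T_x=(m_x|$: condition (\ref{sharp}) constrains only the components of $\map T_x$ below $m_x$ and says nothing about $(m_{x'}|\map T_x$ for $x'\neq x$, so without this reading the backward implication would fail to pin down the realized measurement.
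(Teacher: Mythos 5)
Your proof is correct. A preliminary remark: the paper does not actually prove Proposition \ref{prop:sharp} --- it defers the proof to Section III of Ref. \cite{Chiribella14} --- so there is no in-paper argument to compare against; your argument is the natural one and follows the same strategy as the cited proof, with the off-diagonal suppression lemma as the genuine crux. That lemma is sound as you state it: from $(u|\map T_x=(m_x|$ and $(m_x|\map T_x=(m_x|$ one gets $\sum_{x'\neq x}(m_{x'}|\map T_x=0$, and since each $(m_{x'}|\map T_x$ is an effect (hence nonnegative on all states, and zero as an effect if it vanishes on all states), each term is zero; monotonicity under precomposition then kills $(n_{x'y}|\map T_x$ for $x'\neq x$. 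The identification of a refinement of $\st m$ with a measurement compatible with $\st m$ (being its own common refinement) and the block decomposition of an arbitrary compatible $\st n$ through $\st o$ are both handled correctly. One refinement concerning your closing caveat: you do not need to build the realization condition $(u|\map T_x=(m_x|$ into the reading of Eq. (\ref{sharp}); in a causal theory it is forced. Indeed, the trivial refinement already yields repeatability $(m_x|\map T_x=(m_x|$; monotonicity gives $(u|\map T_x\ge(m_x|\map T_x=(m_x|$ on every state; and since $\sum_x\map T_x$ is a deterministic transformation, uniqueness of the deterministic effect gives $\sum_x(u|\map T_x\,|\rho)=(u|\rho)=\sum_x(m_x|\rho)$, so every inequality is saturated and $(u|\map T_x=(m_x|$ for each $x$. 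With that observation your backward implication proves the proposition exactly as literally stated, with no extra hypothesis on the test.
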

The proof can be found in  section III of \cite{Chiribella14}.  Eq. (\ref{sharp}) is closely related to the notion of \emph{coherent L\"uders rule} introduced  by Kleinmann in Ref. \cite{kleinmann2014sequences}.    Roughly speaking, the  sharp measurements of definition \ref{def:sharp}  are the  measurements  that can be implemented by tests in which each transformation is a  coherent L\"uders rule for the corresponding effect \footnote{Some care is required with such an identification,  which sometimes turns out to be incorrect. The main differences   between Refs. \cite{Chiribella14} and \cite{kleinmann2014sequences} can be summarized as follows:  
\begin{enumerate}
\item {\em Framework.}  Ref. \cite{kleinmann2014sequences} associates to a physical system  $S$ an order unit vector space $V_S$, making the following
 \begin{assumption}\label{kleineffect}
  $\Eff (S)  =   \{   m  \in  V_S  ~|~   0\le m  \le u_S  \}$.
  \end{assumption}
  Not every OPT satisfies such an assumption, which is strictly stronger than convexity of the space of effects and is partly related to the so-called  No-Restriction Hypothesis \cite{puri,Janotta13}  (see \ref{app:luders} for  details).   
  \item {\em Positive maps vs physical transformations.}   A \emph{coherent L\"uders rule (CLR)} is defined as a positive linear map $\phi:  V_S \to V_S$ satisfying the conditions   
  \begin{align}
\label{compatible}
\phi  (u_S) &  =    m    \qquad        &  {\rm (} m{\rm-compatibility)}\\
 \label{coherent} 
\phi  ( n)   &=   n  \qquad \forall  n  \in  V_S  \, :    \quad   0\le n\le m   \qquad  & {\rm (coherence)} \, .
\end{align} 
Each positive map is regarded as a potential candidate for a physical transformation, leaving the actual choice of physical transformations open.  
     We argue that the most sensible way to make such a choice is to start from a full OPT, where the composition of transformations in parallel and sequence is built in the operational structure, thanks to the adoption of the categorical framework \cite{abramsky2004,abramsky2008,coecke2010universe}. 
 This allows one to bypass problems like  the difference between positivity and complete positivity, and the problem that 
 the correspondence between positive maps and physical transformations  may not be uniquely defined if the axiom of  Local Tomography is not satisfied  \cite{puri,chiribella14dilation}.  
 \item {\em Sharpness vs coherence.}   The sharpness condition (\ref{sharp})  is generally not equivalent to the coherence condition (\ref{coherent}).  The two conditions become equivalent  under the validity of the following 
 \begin{assumption}\label{cocomp}  
Every two effects $m,n \in  V_S$ satisfying $  n\le m$ are compatible, that is, the three effects $ n,  m-n$ and $u_S  - m$ can coexist in a measurement allowed by the theory.   
\end{assumption} 
Assumption \ref{cocomp} holds for theories satisfying  the Purification axiom (see Corollary 36 of Ref. \cite{puri}) and for theories with a Jordan-algebraic structure \cite{wilce2012conjugates,barnum2014higher}.  Sharpness and coherence are potentially different notions for OPTs that do not satisfy assumption \ref{cocomp}.   
   
\item {\em Effects vs measurements.}   
While Ref. \cite{Chiribella14} focusses on measurements, Ref. \cite{kleinmann2014sequences} focusses on individual effects. As a result,  an effect  with a CLR  may not be a sharp effect (i.~e.~an effect belonging to a sharp measurement):  indeed, an effect $m$ can have a CLR \emph{even if there exists no CLR for the complementary effect} $u_S- m$ \cite{kleincomm}.   
\end{enumerate} 
We have seen that  defining a privileged set of measurements  is important for the study of contextuality.  Hence, one may want define measurements starting from CLRs.     There is a tricky issue here:  the most obvious definition ``CLR measurement $: =$ measurement  $\st m=   \{  m_x\}_{x\in\set X}$ where each effect has a CLR rule  $\phi_x$"  does not have a clear operational meaning, because the collection of maps $\{\phi_x\}_{x\in\set X}$ may not correspond to any test allowed by the theory.  
For this reason,  we suggest  to define ``CR measurement  $:  =$  measurement that can be implemented by a test $\boldsymbol{\map T} =  \{\map T_x\}_{x\in\set X}$ wherein each transformation induces a CLR for the corresponding effect."  
Adopting this definition, we have the following
\begin{prop}\label{prop:CLR}
 Sharp measurements coincide with CLR measurements in causal  OPTs   satisfying Assumptions \ref{kleineffect} and  \ref{cocomp}.
\end{prop}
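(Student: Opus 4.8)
The plan is to reduce both notions to a statement about a single test and then to compare the two conditions effect by effect. By Proposition \ref{prop:sharp}, a measurement $\st m = \{m_x\}_{x\in\set X}$ is sharp precisely when it admits a test $\boldsymbol{\map T}=\{\map T_x\}_{x\in\set X}$ that implements it [so that $(u|\map T_x = (m_x|$] and satisfies the sharpness condition (\ref{sharp}), namely $(n|\map T_x = (n|$ for every effect $n$ that refines $m_x$ inside some refinement of $\st m$. On the other hand, $\st m$ is a CLR measurement exactly when it admits a test $\boldsymbol{\map T}$ that implements it and whose transformations induce coherent L\"uders rules, i.e. the Heisenberg action of $\map T_x$ fixes every effect dominated by $m_x$: $(n|\map T_x = (n|$ for all $n$ with $0\le n\le m_x$ [condition (\ref{coherent})], while (\ref{compatible}) is nothing but the statement that the test implements $\st m$. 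Thus both notions amount to ``$\boldsymbol{\map T}$ implements $\st m$ and $\map T_x$ leaves a certain family of effects undisturbed''; writing $A_x$ for the effects arising in refinements of $m_x$ and $B_x := \{n : 0\le n\le m_x\}$, sharpness is the condition on $A_x$ and CLR is the condition on $B_x$.

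The inclusion $A_x\subseteq B_x$ is immediate, since any $n$ refining $m_x$ satisfies $0\le n\le\sum_{y}n_{xy} = m_x$. Hence the coherence condition on $B_x$ is stronger than the sharpness condition on $A_x$, and a CLR test automatically satisfies (\ref{sharp}). By Proposition \ref{prop:sharp} this already shows that every CLR measurement is sharp, with no need for Assumptions \ref{kleineffect} or \ref{cocomp}.

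For the converse I would establish $B_x\subseteq A_x$, i.e. that every effect $n$ with $0\le n\le m_x$ actually occurs in a refinement of $\st m$; granting this, the sharpness condition (\ref{sharp}) of a sharp test immediately yields the full coherence condition (\ref{coherent}), so the sharp test is a CLR test and $\st m$ is a CLR measurement (the $m$-compatibility (\ref{compatible}) being automatic since the test implements $\st m$). To produce such a refinement, I would first use Assumption \ref{kleineffect} to guarantee that $n$ and $m_x - n$ are genuine effects of the theory, and then invoke Assumption \ref{cocomp}, which certifies that $n$, $m_x - n$ and $u_S - m_x$ coexist in an allowed measurement; from this one wants to build the refinement of $\st m$ that splits $m_x$ into $\{n, m_x - n\}$ while leaving the remaining outcomes $\{m_{x'}\}_{x'\ne x}$ intact. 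I expect the main obstacle to be exactly this last gluing step: Assumption \ref{cocomp} directly provides only the three-outcome measurement $\{n, m_x - n, u_S - m_x\}$, and one must argue that the finer collection $\{n, m_x - n\}\cup\{m_{x'}\}_{x'\ne x}$ is still an allowed measurement refining $\st m$, equivalently that $\st m$ and the binary measurement $\{n, u_S - n\}$ admit the obvious joint measurement. This is the point where the coexistence hypothesis \ref{cocomp}, together with the no-restriction-type Assumption \ref{kleineffect}, does the real work, and it is the only place in the argument where these two assumptions are needed.
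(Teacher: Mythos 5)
Your proposal follows essentially the same route as the paper's: both arguments reduce sharpness and the CLR property to non-disturbance conditions over two classes of effects (your $A_x$ and $B_x$), observe that $A_x\subseteq B_x$ gives ``CLR $\Rightarrow$ sharp'' for free, and invoke Assumptions \ref{kleineffect} and \ref{cocomp} only for the reverse inclusion. The ``gluing'' obstacle you flag at the end is genuine, but it is equally unresolved in the paper, whose proof is the informal discussion of \ref{app:luders} and which silently identifies ``effects occurring in a refinement of the whole measurement $\st m$'' (the class appearing in Proposition \ref{prop:sharp}) with ``effects $n$ such that $\{n,\, m_x-n,\, u_S-m_x\}$ coexist'' (what Assumption \ref{cocomp} actually delivers) --- precisely the step where one must still argue that $\{n,\, m_x-n\}\cup\{m_{x'}\}_{x'\neq x}$ is an allowed measurement.
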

The proof follows from the discussion presented in \ref{app:luders}, which also contains more details on the relations between sharp measurements and CLR measurements.}.

\subsection{Maximally discriminating measurements}
A third generalization of projective quantum measurements   appeared often  in the literature on the reconstructions of quantum theory  \cite{hardy01,deri,masanes11, hardy11, masanes12, dakic11}. In this context it is often convenient to consider  measurements  that distinguish perfectly among a maximal set of states,~i.~e.~sets of states  $\{\rho_n\}_{n=1}^N$  with the property that  there is no state $\rho_{N+1}$ such that the states $\{\rho_{n}\}_{n=1}^{N+1}$ are jointly distinguishable.
Here we call a measurement that distinguish among a maximal set of states a \emph{maximally discriminating measurement}.
In the case of quantum theory it is easy to see that the maximally discriminating  measurements coincide with the  projective measurements.

\subsection{Measurements consisting of extremal effects}

Yet another possible generalization of projective quantum measurement  is the one in terms of measurements that consist of extremal effects \cite{cabello2010non}, i.e. effects that are extreme points of the set of effects associated to a given system \footnote{This definition presupposes  the mild assumption that such  effects  form a convex set.}.        Note the difference between extremal effects and the notion of pure effects used in this paper:  in quantum theory, the effect $p|  0\rangle\langle 0|$ is pure in our sense, but is not extremal in the convex set of effects, because it is a mixture of the effect $|0 \rangle\langle0|$ with the zero effect.  On the other hand, the projector $ |0\rangle \langle 0  |  +  |1\rangle \langle 1|$ is an extremal effect, but is not pure, because it can be obtained by coarse-graining the pure effects $|0\rangle \langle 0|$ and $|1\rangle \langle 1|$.
 In general, it is easy to see that the extremal effects in quantum theory are the projectors, while the pure effects are the rank-one positive operators upper bounded by the identity matrix.

\subsection{Relations among the different definitions}

A summary of the possible generalizations of projective quantum measurement is as follows:
\begin{enumerate}
\item Maximally discriminating  measurements. Definition based on the  notion of distinguishability of a maximal set of states.
\item Spiky measurements. Definition based on purity and orthogonality. 
\item Sharp measurements. Definition  based on the dynamical features  of the measurement process, which is required to be repeatable and minimally disturbing
\item Measurements consisting of extremal effects. Definition based on the convex structure of the set of effects.
\end{enumerate}

Not much is known about the relations among these four definitions, except for  a few observations that one can readily  make. First,  if a pure measurement is maximal, one has the following implications:
\begin{prop}\label{pureequivalence}
Let $\st m$ be a pure measurement in a causal theory.  Then,
\begin{enumerate}
\item if $\st m$ is maximal, then it is also spiky
\item  $\st m$ is  spiky iff it is sharp
\item if $\st m$ is spiky, then it consists of extremal effects.
\end{enumerate}
\end{prop}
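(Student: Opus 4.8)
The plan is to reduce all three claims to one structural fact: for a \emph{pure} measurement in a causal theory, being spiky is the same as being pure orthogonal. I would prove this reduction first. Let $\st m = \{m_y\}_{y\in\set Y}$ be pure and spiky, so that $m_y = \sum_{z\in\set Z_y} a_z$ for a pure orthogonal measurement $\st a = \{a_z\}_{z\in\set Z}$ with biorthogonal states $\{\rho_z\}$ (Definition \ref{def:biort}) and a partition $\{\set Z_y\}$. If some block $\set Z_y$ contained two indices $z_1,z_2$, then purity of $m_y$ applied to the splitting $m_y = a_{z_1} + \sum_{z\in\set Z_y,\, z\neq z_1} a_z$ would force $a_{z_1} = p\,m_y$ for some $p\in[0,1]$; pairing with $\rho_{z_1}$ and using $(m_y|\rho_{z_1}) = (a_{z_1}|\rho_{z_1}) = 1$ gives $p=1$, whence $a_{z_2}=0$, contradicting $(a_{z_2}|\rho_{z_2})=1$. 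Hence every block is a singleton and $\st m = \st a$ is pure orthogonal; the converse (a pure orthogonal measurement is the trivial coarse-graining of itself, hence spiky) is immediate. Part 1 is then quick: a maximally discriminating $\st m$ perfectly distinguishes a set of states $\{\rho_y\}_{y\in\set Y}$ indexed by its outcomes, so $(m_y|\rho_{y'}) = \delta_{y,y'}$ and $\st m$ is orthogonal; being also pure, it is pure orthogonal and therefore spiky.

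For Part 2 I would combine the reduction with the sharpness criterion of Proposition \ref{prop:sharp}. To show that a pure orthogonal $\st m$ is sharp, I would exhibit the measure-and-reprepare test $\map T_y := |\rho_y)(m_y|$, which is an admissible test by Causality together with Assumption \ref{ass:display2} and which realizes $\st m$ since $(u|\map T_y = (m_y|$ (using $(u|\rho_y)=1$). Repeatability $(m_y|\map T_y = (m_y|$ follows from $(m_y|\rho_y)=1$, and the refinement condition of Proposition \ref{prop:sharp} follows because any refinement $n_{yz}$ of the pure effect $m_y$ satisfies $n_{yz} = p_{yz}\,m_y$ by purity, so that $(n_{yz}|\map T_y = (n_{yz}|\rho_y)\,(m_y| = p_{yz}(m_y| = (n_{yz}|$.

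For the converse direction of Part 2, I would take a repeatable, minimally disturbing realization $\{\map T_y\}$ of a pure sharp $\st m$ and apply the minimal-disturbance condition of Definition \ref{def:sharp} to $\st m$ itself (which is compatible with itself): the identity $(m_{y'}|\sum_y \map T_y = (m_{y'}|$ together with repeatability $(m_{y'}|\map T_{y'} = (m_{y'}|$ forces $(m_{y'}|\map T_y = 0$ for $y\neq y'$, since each summand is a nonnegative functional. Choosing a deterministic $\omega$ with $c := (m_y|\omega) > 0$ and setting $\rho_y := c^{-1}\,\map T_y|\omega)$ then yields states with $(m_{y'}|\rho_y) = \delta_{y,y'}$, so $\st m$ is orthogonal, hence pure orthogonal, hence spiky. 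Part 3 uses the reduction once more: each $m_y$ is pure with a deterministic witness $\rho_y$ satisfying $(m_y|\rho_y)=1$. Given any convex splitting $m_y = \lambda m_1 + (1-\lambda)m_2$ with $\lambda\in(0,1)$, purity gives $\lambda m_1 = p\,m_y$; Causality makes $\{m_i, u-m_i\}$ a measurement, so $(m_i|\rho_y)\le 1$, and evaluating the convex combination on $\rho_y$ forces $(m_1|\rho_y)=(m_2|\rho_y)=1$, pinning $p=\lambda$ and hence $m_1=m_2=m_y$; thus each $m_y$ is extremal.

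The main obstacle I anticipate is Part 2. Both the verification that the measure-and-reprepare instrument $\{|\rho_y)(m_y|\}$ is a legitimate test in a general causal OPT, and the extraction of the biorthogonal states from the self-compatibility case of minimal disturbance, rely essentially on the causal structure (existence of the unit effect, $u-a$ being an effect, and admissibility of conditional tests). I would also flag the degenerate case of a vanishing effect $m_y=0$, which cannot belong to an orthogonal measurement; this should either be excluded by assuming all effects nonzero or discarded as a trivial outcome.
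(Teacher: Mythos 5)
Your proof is correct, and for Parts 1 and 3, for the preliminary observation that pure spiky measurements are pure orthogonal, and for the ``spiky $\Rightarrow$ sharp'' half of Part 2 (the measure-and-reprepare test $\map T_y = |\rho_y)(m_y|$ justified via Assumption \ref{ass:display2}, with the refinement condition of Proposition \ref{prop:sharp} trivialized by purity) it coincides with the paper's argument, though you prove slightly more than the paper needs: the paper only uses that a spiky measurement is orthogonal (one witness state per block of the coarse-graining), whereas you show the blocks must be singletons. The one place where you genuinely diverge is the ``sharp $\Rightarrow$ spiky'' direction. The paper picks a \emph{single} state $\rho$ with $(m_y|\rho)>0$ for every $y$ --- obtained by mixing, which is the only point where convexity enters its proof --- and sets $\rho_y := \map M_y|\rho)/(m_y|\rho)$, deducing $(m_y|\rho_y)=1$ from repeatability and then full biorthogonality from normalization of probabilities. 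You instead exploit the minimal-disturbance condition of Definition \ref{def:sharp} applied to $\st m$ against itself to get $(m_{y'}|\map T_y = 0$ for $y\neq y'$ directly (each summand being a nonnegative functional), and then build the witness states outcome by outcome. Your route trades the paper's single-full-support-state trick (and hence its explicit appeal to convexity) for the self-compatibility instance of minimal disturbance; both need each $m_y$ to be triggered with nonzero probability on some state, a degenerate case you rightly flag and the paper leaves implicit. The two arguments are equally rigorous; yours is marginally more self-contained, the paper's is marginally shorter.
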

The proof is presented in  \ref{app:a}.    One interesting question here is under which conditions the implication 1 can be reversed, i.e. under which conditions a pure spiky measurement is maximal.  Here is a possible answer: Suppose that
\begin{enumerate}
\item the theory satisfies Pure State Identification, and
\item the set of effects that give probability 1 on a given state has a non-trivial lower bound, that is, for every system $S$ and for every state $\rho  \in\St (S)$, there exists an effect $a_\rho  \in \Eff (S),   \,  a_\rho  \not  = 0$ such that $a_\rho  \le m$ for every effect  $m$ such that
$  (  m|\rho)  =1  \, .$
\end{enumerate}
These two conditions are sufficient to guarantee that all spiky pure measurements are maximal:
\begin{prop}\label{spikymax}
In a causal theory satisfying Conditions 1-2 every spiky pure  measurement is maximal.
\end{prop}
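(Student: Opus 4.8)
The plan is to reduce the statement to a claim about pure orthogonal measurements and then to contradict the hypothetical existence of one extra distinguishable state. First I would observe that a \emph{pure} spiky measurement is in fact a pure orthogonal measurement, i.e. the coarse-graining in Definition~\ref{def:spiky} is trivial. Write $\st m=\{m_y\}$ as a coarse-graining $(m_y|=\sum_{z\in\set Z_y}(a_z|$ of a pure orthogonal measurement $\st a=\{a_z\}$, with witnessing states $\{\rho_z\}$ obeying $(a_z|\rho_{z'})=\delta_{z,z'}$. If some cell $\set Z_y$ contained $z_1\neq z_2$, then purity of $m_y=a_{z_1}+(m_y-a_{z_1})$ would force $a_{z_1}=p\,m_y$, and evaluating on $\rho_{z_2}$ (where $(m_y|\rho_{z_2})=1$) gives $p=(a_{z_1}|\rho_{z_2})=0$, so $a_{z_1}=0$, contradicting $(a_{z_1}|\rho_{z_1})=1$. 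Hence every cell is a singleton, $\st m=\st a$, and by Causality $\sum_y a_y=u$.

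Next I would invoke Pure State Identification. Each $a_y$ is pure and normalized (since $(a_y|\rho_y)=1$), so PSI supplies a pure state $\varphi_y$ that $a_y$ identifies; as $\varphi_y$ is the unique state on which $a_y$ equals $1$, the witness satisfies $\rho_y=\varphi_y$, whence $(a_{y'}|\varphi_y)=\delta_{y,y'}$. Thus $\st a$ perfectly distinguishes the pure states $\{\varphi_y\}_{y\in\set Y}$, and maximality reduces to showing that no further state can be adjoined to this distinguishable set.

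The core of the argument is then a contradiction. Suppose there were a state $\sigma$ and a measurement $\{n_y\}_{y\in\set Y}\cup\{n_\ast\}$ distinguishing $\{\varphi_y\}\cup\{\sigma\}$, so that $(n_y|\varphi_{y'})=\delta_{y,y'}$ and $(n_y|\sigma)=0$. For each $y$ I would apply Condition~2 to $\varphi_y$, obtaining a nonzero effect $b_y$ lying below every effect that gives $1$ on $\varphi_y$; since both $a_y$ and $n_y$ are $1$ on $\varphi_y$, this yields $b_y\le a_y$ and $b_y\le n_y$. Purity of $a_y$ forces $b_y=p_y\,a_y$ with $p_y\in(0,1]$, and evaluating $b_y\le n_y$ on $\sigma$ gives $p_y(a_y|\sigma)=(b_y|\sigma)\le(n_y|\sigma)=0$, hence $(a_y|\sigma)=0$ for all $y$. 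Summing over $y$ and using $\sum_y a_y=u$ then produces $0=\sum_y(a_y|\sigma)=(u|\sigma)=1$, a contradiction. Therefore $\st a$, and so the original spiky pure measurement, is maximal.

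The main obstacle is precisely this last step, the only place where both hypotheses are genuinely combined: Condition~2 manufactures a common nonzero lower bound $b_y$ for the genuine effect $a_y$ and the hypothetical distinguishing effect $n_y$, and purity collapses $b_y$ onto the ray of $a_y$. This is what transports the vanishing $(n_y|\sigma)=0$ to $(a_y|\sigma)=0$, which would otherwise be unavailable since the $\{\varphi_y\}$ need not be tomographically complete. The remaining points to check are the normalization $(u|\sigma)=1$ for the adjoined deterministic state and the harmless degenerate cases, neither of which I expect to cause difficulty.
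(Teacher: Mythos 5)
Your proposal is correct and follows essentially the same route as the paper's own proof in \ref{app:b}: Pure State Identification upgrades the witness states to pure states $\varphi_y$, Condition~2 supplies a nonzero common lower bound below both $a_y$ and the hypothetical distinguishing effect, purity collapses that lower bound onto the ray of $a_y$, and summing $(a_y|\sigma)=0$ against $\sum_y a_y = u$ yields the contradiction. The only difference is that you explicitly verify the coarse-graining of a pure spiky measurement is trivial, a point the paper asserts without proof; this is a harmless (and welcome) addition rather than a different argument.
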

The proof is provided in the  \ref{app:b}.  It is worth stressing  that the simple equivalences presented here hold for \emph{pure} measurements, while the situation is much more involved for generic measurements. This fact prevents a direct comparison of the results of  this paper with those of Ref.  \cite{Chiribella14}, where the main arguments were based  on  the properties of \emph{non-pure} sharp measurements.

\section{Conclusions}\label{sec:conclu}

In this paper we reviewed the  device-independent framework for nonlocality/contextuality and the framework of general probabilistic theories, with the aim of  bridging  the gap between the two approaches.  
We see a high payoff in the  transfer of results from one paradigm to the other.  From the point of view of quantum axiomatizations,        being able to reconstruct a device-independent principle from the axioms provides a direct access to many fundamental features of quantum nonlocality and contextuality.    From the point of view of quantum nonlocality/contextuality,  the approach of general probabilistic theories offers the possibility to find a  deeper understanding of the device-independent features, which may help  overcoming the current difficulties in finding a complete device-independent characterization.  

In this paper we explored both directions.  Following Ref. \cite{Chiribella14}, we focussed on the principles of Local Orthogonality and Consitent Exclusivity and derived them from principles  about the structure of the measurement process.  The derivation presented here differs  significantly from the one presented in Ref. \cite{Chiribella14}, both in the requirements and in the notions used to formulate them. Essentially, the two papers  investigate two different notions of ``ideal measurement", providing two different  and potentially inequivalent generalizations of the notion of projective measurement  in quantum theory.  The two generalizations, called  sharp and spiky measurements, respectively, refer to different operational properties of measurements: repeatability and minimal disturbance for the former, purity and orthogonality in the latter.

How should we interpret the fact that the same device-independent features---LO and CE in this case---can be reduced to two different physical pictures?
Several answers are possible:  On the one hand, one could argue that correlations are only a partial aspect of a physical theory and that, in fact,  it is even possible that two different  physical theories  lead to the same set of  correlations.    In this sense, it is no surprise that inequivalent sets of physical principles entail the same device-independent bounds.         On the other hand, one could argue  that the framework of general probabilistic theories is \emph{too} general, in that  it allows for more  theories than those that are actually worth studying.  In general, the inequivalence of two sets of axioms could be due to some  artificial  and uninteresting counterexample.  Inequivalent axioms could turn out to be equivalent under some reasonable  assumption---the only problem being that the right assumptions have yet to be pinpointed.
As a matter of fact, we believe that both answers contain elements of truth.    

In the present work, a partial   simplification was achieved at the level of pure measurements, where the difference between  sharp and spiky measurements disappears.     It is remarkable that, once more \cite{puri,deri,chiribella14dilation}, bringing  the  analysis to the level of pure processes  simplifies proofs and unites different notions.   This fact could be taken as a clue  that the core of Quantum Theory is encoded in the peculiar interaction between the operational level and an underlying world of pure processes.
From this point of view, the most natural continuation of the  research initiated in this paper  is to combine the Measurement Purification axiom  with the State Purification axiom of Ref. \cite{puri}, seeking for a new axiomatization of Quantum Theory only in terms of Purification features.

\subparagraph*{Acknowledgements}     This work is the expanded version of an earlier paper, entitled ``From Quantum Pictures to Quantum Correlations", presented at QPL 2013, ICFO Barcelona, July 17-19 2013 and published in the conference pre-proceedings. The original paper contained already the definition of spiky measurements---called ``orthogonal'' therein---and the derivation of LO from Measurement Purification, Locality of Pure Orthogonal Measurements, and Sufficient Orthogonality.   
We are indebted to the two anonymous referees of this new expanded version for their extremely careful reading and for providing a number of suggestions that greatly improved the paper. GC is grateful to M Kleinmann for an email discussion on the relation between sharp measurement and coherent L\"uders rules, which has been essential for clarifying the presentation of this subject.

The work is supported   by  the Foundational Questions Institute through the large grant ``The
fundamental principles of information dynamics'' (FQXi-RFP3-1325),   by the National Natural Science Foundation of China through Grants 11450110096 and 11350110207, and by the 1000 Youth Fellowship Program of China.
GC acknowledges R Spekkens and L Masanes for organizing the workshop ``Causal Structure in Quantum Theory" at the Centro de Ciencias de Benasque ``Pedro Pasqual" (2-6/08/2013), that provided an ideal environment for the early elaboration of the ideas in this paper. He also acknowledges the hospitality of the Simons Center for the Theory of Computation and of Perimeter Institute, where part of this work was done.  Research at Perimeter Institute for Theoretical Physics is supported in part by the Government of Canada through NSERC and by the Province of Ontario through MRI.

\appendix

\section{Equivalence between causality and no-signalling}\label{app:causnosig}

\begin{proof}  The implication $1\Longrightarrow 2$ is an immediate consequence  of the uniqueness of the deterministic effect. To prove the implication $2 \Longrightarrow 1$, let us assume that $u_0$ and $u_1$ are two deterministic effects for some system $A$. Consider a the following two-party scenario:  \begin{itemize}
 \item Party 1 holds system $A$ and party $2$ holds system $B$. Systems $A$ and $B$ are prepared  in some joint deterministic state $\sigma  \in \St_1  (A\otimes B)  $
 \item  Party 1 applies either the effect $u_0$ or the effect $u_1$ depending on the value of her input $x_1  \in  \{0,1\}$.  Since both measurements have a single outcome, in both cases the output $y_1$ can take a single value, say $y_1  =  0$
 \item  Party 2 has a single measurement setting, say $x_2  = 0$, which consists in   performing a measurement $\{b_{y_2}\}_{y_2\in \set Y}$ on her system, getting the outcome  $y_2$.
 \end{itemize}
Defining the probability distributions  $$ p( y_1 , y_2  | x_1, x_2 )  :  =  (u_{x_1} \otimes b_{y_2}| \sigma )  \, ,$$
we have that the no-signalling condition becomes
$$    (u_{0} \otimes b_{y_2}| \sigma )   =  (u_{1} \otimes b_{y_2}| \sigma ) \qquad \forall y_2\in\set Y \, ,$$
or, equivalently
$$  (u_{0}| \rho_{y_2} )   =  (u_{1}| \rho_{y_2}  ) \qquad  \forall    y_2\in\set Y  \, ,   $$
where $\rho_{y_2}$ is the state   $\rho_{y_2}  :  =   (  \map I_A  \otimes  b_{y_2})  \sigma$.
Now, Assumption \ref{ass:display} guarantees that every state of system $A$ is of the form $\rho_{y_2}    =   (  \map I_A  \otimes  b_{y_2})  \sigma$ for some suitable state $\sigma$ and some suitable measurement $\{b_{y_2}\}$. Hence, $u_0$ and $u_1$ give the same probability on every input state.  By the very definition of effect (cf. paragraph II.F of Ref. \cite{puri}), this means $u_0  =  u_1$.   \end{proof}

\medskip

\section{Proof of proposition \ref{prop:orthotwo}}\label{app:orthotwo}  

\begin{proof}
Let $m_0$ and $m_1$ be two effects such that $(m_i|  \rho_j)=  \delta_{i,j}$ for arbitrary $i,j,\in\{0,1\}$.    Since every effect belongs to a measurement, there must exist a measurement $\{ \widetilde m_y\}_{y\in\set Y}$ such that $m_0  \equiv  \widetilde m_{y_0}$ for some outcome $y_0\in\set Y$.  By coarse-graining, the measurement  $\{ \widetilde m_y\}_{y\in\set Y}$ can be turned into a binary measurement    $\{ m_0, m_{\lnot 0}\}$, defined by  
\[ m_{\lnot 0}  :=  \sum_{y\in\set Y, \, y\not  =  y_0}    \widetilde m_y  \, .\]
By construction one has $  (m_0|\rho_0)  =  (m_{\lnot 0}|\rho_1)  =  1$ and $(m_0|\rho_1)  =  (m_{\lnot 0}|\rho_0)  = 0$.   In other words, the states $\rho_1$ and $\rho_0$ can be perfectly distinguished using the measurement     $\{ m_0, m_{\lnot 0} \}$.  
 \end{proof}
\medskip

\section{Proof of lemma \ref{piccolo}}\label{app:mpuri}

\begin{proof}
For  every  setting $x$, use the measurement purification axiom to  represent the measurement $\st m^{x}$  as  $   (m^{x}_{y}  |       =    \left (  M^{x}_{y} \right|     [   \map I_{S}  \otimes |\sigma_{x})  ] $ for some  orthogonal measurement $  \st  M^{x}  $ on $S \otimes E_{x}$ and for some state $\sigma_{x}\in \St (E_x)$, where $E_{x}$ is a suitable environment.
Since there is a finite number of settings,  one can always define $E : =   \bigotimes_{x\in\set  X } E_{x}$, $\sigma  : = \bigotimes_{x\in\set X}  \sigma_{x}$ and replace the measurement $\st M^{x}$   with a new spiky  measurement $\st N^{x}$ given by
\begin{align*}
\left( N^{x}_{y} \right| :   =  \left( M^{x }_{y}\right |   \otimes \left [ \bigotimes_{x'\in\set X, x'  \not = x  }  (  u_{x}|   \right] \, ,
\end{align*}
where $u_x$ denotes a unit effect on system $E_x$  (note that, since Causality is not assumed in the hypothesis, the unit effect may not be unique).
In this way, the probability distribution can be expressed as
\begin{align*}
p(y|x)  &  =  \left(  m^{x}_{y}  |  \rho\right)  \\
  &  =  \left(  M^{x}_{y}  |  \rho\otimes \sigma_x\right)  \\
  & =  \left(  N^{x}_{y}  |  \rho\otimes \sigma \right).
\end{align*}
\end{proof}

\section{Violation of SO for polygons with odd number of vertices}\label{app:trigo}
 To prove that the effects $\{  a_y,  a_{y\oplus \frac{n+1} 2}\}$ cannot coexist in a measurement, we show that the sum of their probabilities on the state $|\varphi_{y  \ominus 1})$ exceed one.  Indeed,  define
  \begin{align*}
s &:= (a_y|\varphi_{y\ominus 1}) + (a_{y\oplus \frac{ n+1} 2}|\varphi_{y\ominus 1})  \, .
\end{align*}
Then, Eq. (\ref{forodd}) yields
\begin{align*}
s= \frac{1}{r_n^2+1}\left[r_n^2\cos\left(\frac{2\pi}{n}\right)-r_n^2\cos\left(\frac{3\pi}{n}\right)+2\right] \, .
 \end{align*}
Now, the condition $s  > 1$ is equivalent to
  \begin{align*}
r_n^2\cos\left(\frac{2\pi}{n}\right)-r_n^2\cos\left(\frac{3\pi}{n}\right)- r_n^2+1 >0 \, .
 \end{align*}
Inserting the definition  $r_n:=\sqrt{{1}/{\cos(\pi/n)}}$ into this inequality, one obtains
  \begin{align*}
\cos\left(\frac{2\pi}{n}\right) +\cos\left(\frac{\pi}{n}\right)  >\cos\left(\frac{3\pi}{n}\right) + 1,
 \end{align*}
 which is equivalent to
 \begin{align}\label{ultima}
2\cos\left(\frac{3\pi}{2n}\right)\cos\left(\frac{\pi}{2n}\right)>2\left[\cos\left(\frac{3\pi}{2n}\right)\right]^2 \, ,
 \end{align}
 having used the relation
 $\cos \alpha  +  \cos \beta  =  2  \cos  \frac{ \alpha+  \beta}2  \cos  \frac{ \alpha-  \beta}2 $.   Clearly, the inequality (\ref{ultima}) is satisfied for every $n  \ge 5$.  Hence, all the  polygons  with $n>3$ odd vertices violate SO.
\medskip

\section{Proof of proposition \ref{prop:spikyCMEE}}\label{app:spikyCMEE}
\begin{proof}
 Let  $\set H  =  (\set Y,\set X)$ be the hypergraph associated to a given  contextual game and let $\hat w:  \set Y  \to \Eff (S)$ be the effect-valued weight describing the player's strategy in a  physical implementation of the game. 
Clearly, for every exclusive set of vertices  $\set E\subset \set Y$  (in the sense of definition \ref{def:exset}),  the effects $\{ \hat w (y)\}_{y\in\set E}$ are mutually exclusive (in the sense of  definition \ref{def:muex}). It is also clear that, if the mutually exclusive effects $\{ \hat w (y)\}_{y\in\set E}$ coexist in a single measurement, then one has  the inequality 
\begin{align}
\sum_{y\in\set E}  (\hat w(y)|\rho)    \le   1   \, ,
\end{align}
for every state $\rho \in \St_1 (\rho)$. 
This means that  every probability weight  $w$ generated by the effect-valued weight $\hat w$ satisfies CE. 
\end{proof}

\section{Sharp measurements and coherent L\"uders rules}\label{app:luders}  

In the following we provide a more detailed discussion of the difference between sharp measurements and CLRs. 

\subsection{Framework} 
The framework  of Ref. \cite{kleinmann2014sequences} differs from the OPT framework in a number of significant ways.    Ref. \cite{kleinmann2014sequences} associates 
  a physical system  $S$ with an order unit vector space (OUVS), which we denote  by $V_S$.  The unit in $V_S$ corresponds to the deterministic effect  $u_S$, and every positive element $m\in V_S$  satisfying the condition $ m  \le u_S$ is assumed to be an effect, physically realizable in some test.   In the language of OPTs, this amounts to the assumption 
     \begin{assumption}\label{kleineffect1}
  $\Eff (S)  =   \{   m  \in  V_S  ~|~   0\le m  \le u_S  \}$.
  \end{assumption} 
Such a condition means that 
\begin{enumerate}
\item system $S$ has a unique deterministic effect
\item the set of effects  $\Eff (S)$ is convex
\item effects can be ``scaled up": for every effect $m\in\Eff (S)$   and for every scaling coefficient $\lambda \ge 1$ satisfying $  \lambda \,  m  \le u_S$, one has  that $\lambda\,   m$ belongs to $\Eff  (S)$. 
\end{enumerate} 
 For theories that are not deterministic, the first two conditions can be operationally motivated as  part of the ``Causality package"---in particular, see  Corollary 5 of Ref. \cite{puri} for the convexity of $\Eff(S)$.    Although Causality is a very natural  assumption (one that we also wanted to make in this paper), it is  worth noting that the operational definition of sharp measurement (definition \ref{def:sharp}) in terms of repeatability and minimal disturbance can be applied even in exotic non-causal scenarios, like those arising in Refs.\cite{hardy2005probability,hardy2007towards,puri,chiribella2013quantum,oreshkov2012quantum,chiribella2011perfect}. 

The condition that effects can be scaled up is more specific.   It would follow if one assumed the No Restriction Hypothesis    \cite{puri,Janotta13}, which guarantees both the validity of Assumption (\ref{kleineffect}) and the fact that $\Eff (S)$ is the full dual cone associated to the set of states  $\St  (S)$.  One way to motivate the No-Restriction Hypothesis on operational grounds is provided by Barnum, M\"uller, and Ududec  \cite{barnum2014higher}, who showed that the No-Restriction Hypothesis holds if  \emph{i)} all states of system $S$ can be decomposed into convex combinations of perfectly distinguishable pure states and \emph{ii)}   every two sets of of perfectly distinguishable pure states can be connected by a reversible transformation.  Interesting, one way to bypass the No-Restriction hypothesis is to assume the Purification axiom \cite{puri}, which also guaranteed that effects can be scaled up (for the proof, see Corollary 36 of Ref. \cite{puri}).

\subsection{Positive maps vs physical transformations} A direct comparison between Refs. \cite{Chiribella14} and \cite{kleinmann2014sequences} can be made only for the restricted set of OPTs  that satisfy the conditions 1-3. From now on  we restrict our attention to such theories.

  Given an effect $m\in V_S$, Kleinmann defines a \emph{coherent L\"uders rule  (CLR)}  for $m$  as   a positive linear map $\phi: V_S\to V_S$ satisfying the conditions  
\begin{align}
\label{compatible1}
\phi  (u_S) &  =    m    \qquad        &  {\rm (} m{\rm-compatibility)}\\
 \label{coherent1} 
\phi  ( n)   &=   n  \qquad \forall  n  \in  V_S  \, :    \quad   0\le n\le m   \qquad  & {\rm (coherence)} \, .
\end{align} 
The map $\phi$ is interpreted as a \emph{potential candidate} for a physical transformation. 
  Whether or not a given map $\phi$ really represents a physical transformation, however, is another issue: the definition of CLR only refers to  positive maps satisfying conditions (\ref{compatible1}) and (\ref{coherent1}).    

In order to compare Refs. \cite{Chiribella14} and \cite{kleinmann2014sequences} we need to restrict our attention to those effects $m$ that admit a CLR with the extra property that  the map $\phi$ represents a physical transformation.  
While Ref. \cite{kleinmann2014sequences} does not specify how  physical transformations are defined,  for the sake of comparison we now assume that \emph{some} choice has been made.  In our opinion, the most sensible way to make such a choice is to start from a full OPT, where the composition of transformations in parallel and sequence is built in the operational structure (thanks to the adoption of the categorical framework by Abramsky and Coecke   \cite{abramsky2004,abramsky2008,coecke2010universe}).  The advantage of using the categorical approach with respect to the single-system approach is that in this way one can  bypass problems like 
\begin{enumerate}
\item the difference between positivity and complete positivity (maps that send effects into effect for  individual systems may not do so when applied locally on composite systems), and
\item  the fact that the action of a transformation on a composite system may not be uniquely determined by the linear map $\phi$ (this is often the case when the axiom of Local Tomography is not satisfied \cite{puri,chiribella14dilation}).
\end{enumerate}  

 Let us denote by $V_{S\to S}^{\rm phys}$ the set of positive maps induced by  physical transformations, in the following sense: 
\begin{df}[cf. Eq. (22) of \cite{puri}]
The map $\phi:  V_S \to V_S$ is induced by the physical  transformation $\map T\in\Transf  (S\to S)$ iff for every effect $m  \in  V_S$ one has   $\phi(m)  =  m'$ where $m'$ is the effect defined by $(m'|    :=  ( m|   \map T$.   
\end{df}
 A  \emph{physical} CLR is one where the map $\phi$ belongs to $V_{S\to S}^{\rm phys}$.  
 
 \subsection{Sharpness vs coherence} Equipped with the  definition of physical CLR,  we can now compare the coherence condition  of Eq. (\ref{coherent1})  with the sharpness condition of Eq. (\ref{sharp}).   In general, these two condition express different operational requirements:   Kleinmann's coherence condition requires  that  $\phi$ do not disturb al the  effects $n$ that are ``less likely to be triggered"  than $m$, in the following sense  
 \begin{df}
The effect $n$ is \emph{less likely to be triggered than $m$} iff  $n\le m$.
\end{df} 
 Our sharpness condition requires fact that $\phi$ do not disturb all the effects   that are ``compatible with"   $m$.  This means that the three effects 
 \begin{align*}
 m_1   =   n  \, ,   \qquad  m_2   =    m-n   \, ,  \qquad m_3   =   u_S  -  m 
 \end{align*}
 coexist in a measurement. 
Such a condition is stronger than just $n\le m$.  Due to this fact, an effect $m$ may satisfy the sharpness condition   (\ref{sharp}), and still fail to satisfy the coherence condition  (\ref{coherent}).  The two conditions become equivalent under the following
\begin{assumption}\label{chomp}  
Every two effects $m,n \in  V_S$ satisfying $  n\le m$ are compatible. 
\end{assumption}
 At present, we (the authors)  only know that the assumption holds for theories satisfying for theories satisfying the Purification axiom (for the latter, see again Corollary 36 of Ref. \cite{puri}) and for theories with a Jordan-algebraic structure \cite{wilce2012conjugates,barnum2014higher}.

\subsection{Effects vs measurements.}  While we have so far we discussed about individual effects  $m$, it is eventually  interesting to bring the comparison to the level of  measurements.      In this there is some ambiguity, since Ref. \cite{kleinmann2014sequences} does not give an explicit definition of  \emph{CLR measurement}.  One might be tempted to define it  as a measurement  $\st m  =  \{  m_x\}_{x\in\set X}$ for which each  every effect $m_x$ admits a physical CLR  $\phi_x$.  This definition, however, does not have a clear operational meaning, because it is not a priori clear if the collection of maps $\{\phi_x\}_{x\in\set X}$ is a measurement allowed by the theory.    
As per our present knowledge,  such a condition is met by theories satisfying Local Tomography and  Purification,  and possibly for some theories with Jordan algebraic structure. In general, the most reasonable approach is to define a CLR measurement   $\st m= \{ m_x\}_{x\in\set X}$ as a measurement that can be implemented by test $\boldsymbol{\map T}  =  \{\map T_x  \}_{x\in\set X}$ for which each transformation induces  a CLR. 
  When this definition is adopted, Proposition \ref{prop:CLR} follows from the discussion presented in the previous points.

\section{Proof of Proposition \ref{pureequivalence}}\label{app:a}

\begin{proof}  The implication 1 is an immediate  consequence of  the definitions: by definition,  a maximal measurement is orthogonal. Hence, a pure maximal measurement is a pure orthogonal measurement.  By definition, pure orthogonal measurement are a special case of spiky measurements (with ``only one spike per outcome").    The equivalence between spiky and sharp measurements  at point 2 is proven as follows:  Since the refinements of a pure measurement  are trivial, Proposition \ref{prop:sharp} implies that  a pure measurement $\st m  =  \{m_y\}_{y\in\set Y}$ is sharp iff there exists a test $\{\map M_y\}_{y\in\set Y}$ such that
\begin{align}\label{appp}
(m_y|  \map M_y  =   (m_y|  \qquad \forall y\in\set Y \, . 
\end{align}
In other words, a pure measurement is sharp iff it is repeatable.

We first  prove  the implication ``$\st m$ is spiky"  $\Longrightarrow$ ``$\st m$ is sharp".     If $\st m$ is spiky,      then there exists a set of states $\{\rho_y\}$ such that $(m_y|\rho_{y'}) =  \delta_{y,y'}$.  Since the theory satisfies Assumption \ref{ass:display2}, we can define the measure-and-prepare test $\{\map M_y\}$ with $  \map M_y  :  =   |\rho_y) (m_y|$, which, by construction satisfies the condition (\ref{appp}).    Hence, we proved that $\st m$ is sharp. 
Let us prove the converse implication  ``$\st m$ is sharp"  $\Longrightarrow$ ``$\st m$ is spiky". If $\st m$ is sharp, one can take a state $\rho$ such that $(m_y|\rho)  >  0$ for every $y$.   In a convect theory, one can always find one such state by mixing sufficiently many states in the state space of the system  \footnote{In the whole proof, this is the only point invoking convexity.}.  Note that the possibility of mixing states is guaranteed by Assumption \ref{ass:display2}.
Since $\st m$ is a sharp measurement, there exists a test $\boldsymbol{\map M}  = \{\map M_y\}_{y\in\set Y}$ such that $  (m_y|    =   (u|  \map M_y$ for every outcome $y$.   Thus, one can define the state
$$   \rho_y  :  =  \frac{\map M_y  |\rho)}{(m_y|  \rho)}  \, .$$
By construction, one has $(m_y|  \rho_y)  = 1$ for every $y$, which, by the normalization of probabilities implies the orthogonality condition $(m_y|\rho_{y'})  =  \delta_{y,y'}$ for every $y,y'$.  This proves that $\st m$ is a (pure) orthogonal measurement.  Hence, $\st m$ is be spiky.  Finally, we prove the implication at point 3:  ``$\st m$ is spiky" $\Longrightarrow$  ``$\st m$ is extremal".    For a given outcome $y\in\set Y$, suppose that 
\begin{align}\label{bbbb}m_y  =   p  \, n_y   +   (1-p) \, o_y \, ,
\end{align} 
where  $p\in(0,1)$ is a probability and  $n_y$ and $o_y$ are two effects.  Since $\st m$ is orthogonal, there exists a state $\rho_y$ such that $(m_y|\rho_y)  = 1$.  
Hence, Eq. (\ref{bbbb}) implies the relation  $1  = p \, (n_y| \rho_y) + (1-p)  \,  (o_y|\rho_y)$ and, therefore \begin{align}\label{uno}  
(n_y|\rho_y)  =  (o_y|\rho_y)  =  1  \, .  
\end{align}
 Since $m_y$ is pure, Eq. (\ref{bbbb}) implies  $n_y=  \alpha_y   \,  m_y $ and $o_y  =  \beta_y\,  m_y$ for two suitable constants $\alpha_y\ge 0$ and $\beta_y \ge 0$.   Using Eq. (\ref{uno}) one finally obtains  
 \[  1   =   (n_y| \rho_y)   =   \alpha_y   \,  (m_y|\rho_y)   =  \alpha_y  \] 
 and 
 \[  1   =   (o_y| \rho_y)   =   \beta_y   \,  (m_y|\rho_x)   =  \beta_y  \, . \] 
 In conclusion, one has  $n_y  \equiv   o_y  \equiv  m_y$.  This means that the effect $m_y$ is extremal. Since the outcome $y$ is generic, we obtained that the whole measurement $\st m$ is extremal.      
  \end{proof}

\section{Proof of Proposition \ref{spikymax}}\label{app:b}

\begin{proof}
We have to prove that a pure spiky measurement is maximal.  To this purpose, observe  that for pure measurements ``spiky" is synonymous of ``orthogonal".   Now, suppose that $\st a  = \{a_y\}_{y\in\set Y}$ is a pure orthogonal measurement and let $\{\rho_y\}_{y\in\set Y}$ be the set of states such that $(a_y|  \rho_y')  =  \delta_{y,y'}$.   By the Pure State Identification Property, each state $\rho_y$ must be pure---let us denote it as  $\rho_y =:  \varphi_y$.  Now, we prove that the set $\{\varphi_y\}$ is maximal.  Indeed, suppose by absurd  that the states  $\{\varphi_y\}  \cup  \rho$ are perfectly distinguishable for some $\rho$, and let $ \{  m_y\}  \cup  m_\rho $ the measurement that distinguishes among them.
 Since $(m_y|\varphi_y)  =1$, we must have   $m_y  \ge   l_y$, where $l_y  \not =  0$ is the lower bound to the set of effects that have probability 1 on $\varphi_y$ (the existence of the lower bound is guaranteed by Condition 2).
But since   $(a_y|\varphi_y)  =1$, we must also have $a_y  \ge l_y$.  By the purity of $a_y$, this implies $l_y    =     p_y   a_y$, for some probability $p_y> 0$.
This condition  implies the relation
\begin{align*}    (a_y|  \rho)    &=  p_y^{-1}   (l_y|\rho)  \\
&  \le  p_y^{-1}       (m_y |\rho)   \\
   &  =  0   \qquad \forall y\in\set Y \,.
   \end{align*}
 Moreover, since the effects  $\{a_y\}_{y\in\set Y}$ form a measurement, we have $(u|\rho)  =  \sum_y  (a_y|\rho)  =  0$, which implies that $\rho$ is the zero state, $\rho=0$. Hence, the set of pure states $\{\varphi_y\}$ is maximal.
\end{proof}

\section*{References}

\bibliography{apssamp}

\end{document}